\theoremstyle{plain}
\newtheorem{theorem}{Theorem}[section]
\newtheorem{proposition}[theorem]{Proposition}
\newtheorem{lemma}[theorem]{Lemma}
\theoremstyle{definition}
\newtheorem{assumption}[theorem]{Assumption}
\newtheorem{remark}[theorem]{Remark}
\numberwithin{equation}{section}
\newcommand{\abs}[1]{\lvert{#1}\rvert}
\newcommand{\norm}[1]{\lVert{#1}\rVert}
\newcommand{\ip}[2]{\langle{#1},{#2}\rangle}
\DeclareMathOperator*{\slim}{s-lim}
\DeclareMathOperator{\supp}{supp}
\newcommand{\vp}{\varphi}
\newcommand{\bR}{\mathbf{R}}
\newcommand{\bC}{\mathbf{C}}
\newcommand{\bone}{\mathbf{1}}
\newcommand{\bzero}{\mathbf{0}}
\newcommand{\sfone}{\mathsf{1}}
\newcommand{\sfzero}{\mathsf{0}}
\newcommand{\cF}{\mathcal{F}}
\newcommand{\cH}{\mathcal{H}}
\newcommand{\cB}{\mathcal{B}}
\newcommand{\cK}{\mathcal{K}}
\newcommand{\cS}{\mathcal{S}}
\newcommand{\bZ}{\mathbf{Z}}
\newcommand{\bT}{\mathbf{T}}
\newcommand{\sfD}{\mathsf{D}}
\newcommand{\sfS}{\mathsf{S}}
\newcommand{\sfI}{\mathsf{I}}
\newcommand{\Tdh}{\bT^d_h}
\newcommand{\sff}{\mathsf{f}}
\newcommand{\sfF}{\mathsf{F}}
\newcommand{\e}{{\rm e}}
\newcommand{\I}{{\rm i}}
\newcommand{\di}{\,{\rm d}}
\newcommand{\ident}{\ensuremath{I}}
\newcommand{\identh}{\ensuremath{I_h}}
\newcommand{\identhsf}{\mathsf{I}_h}
\newcommand{\mH}{\widetilde{H}_{0,h}}
\newcommand{\nbc}[1]{\lVert{#1}\rVert_{\cB(\bC^2)}}
\newcommand{\nbcc}[1]{\lVert{#1}\rVert_{\cB(\bC^4)}}
\newcommand{\Gfb}{G_{0,h}^{\rm fb}}
\newcommand{\Hfb}{H_{0,h}^{\rm fb}}
\newcommand{\tGfb}{\widetilde{G}_{0,h}^{\rm fb}}
\newcommand{\tHfb}{\widetilde{H}_{0,h}^{\rm fb}}
\newcommand{\gfb}{g_{0,h}^{\rm fb}}
\newcommand{\tgfb}{\tilde{g}_{0,h}^{\rm fb}}
\newcommand{\Hhs}{H^{\rm s}_{0,h}}
\newcommand{\Hhfb}{H^{\rm fb}_{0,h}}
\newcommand{\Ghs}{G^{\rm s}_{0,h}}
\newcommand{\Ghfb}{G^{\rm fb}_{0,h}}
\newcommand{\ghs}{g^{\rm s}_{0,h}}
\newcommand{\tHhs}{\widetilde{H}^{\rm s}_{0,h}}
\newcommand{\tGhs}{\widetilde{G}^{\rm s}_{0,h}}
\newcommand{\tghs}{\tilde{g}^{\rm s}_{0,h}}
\newcommand{\Sobone}{H^1(\bR)\otimes\bC^2}
\newcommand{\Sobtwo}{H^1(\bR^2)\otimes\bC^2}
\newcommand{\Sobthree}{H^1(\bR^3)\otimes\bC^4}
\begin{document}
\title{Discrete approximations to Dirac operators and norm resolvent convergence}
\author{
Horia Cornean\footnote{Department of Mathematical Sciences, Aalborg University, Skjernvej 4A, 
DK-9220 Aalborg \O{}, Denmark. Email: cornean@math.aau.dk and matarne@math.aau.dk},\;
Henrik Garde\footnote{Department of Mathematics, Aarhus University, Ny Munkegade 118, DK-8000 Aarhus C, Denmark. Email: garde@math.au.dk},\; and
Arne Jensen\footnotemark[1]
}
\date{}
\maketitle

\begin{abstract}
	\noindent We consider continuous Dirac operators defined on $\bR^d$, $d\in\{1,2,3\}$, together with various discrete versions of them. Both forward-backward and symmetric finite differences are used as approximations to partial derivatives. We also allow a bounded, H\"older continuous, and self-adjoint matrix-valued potential, which in the discrete setting is evaluated on the mesh.
	Our main goal is to investigate whether the proposed discrete models converge in norm resolvent sense  to their continuous counterparts, as the mesh size tends to zero and up to a natural embedding of the discrete space into the continuous one. In dimension one we show that forward-backward differences lead to norm resolvent convergence, while in dimension two and three they do {\it not}. The same negative result holds in all dimensions when symmetric differences are used. On the other hand, strong resolvent convergence holds in all these cases. Nevertheless, and quite remarkably, a rather simple but non-standard modification to the discrete models, involving the mass term, ensures norm resolvent convergence in general.
\end{abstract}

\section{Introduction}

We study in detail in what sense continuous Dirac operators can be approximated by a family of discrete operators indexed by the mesh size. To investigate spectral properties based on the discrete models, it is essential to know whether we can obtain norm resolvent convergence or only strong resolvent convergence of the discrete models (suitably embedded into the continuum) to the continuous Dirac operators. 

In this paper we present a remarkable new phenomenon. In dimensions two and three we cannot obtain norm resolvent convergence of the discrete operators (embedded into the continuum) as the mesh size tends to zero, if we use the natural discretizations based on either symmetric first order differences or a pair of forward-backward first order differences.
The models require a simple modification to obtain norm resolvent convergence. In dimension one the discretization using a pair of forward-backward first order differences does lead to norm resolvent convergence, whereas the model based on symmetric first order differences does not.

These results are now described in some detail.
To unify the notation we define $\nu(1)=\nu(2)=2$ and $\nu(3)=4$.
The Hilbert spaces used for the continuous Dirac operators are
\begin{equation*}
\cH^d=L^2(\bR^d)\otimes \bC^{\nu(d)},
\quad d=1,2,3.
\end{equation*}
For mesh size $h>0$ the corresponding discrete spaces are denoted by
\begin{equation*}
\cH^d_h=\ell^2(h\bZ^d)\otimes \bC^{\nu(d)},
\quad d=1,2,3.
\end{equation*}
The norm on $\cH_h^d$ is given by
\begin{equation*}
\norm{u_h}_{\cH_h^d}^2=h^d\sum_{k\in\bZ^d}
\abs{u_h(k)}^2,\quad u_h\in\cH_h^d.
\end{equation*}
Here $\abs{\,\cdot\,}$ denotes the Euclidean  norm on $\bC^{\nu(d)}$. We index $u_h$ by $k\in\bZ^d$; the $h$ dependence is in the subscript of $u_h$. 

To relate the spaces $\cH_h^d$ and $\cH^d$ we introduce embedding operators $J_h\colon\cH_h^d \to \cH^d$ and discretization operators
$K_h\colon \cH^d \to \cH_h^d$,  constructed from a pair of biorthogonal Riesz sequences, as in~\cite[section~2]{CGJ}. We describe the construction briefly, with further details and assumptions given in section~\ref{sectPrelim}. Let $\vp_0,\psi_0\in L^2(\bR^d)$ and assume that $\{\vp_0(\,\cdot\,-k)\}_{k\in\bZ^d}$ and  $\{\psi_0(\,\cdot\,-k)\}_{k\in\bZ^d}$ are a pair of biorthogonal Riesz sequences in $L^2(\bR^d)$. Define $\vp_{h,k}(x)=\vp_0((x-hk)/h)$, 
and $\psi_{h,k}(x)=\psi_0((x-hk)/h)$, 
$x\in\bR^d$, $k\in\bZ^d$, $h>0$. The embedding operator $J_h$ is then defined as
\begin{equation*}
(J_hu_h)(x)=\sum_{k\in\bZ^d}\vp_{h,k}(x)u_h(k).
\end{equation*}
Note that here $\vp_{h,k}(x)$ is a scalar multiplying a vector $u_h(k)\in\bC^{\nu(d)}$. To construct the discretization operator, let $\widetilde{J}_h$ be defined as $J_h$ with $\vp_0$ replaced by $\psi_0$.
The discretization operator is then defined as $K_h=(\widetilde{J}_h)^{\ast}$. For $d=1,2$, it can be written explicitly as
\begin{equation*}
(K_hf)(k)=\frac{1}{h^d}\begin{bmatrix}
\ip{\psi_{h,k}}{f^1}\\ \ip{\psi_{h,k}}{f^2}
\end{bmatrix},\quad
f=\begin{bmatrix}f^1\\ f^2\end{bmatrix}\in\cH^d.
\end{equation*}
A similar formula holds for $d=3$. We have $K_hJ_h=\identh$, where $\identh$ is the identity in $\cH_h^d$, and $J_hK_h$ is a projection in $\cH^d$ onto $J_h\cH_h^d$.

Let $H_0$ be the free Dirac operator in $\cH^d$, $d=1,2,3$, and let $H_{0,h}$ be an approximation defined on $\cH^d_h$. We compare the operators
\begin{equation*}
J_h(H_{0,h}-z\identh)^{-1}K_h\quad \text{and} \quad (H_0-z\ident)^{-1}
\end{equation*}
acting on $\cH^d$. The question of interest is in what sense will 
$J_h(H_{0,h}-z\identh)^{-1}K_h$ converge to $(H_0-z\ident)^{-1}$ as $h\to0$. We now summarize the results obtained. First we briefly define the operators considered.

Let $\sigma_j$, $j=1,2,3$, denote the Pauli matrices 
\begin{equation}\label{Pauli}
	\sigma_1=\begin{bmatrix} 0 & 1 \\ 1 & 0\end{bmatrix},\quad
	\sigma_2=\begin{bmatrix} 0 & -\I \\ \I & 0\end{bmatrix},\quad
	\sigma_3=\begin{bmatrix} 1 & 0 \\ 0 & -1\end{bmatrix}.
\end{equation}
Let $m\geq0$ denote the mass. To simplify we do not indicate dependence on the mass in the notation for operators.
In dimension $d=1$ the free Dirac operator is given by the operator matrix
\begin{equation*}
H_0=-\I \dfrac{\di}{\di x}\sigma_1+m\sigma_3
\end{equation*}
on $\cH^1$. We consider two discrete approximations based on replacing $-\I \frac{\di}{\di x}$ by finite difference operators. Let $\identhsf$ denote the identity operator on $\ell^2(h\bZ)$. We define
\begin{equation*}
H_{0,h}^{\rm fb}=\begin{bmatrix}
m\identhsf & D^-_h\\ D^+_h & -m\identhsf
\end{bmatrix}
\quad \text{and} \quad
H_{0,h}^{\rm s}=\begin{bmatrix}
m\identhsf & D_h^{\rm s}\\
D_h^{\rm s} & -m\identhsf
\end{bmatrix}.
\end{equation*}
Here the 
forward and backward finite difference operators are defined as
\begin{equation}\label{1Dfb}
(D^+_hu_h)(k)=\frac{1}{\I h}(u_h(k+1)-u_h(k)),\quad
(D^-_hu_h)(k)=\frac{1}{\I h}(u_h(k)-u_h(k-1)),
\end{equation}
and satisfies $(D_h^+)^{\ast}=D_h^-$. The symmetric difference operator is the self-adjoint operator $D_h^{\rm s}=\frac12(D^+_h+D^-_h)$, i.e.
\begin{equation}\label{1Ds}
(D_h^{\rm s}u_h)(k)=\frac{1}{2\I h}(u_h(k+1)-u_h(k-1)).
\end{equation}

In dimension $d=2$ the free Dirac operator is defined as
\begin{equation*}
H_0=-\I\frac{\partial}{\partial x_1}\sigma_1
-\I\frac{\partial}{\partial x_2}\sigma_2 +m\sigma_3
\end{equation*}
on $\cH^2$. As in the $d=1$ case, there are two natural discrete models given by
\begin{equation*}
H^{\rm fb}_{0,h}=
\begin{bmatrix}
m\identhsf & D^-_{h;1}-\I D^-_{h;2}\\
D^+_{h;1}+\I D^+_{h;2} &-m\identhsf
\end{bmatrix}
\end{equation*}
and
\begin{equation*}
H_{0,h}^{\rm s}=\begin{bmatrix}
m\identhsf & D^{\rm s}_{h;1}-\I D^{\rm s}_{h;2}\\
D^{\rm s}_{h;1}+\I D^{\rm s}_{h;2} &-m\identhsf
\end{bmatrix}.
\end{equation*}
Here $D^{\pm}_{h;j}$ and $D^{\rm s}_{h;j}$ are the corresponding finite differences in the $j$'th coordinate.
It turns out that these two discrete models \emph{do not} lead to norm resolvent convergence, so we also define two modified versions. Let $-\Delta_h$ denote the discrete Laplacian; see~\eqref{Lap}. Then the modified operators are given by
\begin{equation*}
\widetilde{H}_{0,h}^{\rm fb}=H^{\rm fb}_{0,h}-h\Delta_h\sigma_3
\quad \text{and} \quad
\widetilde{H}_{0,h}^{\rm s}=H^{\rm s}_{0,h}-h\Delta_h\sigma_3.
\end{equation*}
Here $-h\Delta_h\sigma_3$ is understood to be the operator matrix
\begin{equation*}
	\begin{bmatrix} -h\Delta_h & 0 \\ 0 & h\Delta_h\end{bmatrix}.
\end{equation*}
The details on the discretizations in dimension $d=3$ can be found in section~\ref{sect3D}.
 
Let $\cK_1$ and $\cK_2$ be two Hilbert spaces. The space of bounded operators from $\cK_1$ to $\cK_2$ is denoted by $\cB(\cK_1,\cK_2)$. If $\cK_1=\cK_2=\cK$ we write $\cB(\cK)=\cB(\cK,\cK)$. In the following theorem we collect the positive results obtained on norm resolvent convergence in $\cB(\cH^d)$. We use the convention  $(-0,0)=\emptyset$ in the statements of results.

\begin{theorem}\label{thm11}
Let $H_{0,h}$ be equal to $H_{0,h}^{\rm fb}$, $d=1$, or equal to 
 $\widetilde{H}_{0,h}^{\rm fb}$, $d=2,3$, or equal to
$\widetilde{H}_{0,h}^{\rm s}$,
 $d=1,2,3$. Let $H_0$ denote the free Dirac operator in the corresponding dimension. Then the following result holds.

Let $K\subset (\bC\setminus \bR)\cup (-m,m)$ be compact. Then there exists $C>0$ such that
\begin{equation}\label{general-est}
\norm{
J_h(H_{0,h}-z\identh)^{-1}K_h-(H_0-z\ident)^{-1}
}_{\cB(\cH^d)} \leq C h
\end{equation}
for all $z\in K$ and $h\in(0,1]$.
\end{theorem}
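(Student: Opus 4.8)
The plan is to conjugate every operator in sight by a Fourier transform, so that all of them become multiplication by matrix-valued symbols, and then to reduce \eqref{general-est} to two uniform pointwise estimates on those symbols.

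First I would pass to the Fourier picture. Let $\cF$ be the Fourier transform on $L^2(\bR^d)$ and let $B_h=(-\pi/h,\pi/h)^d$ be the Brillouin zone. Conjugation by $\cF$ turns $H_0$ into multiplication by a matrix symbol $h_0(\xi)$ (for $d=1$, $h_0(\xi)=\xi\sigma_1+m\sigma_3$; similarly for $d=2,3$), and the discrete Fourier transform turns $H_{0,h}$ into multiplication by a matrix symbol $h_{0,h}(\xi)$, $\xi\in B_h$, in which each $D_{h;j}^{\pm}$ is replaced by the scalar $(e^{\pm\I h\xi_j}-1)/(\pm\I h)$, each $D_{h;j}^{\rm s}$ by $\sin(h\xi_j)/h$, and the correction $-h\Delta_h$ contributes the nonnegative scalar $g_h(\xi):=\tfrac2h\sum_j(1-\cos h\xi_j)$. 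Because the generators $\vp_0,\psi_0$ of section~\ref{sectPrelim} are band-limited, i.e.\ $\hat\vp_0,\hat\psi_0$ are supported in $[-\pi,\pi]^d$, the operators $J_h$ and $\widetilde J_h$ map into the space of functions band-limited to $B_h$; using this together with biorthogonality, which on the Fourier side amounts to $\hat\vp_0\overline{\hat\psi_0}\equiv1$ on $(-\pi,\pi)^d$, a short computation shows that $\cF J_h(H_{0,h}-z\identh)^{-1}K_h\cF^{-1}$ is exactly multiplication by $\bone_{B_h}(\xi)(h_{0,h}(\xi)-z)^{-1}$ (if the generators are not band-limited, the same estimate is obtained instead by a fibrewise analysis after a Bloch decomposition of $L^2(\bR^d)$ with respect to the lattice $\tfrac{2\pi}{h}\bZ^d$). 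Since the operator norm of a matrix-valued Fourier multiplier equals the essential supremum over $\xi$ of the operator norms of its symbol, the quantity on the left of \eqref{general-est} becomes
\begin{equation*}
\esup_{\xi\in\bR^d}\norm{\bone_{B_h}(\xi)(h_{0,h}(\xi)-z)^{-1}-(h_0(\xi)-z)^{-1}}_{\cB(\bC^{\nu(d)})}.
\end{equation*}

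Next I would split the supremum into $\xi\notin B_h$ and $\xi\in B_h$. Writing $\langle\xi\rangle=(1+\abs\xi^2)^{1/2}$: the eigenvalues of $h_0(\xi)$ are $\pm(m^2+\abs\xi^2)^{1/2}$, all of modulus $\ge m$, and $K$ has positive distance from $(-\infty,-m]\cup[m,\infty)$, which readily gives $\norm{(h_0(\xi)-z)^{-1}}\le C_K\langle\xi\rangle^{-1}$ for all $\xi\in\bR^d$ and $z\in K$. For $\xi\notin B_h$ we have $\abs\xi\ge\pi/h$, the first term above vanishes, and the second is $\le C_Kh/\pi$. For $\xi\in B_h$ the second resolvent identity gives
\begin{equation*}
(h_{0,h}(\xi)-z)^{-1}-(h_0(\xi)-z)^{-1}=(h_{0,h}(\xi)-z)^{-1}\bigl(h_0(\xi)-h_{0,h}(\xi)\bigr)(h_0(\xi)-z)^{-1},
\end{equation*}
so it is enough to establish two uniform estimates over $\xi\in B_h$. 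The first, $\norm{h_0(\xi)-h_{0,h}(\xi)}\le Ch\langle\xi\rangle^2$ with $C$ independent of $h$ and $\xi$, follows from the elementary bounds $\abs{e^{\I t}-1-\I t}\le t^2/2$, $\abs{\sin t-t}\le\abs t^3/6$ and $0\le1-\cos t\le t^2/2$ together with $\abs{h\xi_j}\le\pi$. The second is the decay estimate $\norm{(h_{0,h}(\xi)-z)^{-1}}\le C_K\langle\xi\rangle^{-1}$ for $\xi\in B_h$, $z\in K$; just as for $h_0$, this holds once one knows that the eigenvalues of $h_{0,h}(\xi)$ are $\pm\mu_h(\xi)$ with $\mu_h(\xi)\ge m$ (clear, since after the correction the diagonal entries of the symbol have modulus at least $m$) and $\mu_h(\xi)\ge c\abs\xi$ on $B_h$, with $c>0$ independent of $h$. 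Granting the two estimates, the $\xi\in B_h$ part is $\le C_K^2Ch\langle\xi\rangle^{-1}\langle\xi\rangle^2\langle\xi\rangle^{-1}=C_K^2Ch$, which with the bound for $\xi\notin B_h$ proves \eqref{general-est}.

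The main obstacle — and the place where the mass-term modification is essential — will be the lower bound $\mu_h(\xi)\ge c\abs\xi$ on all of $B_h$. For $H_{0,h}^{\rm fb}$ in dimension one the relevant difference symbol $(e^{\I h\xi}-1)/(\I h)$ vanishes on $B_h$ only at $\xi=0$, where it behaves like $\xi$, and the bound holds with no correction — this is why Theorem~\ref{thm11} lists $H_{0,h}^{\rm fb}$ unmodified in $d=1$. In dimensions two and three the off-diagonal symbol assembled from $(e^{\pm\I h\xi_j}-1)/(\pm\I h)$ picks up additional, spurious zeros inside $B_h$, and the symmetric symbol $\sin(h\xi_j)/h$ already vanishes on the boundary of $B_h$ when $d=1$; near such a point $\mu_h(\xi)=O(1)$ while $\abs\xi\sim1/h$, so the decay estimate, hence norm resolvent convergence, breaks down — this is the negative phenomenon announced in the introduction. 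Adding $-h\Delta_h\sigma_3$ puts $\pm g_h(\xi)$ on the diagonal of the symbol, and $g_h\ge0$ vanishes on $B_h$ only at $\xi=0$ (there of order $h\abs\xi^2$) while being of order $1/h$ at each spurious zero; a compactness argument over the closed cube $[-\pi,\pi]^d=\overline{\{h\xi:\xi\in B_h\}}$ then turns this into the required uniform lower bound $\mu_h(\xi)\ge c\abs\xi$. Checking this bound is the substance of the argument; the case $d=3$, treated in section~\ref{sect3D}, is structurally the same.
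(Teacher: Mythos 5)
Your overall strategy---pass to Fourier space, reduce \eqref{general-est} to pointwise symbol estimates, and combine the resolvent identity with a difference bound $O(h\langle\xi\rangle^2)$ and decay bounds $O(\langle\xi\rangle^{-1})$ for both symbols---is exactly the architecture of the paper's proof (Theorem~\ref{thm44} together with Lemmas~\ref{lemma42}, \ref{lemma43}, \ref{lemma49}, \ref{lemma51} and their analogues). But the two steps you defer are precisely where the work lies. First, the identity $\cF J_h(H_{0,h}-z\identh)^{-1}K_h\cF^{-1}=\bone_{B_h}(\,\cdot\,)(h_{0,h}(\,\cdot\,)-z)^{-1}$ holds only for the idealized band-limited generators. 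Under Assumption~\ref{assumpA2} the supports of $\widehat{\vp}_0,\widehat{\psi}_0$ may extend to $[-\tfrac{3\pi}{2},\tfrac{3\pi}{2}]^d$, so $J_h(H_{0,h}-z\identh)^{-1}K_h$ is \emph{not} a Fourier multiplier: aliasing terms indexed by $\abs{j}=1$ (the $q_{j,h}$ in the proof of Theorem~\ref{thm44}) appear, as does the error $(J_hK_h-\ident)(H_0-z\ident)^{-1}$ handled by Lemma~\ref{lemmaA3}. These are shown to be $O(h)$ using the same decay estimates, so your ``Bloch decomposition'' hedge points the right way, but the theorem is asserted for the whole class of $J_h,K_h$ satisfying the assumptions, not for one convenient instance, and this part of the computation is not carried out.

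Second, and more seriously, you explicitly leave unverified the lower bound $\mu_h(\xi)\geq c\abs{\xi}$ on $B_h$, calling it ``the substance of the argument''---which it is. Your narrative for why it should hold is also misleading in the one case where it is delicate: for $\widetilde{H}^{\rm fb}_{0,h}$ with $d=3$ the squared symbol is not scalar, and its smallest eigenvalue is $(m+\sff_h(\xi))^2+\abs{\sfS_h^-(\xi)}^2-\abs{\sfS_h^-(\xi)\times\overline{\sfS_h^-(\xi)}}$. Near $\xi=0$ the mass correction contributes only $O(h\abs{\xi}^2)$ to $\mu_h(\xi)^2$ and is therefore useless there; the required bound must come from showing that the cross-product term is $O(h\abs{\xi}^3)$, hence genuinely dominated by $\abs{\sfS_h^-(\xi)}^2\sim\abs{\xi}^2$ (this is the content of Lemma~\ref{lemma51}). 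A compactness argument over the rescaled cube $[-\pi,\pi]^d$ cannot by itself produce a bound degenerating like $\abs{\xi}^2$ at the origin without this local cancellation analysis. (Similarly, ``$\mu_h\geq m$ since the diagonal entries have modulus at least $m$'' is not a valid eigenvalue argument for a general Hermitian matrix; it is true here only because of the anticommutation structure, which makes the square block-diagonal and bounded below by $(m+\sff_h)^2$.) In short: the plan is correct and structurally identical to the paper's, but the two estimates on which it rests are asserted rather than proved, and the hardest of them is exactly the one you postpone.
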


Theorem~\ref{thm11} can be generalized to also include a potential, by following the approach in \cite{CGJ}. Let $V\colon \bR^d\to \cB(\bC^{\nu(d)})$ be bounded and Hölder continuous. Assume $V(x)$ is self-adjoint for each $x\in\bR^d$. Define the discretization as $V_h(k)=V(hk)$ for $k\in\bZ^d$. Then we can define self-adjoint operators $H=H_0+V$ on $\cH^d$ and $H_h=H_{0,h}+V_h$ on $\cH_h^d$ for all the discrete models. The results in Theorem~\ref{thm11} then generalize to $H$ and $H_h$, with an estimate $Ch^{\theta'}$, where $0<\theta'<1$ depends on the Hölder exponent for $V$; see section~\ref{sectV}.

In the next theorem we summarize some negative results with non-convergence in the $\cB(\cH^d)$-operator norm in part (i), and in part (ii) a result using the Sobolev spaces $H^1(\bR^d)\otimes\bC^{\nu(d)}$ is given. In particular, 
the estimate~\eqref{Sobolev-est} implies strong resolvent convergence in $\cB(\cH^d)$.

\begin{theorem}\label{thm12}
	Let $H_{0,h}$ be equal to $H_{0,h}^{\rm fb}$, $d=2,3$, or equal to $H_{0,h}^{\rm s}$, $d=1,2,3$. Let $H_0$ denote the free Dirac operator in the corresponding dimension. Then the following results hold.
\begin{enumerate}[\rm(i)]
\item Let $z\in (\bC\setminus \bR)\cup (-m,m)$. Then
$J_h(H_{0,h}-z\identh)^{-1}K_h$ \emph{does not converge} to $(H_0-z\ident)^{-1}$ in the operator norm on $\cB(\cH^d)$ as $h\to0$.

\item Let $K\subset (\bC\setminus \bR)\cup (-m,m)$ be compact. Then there exists $C>0$ such that
\begin{equation}\label{Sobolev-est}
\norm{
J_h(H_{0,h}-z\identh)^{-1}K_h-(H_0-z\ident)^{-1}
}_{\cB(H^1(\bR^d)\otimes\bC^{\nu(d)},\cH^d)} \leq C h
\end{equation}
for all $z\in K$ and $h\in(0,1]$.
\end{enumerate}
\end{theorem}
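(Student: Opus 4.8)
The plan is to work on the Fourier side, where both the continuous and discrete free Dirac operators become multiplication operators. Recall that under the usual Fourier transform $\cH^d \cong L^2(\bR^d)\otimes\bC^{\nu(d)}$ with $H_0$ acting as multiplication by the symbol $\sfH_0(\xi) = \sum_j \xi_j\sigma_j + m\sigma_3$ (with the obvious $4\times4$ analogue in $d=3$), so that $(H_0-z)^{-1}$ is multiplication by $(\sfH_0(\xi)-z)^{-1}$. On the discrete side, the natural discrete Fourier transform maps $\cH_h^d$ to $L^2(\bT_h^d)\otimes\bC^{\nu(d)}$, $\bT_h^d = (-\pi/h,\pi/h]^d$, and $H_{0,h}$ becomes multiplication by a symbol $\sfH_{0,h}(\xi)$: for the symmetric difference one gets $h^{-1}\sin(h\xi_j)$ in place of $\xi_j$, for forward-backward one gets $h^{-1}(e^{\I h\xi_j}-1)$ on the off-diagonal block and its conjugate on the other, and the mass modification adds the diagonal term $\pm h\sum_j \frac{2}{h^2}(1-\cos h\xi_j) = \pm \sum_j \frac{2}{h}(1-\cos h\xi_j)$ coming from $-h\Delta_h$. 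In all cases, as $h\to 0$ and for $\xi$ in a fixed compact set, $\sfH_{0,h}(\xi)\to\sfH_0(\xi)$ with error $O(h\abs{\xi}^2)$; the key algebraic fact (this is exactly the point of the mass modification) is that $\sfH_{0,h}(\xi)^*\sfH_{0,h}(\xi) = \bigl(\abs{\xi_h}^2 + (m + \rho_h(\xi))^2\bigr)I$ for a suitable nonnegative $\rho_h$ and a nonzero effective momentum vector $\xi_h$, so that $\sfH_{0,h}(\xi)-z$ is invertible for $z\in(\bC\setminus\bR)\cup(-m,m)$ with $\norm{(\sfH_{0,h}(\xi)-z)^{-1}}$ bounded uniformly in $\xi\in\bT_h^d$ and $z\in K$.

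The main estimate then splits, following the strategy of \cite{CGJ}, into three pieces once one inserts the projection $J_hK_h$. Writing $R_h = J_h(H_{0,h}-z)^{-1}K_h$ and $R = (H_0-z)^{-1}$, one uses $K_hJ_h=\identh$ to write
\begin{equation*}
R_h - R = J_h(H_{0,h}-z)^{-1}K_h - J_hK_h\,R\,J_hK_h + \bigl(J_hK_h\,R\,J_hK_h - R\bigr).
\end{equation*}
The second bracket is controlled by the approximation properties of the Riesz-sequence pair: $\norm{(I-J_hK_h)R}\to 0$, and in fact on the Sobolev space $H^1\otimes\bC^{\nu(d)}$ one gets a bound $Ch$ because $(I-J_hK_h)$ kills low frequencies to first order in $h$ and $R$ maps $\cH^d$ into $H^1\otimes\bC^{\nu(d)}$ (this is where part (ii), as opposed to convergence in $\cB(\cH^d)$, is forced: the gain of one power of $h$ requires one Sobolev derivative of regularity, which the resolvent supplies). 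For the first bracket one conjugates by the Fourier transforms and compares the two multiplication operators on the common frequency window; here one uses the identity $J_hK_h R J_h K_h = J_h (H_{0,h}-z)^{-1}(K_h J_h) (\text{symbol comparison}) K_h$ together with the resolvent identity
\begin{equation*}
(H_{0,h}-z)^{-1} - \widehat{K_h}(H_0-z)^{-1}\widehat{J_h} = (H_{0,h}-z)^{-1}\bigl(\widehat{K_h}\,\sfH_0\,\widehat{J_h} - \sfH_{0,h}\bigr)(H_{0,h}-z)^{-1} + (\text{cutoff remainder}),
\end{equation*}
and the symbol difference $\abs{\sfH_{0,h}(\xi)-\sfH_0(\xi)} \le Ch\abs{\xi}^2$ combined with the uniform resolvent bounds above yields the $Ch$ estimate on the frequency window where $\widehat{J_h},\widehat{K_h}$ are supported. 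The tail (high-frequency) contribution is handled by the uniform lower bound on $\abs{\sfH_{0,h}(\xi)}$ plus the rapid decay built into the Riesz functions $\psi_0$, exactly as in \cite{CGJ}.

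The main obstacle is the algebra of the symbol $\sfH_{0,h}$ in dimensions $2$ and $3$. In $d=1$, $\sfH_{0,h}(\xi)^*\sfH_{0,h}(\xi)$ is automatically scalar because $\sigma_1$ squares to the identity, so the unmodified operators already have a clean resolvent; but in $d\ge 2$ the off-diagonal entry of $H_{0,h}^{\rm fb}$ is $D^-_{h;1}-\I D^-_{h;2}$, whose symbol $h^{-1}(e^{-\I h\xi_1}-1) - \I h^{-1}(e^{-\I h\xi_2}-1)$ does not have modulus $\sqrt{\xi_1^2+\xi_2^2}$ up to $O(h)$ uniformly — the cross term produces an $O(1)$ (not $O(h)$) discrepancy in $\abs{\sfH_{0,h}}^2$ near the edge of $\bT_h^d$, and this is precisely what destroys norm resolvent convergence in Theorem~\ref{thm12}(i). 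Adding $-h\Delta_h\sigma_3$ repairs this: one must check that $\sfH_{0,h}^*\sfH_{0,h}$ becomes scalar again, equal to $\bigl(\text{off-diagonal modulus}\bigr)^2 + \bigl(m + h\sum_j \tfrac{2}{h^2}(1-\cos h\xi_j)\bigr)^2$ times the identity, and that the added mass term $\rho_h(\xi)\ge 0$ does not spoil the resolvent bound on $(-m,m)$ — indeed $m+\rho_h \ge m$, so the gap only widens, and for $z$ real in $(-m,m)$ one still has $\abs{\sfH_{0,h}(\xi)-z}$ bounded below. Verifying this scalar identity and the attendant uniform bounds, carefully in both $d=2$ (two Pauli matrices) and $d=3$ (the $4\times4$ Dirac matrices, where one needs the anticommutation relations), is the computational heart of the proof; everything else is a transcription of the machinery already developed in \cite{CGJ}.
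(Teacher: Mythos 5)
There is a genuine gap in part (i): you never actually prove non-convergence. You observe that the unmodified symbols have an $O(1)$ discrepancy at high quasi-momenta, but such a discrepancy does not by itself rule out norm convergence of $J_h(H_{0,h}-z\identh)^{-1}K_h$, since $J_h$ and $K_h$ filter frequencies through $\widehat{\vp}_0,\widehat{\psi}_0$ and one must check that the defect survives this conjugation. The paper's argument is a concrete lower-bound argument: it first establishes norm resolvent convergence for the \emph{modified} operators (Theorems~\ref{thm33}, \ref{thm44}, \ref{thm410}, \ref{thm53}, \ref{thm54}), then computes \emph{exactly}, at an explicit point such as $h\xi=\pi$ or $h\xi=(\tfrac{\pi}{2},-\tfrac{\pi}{2})$, that $\nbc{(G_{0,h}(\xi)-\I\bone)^{-1}-(\widetilde{G}_{0,h}(\xi)-\I\bone)^{-1}}\geq c>0$ uniformly in $h$ (Lemmas~\ref{lemma34}, \ref{lemma45}, \ref{lemma412}, \ref{lemma56}), converts this into an operator-norm lower bound via Lemma~\ref{lemma21}, and uses $K_hJ_h=\identh$ and the uniform bounds on $J_h,K_h$ to reach a contradiction. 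None of this appears in your sketch, and your diagnosis of the mechanism is wrong for the symmetric models: you assert that in $d=1$ the unmodified operator "already has a clean resolvent" because its symbol squares to a scalar, yet $H_{0,h}^{\rm s}$ with $d=1$ is one of the non-convergent cases the theorem covers. The failure there (and for $H_{0,h}^{\rm s}$ in $d=2,3$, whose squared symbol is also scalar) is fermion doubling — $\tfrac1h\sin(h\xi_j)$ vanishes at the Brillouin-zone boundary, so $(G_{0,h}^{\rm s}(\xi)-\I\bone)^{-1}$ stays $O(1)$ where $(G_0(\xi)-\I\bone)^{-1}$ is $O(h)$ — not the cross term you describe, which is the relevant mechanism only for the forward--backward model in $d\geq2$. (Your closing paragraph about verifying that the mass modification repairs the symbol belongs to Theorem~\ref{thm11}, not to the statement under review.)

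For part (ii) your overall strategy (symbol comparison plus control of $(\ident-J_hK_h)(H_0-z\ident)^{-1}$) is the right one, but you attach the Sobolev space to the wrong term. The term $(J_hK_h-\ident)(H_0-z\ident)^{-1}$ is already $O(h)$ in $\cB(\cH^d)$ (Lemma~\ref{lemmaA3}); no extra regularity is needed there. The restriction to $H^1(\bR^d)\otimes\bC^{\nu(d)}$ is forced by the symbol-comparison term: precisely because of part (i), $(G_{0,h}(\xi)-\I\bone)^{-1}-(G_0(\xi)-\I\bone)^{-1}$ is \emph{not} uniformly $O(h)$, and one must compose with a further factor $(G_0(\xi)-\I\bone)^{-1}$ — i.e.\ test against $H^1$ data — to obtain $\nbc{(G_{0,h}(\xi)-\I\bone)^{-1}(G_{0,h}(\xi)-G_0(\xi))(G_0(\xi)-\I\bone)^{-2}}\leq Ch$ on $h\xi\in[-\tfrac{3\pi}{2},\tfrac{3\pi}{2}]^d$, as in Lemmas~\ref{lemma61} and~\ref{lemma64}. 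Your accounting, which invokes "uniform resolvent bounds" on both factors together with $\nbc{G_{0,h}(\xi)-G_0(\xi)}\leq Ch\abs{\xi}^2$, would yield $O(h)$ already in $\cB(\cH^d)$ and thus contradict part (i); the missing observation is that the unmodified discrete resolvent is only $O(1)$, not $O(1/\abs{\xi})$, at the zone edge, so without the extra $\abs{\xi}^{-1}$ supplied by the $H^1$ weight the estimate degenerates to $O(h\abs{\xi})=O(1)$.
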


The estimate~\eqref{general-est} implies results on the spectra of the operators $H_{0,h}$ and $H_0$ and their relation, see~\cite[section~5]{CGJ}. Such results are not obtainable from the strong convergence implied by the estimate~\eqref{Sobolev-est}.
Thus we are in the remarkable situation that in dimensions $d=2,3$ we need to modify the natural discretizations in order to obtain spectral information. Furthermore, in dimension $d=1$ to obtain spectral information we must use either the forward-backward discretizations or the modified symmetric discretizations. Moreover, this is relevant for resolving the unwanted \emph{fermion doubling} phenomenon that is present in some discretizations of Dirac operators \cite{CPT}.

Results of the type~\eqref{general-est} were first obtained by
Nakamura and Tadano~\cite{NT} for $H=-\Delta+V$ on $L^2(\bR^d)$ and
$H_h=-\Delta_h+V_h$ on $\ell^2(h\bZ^d)$ for a large class of real
potentials~$V$, including unbounded~$V$. They used special cases of
the $J_h$ and $K_h$ as defined here, i.e.\ the pair of biorthogonal
Riesz sequences is replaced by a single orthonormal sequence. 
Recently their results have been applied to quantum graph 
Hamiltonians~\cite{ENT}.
In~\cite{IJ} the continuum limit is studied for a number of different problems. Here strong resolvent convergence is proved up to the spectrum and scattering results are derived.

In~\cite{CGJ} the authors proved results of the
type~\eqref{general-est} for a class of Fourier multipliers~$H_0$ and
their discretizations~$H_{0,h}$, and obtained results of the
type~\eqref{general-est} for perturbations $H=H_0+V$ and
$H_h=H_{0,h}+V_h$ with a bounded, real-valued, and Hölder continuous
potential. Note that the results in~\cite{CGJ} do not directly apply
to Dirac operators, since the free Dirac operators do not satisfy an
essential symmetry condition~\cite[Assumption~3.1(4)]{CGJ}.
In~\cite{SU} Schmidt and Umeda proved strong resolvent convergence for
Dirac operators in dimension $d=2$ using the discretization
$H_{0,h}^{\rm fb}$. They allow a class of bounded non-self-adjoint
potentials
and also state corresponding results for dimensions $d=1,3$. 

The remainder of this paper is organized as follows.
Section~\ref{sectPrelim} introduces additional notation and operators
used in the paper. Sections~\ref{sect1D}, \ref{sect2D}, and
\ref{sect3D} prove Theorem~\ref{thm11} and Theorem~\ref{thm12}(i) in
the one-, two-, and three-dimensional cases, respectively. Since some
of the arguments are very similar in the different dimensions, we will
give the full details in dimension two, and omit parts of the proofs
in dimensions one and three that are essentially the same verbatim.
Theorem~\ref{thm12}(ii) is proved in section~\ref{sectSobolev}.
Finally we show how a potential $V$ can be added to our results in
section~\ref{sectV}.

\section{Preliminaries}\label{sectPrelim}
In this section we collect a number of definitions and results used in the sequel. 

\subsection{Notation for identity operators}

We use the following notation for identity operators on various spaces: $I$ on~$\cH^d$, $I_h$ on~$\cH^d_h$, $\sfI$ on~$L^2(\bR^d)$, $\sfI_h$ on~$\ell^2(h\bZ^d)$, $\bone$ on~$\bC^2$, and $\sfone$ on~$\bC^4$. In section~\ref{sect3D}, in the definitions of the operator matrices for the free Dirac operator and its discretizations, $\bone$ denotes the identity on $L^2(\bR^3)\otimes\bC^2$ and $\bone_h$ denotes the identity on $\ell^2(h\bZ^3)\otimes\bC^2$.

\subsection{Finite differences}

The forward, backward, and symmetric difference operators on $\cH_h^1$ are defined in~\eqref{1Dfb} and~\eqref{1Ds}. Let $\{e_1,e_2,e_3\}$ be the canonical basis in $\bZ^3$. The forward partial difference operators for mesh size $h$ are defined by
\begin{equation}\label{Dj+}
(D^+_{h;j}u_h)(k)=\frac{1}{\I h}\bigl(u_h(k+e_j)-u_h(k)\bigr),
\quad j=1,2,3,
\end{equation}
and backward partial difference operators by
\begin{equation}\label{Dj-}
(D^-_{h;j}u_h)(k)=\frac{1}{\I h}\bigl(u_h(k)-u_h(k-e_j)\bigr),\quad
j=1,2,3.
\end{equation}
The symmetric difference operators are given by
\begin{equation}\label{Djs}
(D^{\rm s }_{h;j}u_h)(k)=\frac{1}{2\I h}\bigl(u_h(k+e_j)-u_h(k-e_j)\bigr),
\quad
j=1,2,3.
\end{equation}
Note that $(D^+_{h;j})^{\ast}=D^-_{h;j}$ and 
$(D^{\rm s}_{h;j})^{\ast}=D^{\rm s}_{h;j}$.

The discrete Laplacian acting on $\ell^2(h\bZ^d)$ is given by
\begin{equation}\label{Lap}
(-\Delta_hv_h)(k)=\frac{1}{h^2}
\sum_{j=1}^d\bigl(2v_h(k)-v_h(k+e_j)-v_h(k-e_j)\bigr).
\end{equation}

\subsection{Fourier transforms}
 
We use Fourier transforms extensively. They are
 normalized to be unitary.
Write $\widehat{\cH}^d=L^2(\bR^d)\otimes\bC^{\nu(d)}$ and let
 $\cF\colon \cH^d \to \widehat{\cH}^d$ 
be the Fourier transform given by
\begin{equation*}
(\cF f)(\xi)=\frac{1}{(2\pi)^{d/2}}\int_{\bR^d}\e^{-\I x\cdot\xi}f(x)\di x, \quad \xi\in\bR^d,
\end{equation*}
with adjoint $\cF^{\ast}\colon \widehat{\cH}^d\to\cH^d$. We suppress their dependence on $d$ in the notation, as it will be obvious in which dimension they are used.

Let $\bT^d_h=[-\frac{\pi}{h},\frac{\pi}{h}]^d$, $d=1,2,3$, and 
$\widehat{\cH}_h^d=L^2(\bT^d_h)\otimes\bC^{\nu(d)}$.
The discrete Fourier transform 
$\sfF_h\colon \cH_h^d\to\widehat{\cH}_h^d$ and its adjoint 
$\sfF_h^{\ast}\colon\widehat{\cH}_h^d\to\cH_h^d$ are given by
\begin{align*}
(\sfF_h u_h)(\xi)
&=\frac{h^d}{(2\pi)^{d/2}}\sum_{k\in\bZ^d}u_h(k)\e^{-\I hk\cdot\xi},\quad \xi\in\bT^d_h,\\
(\sfF_h^{\ast}g)(k)&=\frac{1}{(2\pi)^{d/2}}\int_{\bT^d_h}\e^{\I hk\cdot\xi}g(\xi)\di\xi,\quad  k\in\bZ^d,
\end{align*} 
for $d=1,2,3$.

\subsection{Embedding and discretization operators}

We describe in some detail how the the embedding and discretization operators in~\cite[section~2]{CGJ} are adapted to the Dirac case.

Let $\cK$ be a Hilbert space. Let $\{u_k\}_{k\in\bZ^d}$ and $\{v_k\}_{k\in\bZ^d}$ be two sequences in $\cK$. They are said to be \emph{biorthogonal} if
\begin{equation*}
\ip{u_k}{v_n}=\delta_{k,n}, \quad k,n\in\bZ^d,
\end{equation*}
where $\delta_{k,n}$ is Kronecker's delta.

A sequence $\{u_k\}_{k\in\bZ^d}$ is called a \emph{Riesz sequence}
if there exist $A>0$ and $B>0$ such that
\begin{equation*}
A\sum_{k\in\bZ^d}\abs{c_k}^2\leq\norm{\sum_{k\in\bZ^d}c_ku_k}^2
\leq B\sum_{k\in\bZ^d}\abs{c_k}^2
\end{equation*}
for all $\{c_k\}_{k\in\bZ^d}\in\ell^2(\bZ^d)$.

\begin{assumption}\label{assumpA1}
Let $d=1,2,$ or $3$. Let $\vp_0,\psi_0\in L^2(\bR^d)$. Define
\begin{equation*}
\vp_{h,k}(x)=\vp_0((x-hk)/h),\quad \psi_{h,k}(x)=\psi_0((x-hk)/h),
\quad h>0, \quad k\in\bZ^d.
\end{equation*}
Assume that $\{\vp_{1,k}\}_{k\in\bZ^d}$ and
$\{\psi_{1,k}\}_{k\in\bZ^d}$ are biorthogonal Riesz sequences in 
$L^2(\bR^d)$.
\end{assumption}

To simplify, we omit the dependence on $d$ in the notation for embedding and discretization operators.
The embedding operators $J_h\colon\cH_h^d\to\cH^d$ are defined by
\begin{equation}\label{Jh-def}
J_hu_h=\sum_{k\in\bZ^d}\vp_{h,k}u_h(k),\quad u_h\in\cH_h^d.
\end{equation}
For $d=1,2$, and $u_h(k)=\begin{bmatrix}u_h^1(k)\\ u_h^2(k)\end{bmatrix}$,  the notation above means
\begin{equation*}
\vp_{h,k}u_h(k)=\begin{bmatrix}u_h^1(k)\vp_{h,k}\\ u_h^2(k)\vp_{h,k}
\end{bmatrix},
\end{equation*}
with an obvious modification in case $d=3$.
As a consequence of the Riesz sequence assumption we get
a uniform bound
\begin{equation*}
\sup_{h>0}\norm{J_h}_{\cB(\cH_h^d,\cH^d)}<\infty.
\end{equation*}
The operators $\widetilde{J}_h$ are defined as above by replacing $\vp_{h,k}$ by $\psi_{h,k}$ in \eqref{Jh-def}. Then the discretization operators are defined as $K_h=(\widetilde{J}_h)^{\ast}$. Explicitly, for $d=1,2$,
\begin{equation*}
(K_hf)(k)=\frac{1}{h^d}\begin{bmatrix}
\ip{\psi_{h,k}}{f^1}
\\
\ip{\psi_{h,k}}{f^2}
\end{bmatrix},\quad k\in\bZ^d,
\end{equation*}
with an obvious modification for $d=3$.
We have the uniform bound
\begin{equation*}
\sup_{h>0}\norm{K_h}_{\cB(\cH^d,\cH_h^d)}<\infty.
\end{equation*}
Biorthogonality implies that
\begin{equation*}
K_hJ_h=\identh
\end{equation*}
and that $J_hK_h$ is a projection onto $J_h\cH_h^d$ in $\cH^d$.
A further assumption on the functions $\vp_0$ and $\psi_0$ is needed.
\begin{assumption}[{\cite[Assumption~2.8]{CGJ}}]\label{assumpA2}
Let $\widehat{\vp}_0,\widehat{\psi}_0\in L^2(\bR^d)$ be essentially bounded and satisfy Assumption~{\rm\ref{assumpA1}}. Assume further that there exists $c_0>0$ such that
\begin{equation*}\label{supp-cond-phi}
\supp(\widehat{\vp}_0)\subseteq [-\tfrac{3\pi}{2},\tfrac{3\pi}{2}]^d
\quad\text{and}\quad \abs{\widehat{\vp}_0(\xi)}\geq c_0,\quad
\xi\in[-\tfrac{\pi}{2},\tfrac{\pi}{2}]^d,
\end{equation*}
and
\begin{equation*}\label{supp-cond-psi}
\supp(\widehat{\psi}_0)\subseteq [-\tfrac{3\pi}{2},\tfrac{3\pi}{2}]^d
\quad\text{and}\quad \abs{\widehat{\psi}_0(\xi)}\geq c_0,\quad
\xi\in[-\tfrac{\pi}{2},\tfrac{\pi}{2}]^d.
\end{equation*}
\end{assumption}
For examples of $\vp_0$ and $\psi_0$ satisfying Assumption~\ref{assumpA2}, see \cite[subsection 2.1]{CGJ}.

\subsection{Two lemmas}

We often use the following elementary result, where the identity matrix is denoted by $\ident$. 

\begin{lemma}\label{lemmaG}
Let $G\in\cB(\bC^n)$ be a self-adjoint $n\times n$ matrix. Then
\begin{align}
\norm{G-\I\ident}_{\cB(\bC^n)}
&=\norm{G^2+\ident}_{\cB(\bC^n)}^{1/2},\label{G+}\\
\norm{(G-\I\ident)^{-1}}_{\cB(\bC^n)}
&=\norm{(G^2+\ident)^{-1}}_{\cB(\bC^n)}^{1/2}.\label{G-}
\end{align}
\end{lemma}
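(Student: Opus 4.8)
\textbf{Proof plan for Lemma~\ref{lemmaG}.}
The plan is to reduce everything to the spectral theorem for the self-adjoint matrix $G$. Since $G$ is self-adjoint on $\bC^n$, it is unitarily diagonalizable: there is a unitary $U$ and real eigenvalues $\lambda_1,\dots,\lambda_n$ with $G=U\operatorname{diag}(\lambda_1,\dots,\lambda_n)U^{\ast}$. All the matrices appearing in the statement, namely $G-\I\ident$, $(G-\I\ident)^{-1}$, $G^2+\ident$ and $(G^2+\ident)^{-1}$, are then simultaneously diagonalized by the same $U$, with diagonal entries $\lambda_j-\I$, $(\lambda_j-\I)^{-1}$, $\lambda_j^2+1$ and $(\lambda_j^2+1)^{-1}$ respectively. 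Note $\lambda_j^2+1>0$, so the inverses exist and $G-\I\ident$ is invertible.

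Next I would use that the operator norm on $\cB(\bC^n)$ is unitarily invariant, and that for a diagonal matrix it equals the largest modulus of its diagonal entries (this is just the largest singular value). For \eqref{G+} this gives
\begin{equation*}
\norm{G-\I\ident}_{\cB(\bC^n)}=\max_j\abs{\lambda_j-\I}=\max_j(\lambda_j^2+1)^{1/2}=\Bigl(\max_j(\lambda_j^2+1)\Bigr)^{1/2}=\norm{G^2+\ident}_{\cB(\bC^n)}^{1/2},
\end{equation*}
where in the last step I again use that $G^2+\ident$ is diagonalized by $U$ with nonnegative entries $\lambda_j^2+1$, so its norm is $\max_j(\lambda_j^2+1)$. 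For \eqref{G-} the same computation applies to the inverse: $\norm{(G-\I\ident)^{-1}}_{\cB(\bC^n)}=\max_j\abs{\lambda_j-\I}^{-1}=\max_j(\lambda_j^2+1)^{-1/2}=\bigl(\max_j(\lambda_j^2+1)^{-1}\bigr)^{1/2}=\norm{(G^2+\ident)^{-1}}_{\cB(\bC^n)}^{1/2}$.

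There is no real obstacle here; the only thing to be slightly careful about is the elementary fact that for a normal (in particular self-adjoint or unitarily-diagonalizable) matrix the operator norm equals the spectral radius, equivalently the maximum modulus of eigenvalues — this is what makes the identities exact rather than mere inequalities. Alternatively, one can avoid diagonalization and argue via the $C^{\ast}$-identity: $\norm{A}^2=\norm{A^{\ast}A}$ for any $A$, applied to $A=G-\I\ident$ gives $\norm{G-\I\ident}^2=\norm{(G-\I\ident)^{\ast}(G-\I\ident)}=\norm{(G+\I\ident)(G-\I\ident)}=\norm{G^2+\ident}$ since $G$ commutes with $\ident$ and is self-adjoint; similarly for the inverse. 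I would present the $C^{\ast}$-identity version, as it is shortest and makes clear the result holds in any $C^{\ast}$-algebra, not just $\cB(\bC^n)$.
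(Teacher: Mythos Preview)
Your proposal is correct, and the $C^{\ast}$-identity argument you settle on at the end is exactly the paper's proof: the paper writes $\norm{G-\I\ident}^2=\norm{(G+\I\ident)(G-\I\ident)}=\norm{G^2+\ident}$ and notes that \eqref{G-} follows from \eqref{G+}. Your diagonalization route is a valid alternative but, as you observe, is longer and less general; the paper does not use it.
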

\begin{proof}
It suffices to prove~\eqref{G+}.
We use the $C^{\ast}$-identity in $\cB(\bC^n)$ to get
\begin{equation*}
\norm{G-\I\ident}_{\cB(\bC^n)}^2=
\norm{(G+\I\ident)(G-\I\ident)}_{\cB(\bC^n)}
=\norm{G^2+\ident}_{\cB(\bC^n)}. \qedhere
\end{equation*}
\end{proof}

The following lemma will be used in the proofs related to the non-convergence results; see e.g.~\cite[Theorem~XIII.83]{RS}.

\begin{lemma}\label{lemma21}
Let $d=1$, $2$, or $3$.
Assume that $M_h\colon \bT_h^d\to \cB(\bC^{\nu(d)})$ is a continuous 
matrix-valued function. Let $T_{M_h}$ denote the operator of multiplication by $M_h$,
\begin{equation*}
	T_{M_h} = \int_{\Tdh}^{\oplus} M_h(\xi)\di \xi,
\end{equation*}
on $\widehat{\cH}_h^d \simeq L^2(\Tdh;\bC^{\nu(d)})$. Then
\begin{equation}\label{op-norm}
\norm{T_{M_h}}_{\cB(\widehat{\cH}_h^d)}=\max_{\xi\in\bT_h^d}
\norm{M_h(\xi)}_{\cB(\bC^{\nu(d)})}.
\end{equation}
\end{lemma}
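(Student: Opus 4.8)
The plan is to prove this as a standard fact about direct-integral (multiplication) operators on $L^2(\bT_h^d;\bC^{\nu(d)})$. The two inequalities are handled separately and by elementary means, so the main ``obstacle'' is really just to organize the measure-theoretic details of the lower bound cleanly using continuity of $M_h$ and compactness of $\bT_h^d$.

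First I would establish the upper bound $\norm{T_{M_h}}_{\cB(\widehat{\cH}_h^d)}\le \max_{\xi\in\bT_h^d}\norm{M_h(\xi)}_{\cB(\bC^{\nu(d)})}$. Set $\mu=\max_{\xi\in\bT_h^d}\norm{M_h(\xi)}_{\cB(\bC^{\nu(d)})}$, which is finite since $\xi\mapsto\norm{M_h(\xi)}_{\cB(\bC^{\nu(d)})}$ is continuous (composition of the continuous $M_h$ with the continuous operator norm) on the compact set $\bT_h^d$. For $g\in L^2(\bT_h^d;\bC^{\nu(d)})$ we have the pointwise estimate $\abs{M_h(\xi)g(\xi)}\le\norm{M_h(\xi)}_{\cB(\bC^{\nu(d)})}\abs{g(\xi)}\le\mu\abs{g(\xi)}$ for every $\xi$, and integrating the square over $\bT_h^d$ gives $\norm{T_{M_h}g}^2\le\mu^2\norm{g}^2$. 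Hence $\norm{T_{M_h}}_{\cB(\widehat{\cH}_h^d)}\le\mu$.

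For the lower bound I would fix $\xi_0\in\bT_h^d$ attaining the maximum $\mu$, pick a unit vector $v\in\bC^{\nu(d)}$ with $\abs{M_h(\xi_0)v}=\mu$, and exploit continuity: given $\eps>0$ there is an open neighbourhood $U$ of $\xi_0$ in $\bT_h^d$ (of positive Lebesgue measure, since $\bT_h^d$ is a cube with nonempty interior, and if $\xi_0$ lies on the boundary one intersects with $\bT_h^d$) on which $\abs{M_h(\xi)v}\ge\mu-\eps$. Let $g(\xi)=\abs{U}^{-1/2}\sfone_U(\xi)v$, a unit vector in $L^2(\bT_h^d;\bC^{\nu(d)})$. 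Then
\begin{equation*}
\norm{T_{M_h}g}^2=\frac{1}{\abs{U}}\int_U\abs{M_h(\xi)v}^2\di\xi\ge(\mu-\eps)^2,
\end{equation*}
so $\norm{T_{M_h}}_{\cB(\widehat{\cH}_h^d)}\ge\mu-\eps$ for every $\eps>0$, giving $\norm{T_{M_h}}_{\cB(\widehat{\cH}_h^d)}\ge\mu$. Combining the two inequalities yields~\eqref{op-norm}.

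The only point requiring a little care is the lower bound when the maximizing $\xi_0$ sits on the boundary of the cube $\bT_h^d$: one must ensure the neighbourhood $U\cap\bT_h^d$ still has positive measure, which it does since a cube is the closure of its (nonempty) interior. Everything else is routine, and no result beyond basic measure theory and the continuity hypothesis on $M_h$ is needed. If one prefers, the lower bound can alternatively be phrased via the essential supremum characterization of the norm of a decomposable operator, but the explicit test-function argument above is shorter and self-contained.
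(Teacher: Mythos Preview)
Your argument is correct. The upper bound is immediate, and for the lower bound your use of continuity of $M_h$ together with compactness of $\bT_h^d$ to find a maximizing point $\xi_0$, a unit vector $v$ realizing $\abs{M_h(\xi_0)v}=\mu$, and then a neighbourhood of positive measure on which $\abs{M_h(\cdot)v}\ge\mu-\eps$, is entirely standard and sound. The remark about boundary points is also handled correctly.

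The paper itself does not give a proof of this lemma; it simply refers the reader to \cite[Theorem~XIII.83]{RS}, which is the general result on the norm of a decomposable operator on a direct integral. Your approach is therefore not the same as the paper's (which is a bare citation), but rather a short self-contained argument that avoids invoking the general direct-integral machinery. What you gain is that the reader need not consult Reed--Simon; what the citation buys is brevity and the observation that the result is a special case of a well-known theorem. Either is perfectly acceptable here.
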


\section{The 1D free Dirac operator}\label{sect1D}
We state and prove results for the 1D Dirac operator.  On $\cH^1$
the one-dimensional free Dirac operator with mass $m\geq0$ is given by the operator matrix
\begin{equation*}\label{defH0}
H_0=-\I \dfrac{\di}{\di x}\sigma_1+m\sigma_3=\begin{bmatrix}
m\sfI & -\I\dfrac{\di}{\di x}\\
-\I\dfrac{\di}{\di x} & -m\sfI
\end{bmatrix},
\end{equation*}
where $\sfI$ denotes the identity operator on $L^2(\bR)$.

\subsection{The 1D forward-backward difference model}
\label{sect31}

Using~\eqref{1Dfb} the forward-backward difference model
 of $H_0$ is defined as
\begin{equation*}\label{defH0h}
H_{0,h}^{\rm fb}=\begin{bmatrix}
m\identhsf & D^-_h\\ D^+_h & -m\identhsf
\end{bmatrix},
\end{equation*}
where $\identhsf$ denotes the identity operator on $\ell^2(h\bZ)$. The operators $H_0$ and $H_{0,h}^{\rm fb}$ are given as multipliers in Fourier space by the functions $G_0$ and $G_{0,h}^{\rm fb}$, respectively, where
\begin{equation}\label{1DG0}
G_0(\xi)=\begin{bmatrix}m & \xi\\ \xi & -m\end{bmatrix}
\end{equation}
and
\begin{equation}\label{1DG0h}
G_{0,h}^{\rm fb}(\xi)=\begin{bmatrix}
m & -\frac{1}{\I h}(\e^{-\I h\xi}-1)\\
\frac{1}{\I h}(\e^{\I h\xi}-1)
 & -m
\end{bmatrix}.
\end{equation}
We define
\begin{equation}\label{1Dg0}
g_0(\xi)=m^2+\xi^2,
\end{equation}
and
\begin{equation}\label{1Dg0h}
g_{0,h}^{\rm fb}(\xi)=m^2+\tfrac{4}{h^2}\sin^2(\tfrac{h}{2}\xi).
\end{equation}
Then
\begin{equation}\label{square}
G_0(\xi)^2=g_0(\xi)\bone
\quad
\text{and}
\quad
G_{0,h}^{\rm fb}(\xi)^2=g_{0,h}^{\rm fb}(\xi)\bone.
\end{equation}

\begin{lemma}\label{lemma31}
Assume $\xi\neq0$. Then we have
\begin{equation}\label{eq38}
\nbc{(G_0(\xi)-\I\bone)^{-1}}\leq\frac{1}{\abs{\xi}}.
\end{equation}
There exists $C>0$ such that for  $h\xi\in[-\tfrac{3\pi}{2},\tfrac{3\pi}{2}]$ we have
\begin{equation}\label{eq39}
\nbc{(G_{0,h}^{\rm fb}(\xi)-\I\bone)^{-1}}\leq \frac{C}{\abs{\xi}}.
\end{equation}
\end{lemma}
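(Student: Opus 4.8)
The plan is to exploit the structural identities in \eqref{square}, which reduce both norm computations to scalar quantities. By Lemma~\ref{lemmaG} applied with $G=G_0(\xi)$ and $n=2$, together with $G_0(\xi)^2=g_0(\xi)\bone$, we have
\begin{equation*}
\nbc{(G_0(\xi)-\I\bone)^{-1}}=\nbc{(G_0(\xi)^2+\bone)^{-1}}^{1/2}
=\nbc{(g_0(\xi)+1)^{-1}\bone}^{1/2}=(g_0(\xi)+1)^{-1/2}.
\end{equation*}
Since $g_0(\xi)=m^2+\xi^2\geq\xi^2$, this gives $(g_0(\xi)+1)^{-1/2}\leq(\xi^2+1)^{-1/2}\leq\abs{\xi}^{-1}$, which is \eqref{eq38}. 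First I would write out exactly this short chain.

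For the discrete estimate \eqref{eq39} I would argue identically: Lemma~\ref{lemmaG} with $G=G_{0,h}^{\rm fb}(\xi)$ (self-adjoint, since $(D_h^+)^\ast=D_h^-$ makes the off-diagonal entries conjugate, so $G_{0,h}^{\rm fb}(\xi)^\ast=G_{0,h}^{\rm fb}(\xi)$) together with the second identity in \eqref{square} yields
\begin{equation*}
\nbc{(G_{0,h}^{\rm fb}(\xi)-\I\bone)^{-1}}=(g_{0,h}^{\rm fb}(\xi)+1)^{-1/2}
=\Bigl(m^2+\tfrac{4}{h^2}\sin^2(\tfrac{h}{2}\xi)+1\Bigr)^{-1/2}
\leq\Bigl(\tfrac{4}{h^2}\sin^2(\tfrac{h}{2}\xi)\Bigr)^{-1/2}.
\end{equation*}
So everything reduces to the elementary inequality $\frac{4}{h^2}\sin^2(\tfrac{h}{2}\xi)\geq c\,\xi^2$ for $h\xi\in[-\tfrac{3\pi}{2},\tfrac{3\pi}{2}]$, i.e. $\abs{\sin t}\geq c'\abs{t}$ for $\abs{t}\leq\tfrac{3\pi}{4}$.

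The main (and only real) obstacle is this last inequality, and it is mild: $t\mapsto \abs{\sin t}/\abs{t}$ is continuous and strictly positive on the compact set $0<\abs{t}\leq\tfrac{3\pi}{4}$ with limit $1$ at $0$, hence bounded below by some $c'>0$ there; concretely one may take $c'=\frac{\sin(3\pi/4)}{3\pi/4}=\frac{2\sqrt2}{3\pi}$ since $\abs{\sin t}/\abs t$ is decreasing on $[0,\pi]$. Setting $t=\tfrac{h}{2}\xi$ gives $\frac{4}{h^2}\sin^2(\tfrac h2\xi)\geq (c')^2\xi^2$, so $\nbc{(G_{0,h}^{\rm fb}(\xi)-\I\bone)^{-1}}\leq (c')^{-1}\abs{\xi}^{-1}$, which is \eqref{eq39} with $C=(c')^{-1}=\tfrac{3\pi}{2\sqrt2}$. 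I would note that the constant is uniform in $h$ because the bound depends on $h$ and $\xi$ only through the product $h\xi$, which is confined to $[-\tfrac{3\pi}{2},\tfrac{3\pi}{2}]$ by hypothesis.
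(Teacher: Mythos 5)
Your proposal is correct and follows essentially the same route as the paper: both reduce the matrix norms to scalars via Lemma~\ref{lemmaG} and the identities \eqref{square}, then finish the discrete case with the elementary bound $\abs{\sin\theta}\geq c\abs{\theta}$ for $\abs{\theta}\leq\tfrac{3\pi}{4}$. Your explicit constant $c'=\tfrac{2\sqrt2}{3\pi}$ and the check that $G_{0,h}^{\rm fb}(\xi)$ is self-adjoint are welcome extras not spelled out in the paper.
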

\begin{proof}
Using Lemma~\ref{lemmaG} together with~\eqref{1Dg0} and~\eqref{square} we get
\begin{align*}
\nbc{(G_0(\xi)-\I\bone)^{-1}}&=\nbc{(G_0(\xi)^2+\bone)^{-1}}^{1/2}=\frac{1}{(1+m^2+\xi^2)^{1/2}}\leq\frac{1}{\abs{\xi}},
\end{align*}
proving~\eqref{eq38}.

To prove~\eqref{eq39} we use Lemma~\ref{lemmaG},~\eqref{1Dg0h}, and~\eqref{square} to get
\begin{align*}
\nbc{(G_0^{\rm fb}(\xi)-\I\bone)^{-1}}
&=\nbc{(G_0^{\rm fb}(\xi)^2+\bone)^{-1}}^{1/2}=\bigl(1+m^2+\tfrac{4}{h^2}\sin^2(\tfrac{h}{2}\xi)\bigr)^{-1/2}\\
&\leq \bigl(\tfrac{4}{h^2}\sin^2(\tfrac{h}{2}\xi)\bigr)^{-1/2}.
\end{align*}
There exists $c>0$ such that for $\abs{\theta}\leq \tfrac{3\pi}{4}$ we have $\abs{\sin(\theta)}\geq c\abs{\theta}$. For 
$h\xi\in[-\tfrac{3\pi}{2},\tfrac{3\pi}{2}]$ then~\eqref{eq39} follows.
\end{proof}

\begin{lemma}\label{lemma32}
There exists $C>0$ such that
\begin{equation*}
\nbc{(G_0(\xi)-\I\bone)^{-1}-(G_{0,h}^{\rm fb}(\xi)-\I\bone)^{-1}}\leq
Ch
\end{equation*}
for $h\xi\in[-\tfrac{3\pi}{2},\tfrac{3\pi}{2}]$.
\end{lemma}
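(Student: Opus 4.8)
The plan is to reduce everything to the elementary matrix identity
\[
A^{-1}-B^{-1}=A^{-1}(B-A)B^{-1},
\]
applied with $A=G_0(\xi)-\I\bone$ and $B=G_{0,h}^{\rm fb}(\xi)-\I\bone$, so that $B-A=G_{0,h}^{\rm fb}(\xi)-G_0(\xi)$. Taking the $\cB(\bC^2)$-norm gives
\[
\nbc{(G_0(\xi)-\I\bone)^{-1}-(G_{0,h}^{\rm fb}(\xi)-\I\bone)^{-1}}
\le\nbc{(G_0(\xi)-\I\bone)^{-1}}\,\nbc{G_{0,h}^{\rm fb}(\xi)-G_0(\xi)}\,\nbc{(G_{0,h}^{\rm fb}(\xi)-\I\bone)^{-1}}.
\]
For $\xi\neq0$ with $h\xi\in[-\tfrac{3\pi}{2},\tfrac{3\pi}{2}]$ the two outer factors are exactly what Lemma~\ref{lemma31} controls: the first is $\le 1/\abs{\xi}$ by~\eqref{eq38} and the third is $\le C/\abs{\xi}$ by~\eqref{eq39}. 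The case $\xi=0$ is trivial, since then $G_0(0)=G_{0,h}^{\rm fb}(0)$ and the difference of inverses vanishes identically.

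It remains to estimate the middle factor. Comparing~\eqref{1DG0} and~\eqref{1DG0h}, the matrix $G_{0,h}^{\rm fb}(\xi)-G_0(\xi)$ has vanishing diagonal, $(2,1)$-entry $a(\xi):=\tfrac{1}{\I h}(\e^{\I h\xi}-1)-\xi$, and $(1,2)$-entry $\overline{a(\xi)}$ (using that $\xi$ is real); it is therefore Hermitian and off-diagonal, so $\nbc{G_{0,h}^{\rm fb}(\xi)-G_0(\xi)}=\abs{a(\xi)}$. Writing $a(\xi)=\tfrac{1}{\I h}\bigl(\e^{\I h\xi}-1-\I h\xi\bigr)$ and using the elementary bound $\abs{\e^{\I t}-1-\I t}\le t^2/2$ with $t=h\xi$, we get $\abs{a(\xi)}\le\tfrac{1}{h}\cdot\tfrac{(h\xi)^2}{2}=\tfrac12 h\xi^2$. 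Inserting the three estimates yields
\[
\nbc{(G_0(\xi)-\I\bone)^{-1}-(G_{0,h}^{\rm fb}(\xi)-\I\bone)^{-1}}
\le\frac{1}{\abs{\xi}}\cdot\frac{h\xi^2}{2}\cdot\frac{C}{\abs{\xi}}=\frac{C}{2}\,h,
\]
for all $\xi\neq0$ with $h\xi\in[-\tfrac{3\pi}{2},\tfrac{3\pi}{2}]$, which is the claim.

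I do not expect a genuine obstacle here; the proof is essentially bookkeeping. The one point that has to come out right is the \emph{quadratic} dependence on $\xi$ in the symbol error $\nbc{G_{0,h}^{\rm fb}(\xi)-G_0(\xi)}\le\tfrac12 h\xi^2$: this factor $\xi^2$ is exactly what absorbs the two factors $\abs{\xi}^{-1}$ produced by the resolvent bounds of Lemma~\ref{lemma31}, leaving a clean $O(h)$ estimate that is uniform over the admissible range of $\xi$. A merely linear-in-$\xi$ bound on the symbol difference would only give an $O(1)$ estimate, which is why the first-order accuracy of the forward/backward difference symbol is used in full.
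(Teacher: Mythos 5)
Your proof is correct and follows essentially the same route as the paper: the second resolvent identity, the bound $\nbc{G_{0,h}^{\rm fb}(\xi)-G_0(\xi)}\le Ch\abs{\xi}^2$ from the second-order Taylor remainder of $\e^{\I h\xi}$, and the two $\abs{\xi}^{-1}$ resolvent bounds from Lemma~\ref{lemma31}. The only additions are cosmetic (the explicit constant $1/2$, the Hermitian off-diagonal observation, and the trivial $\xi=0$ case), so nothing further is needed.
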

\begin{proof}
We have
\begin{align*}
(G_0(\xi)-\I\bone)^{-1}-&(G_{0,h}^{\rm fb}(\xi)-\I\bone)^{-1}
\\
&=
(G_0(\xi)-\I\bone)^{-1}\bigl(
G_{0,h}^{\rm fb}(\xi)-G_0(\xi)
\bigr)(G_{0,h}^{\rm fb}(\xi)-\I\bone)^{-1}.
\end{align*}
To estimate the $12$ and $21$ entries in $G_{0,h}^{\rm fb}(\xi)-G_0(\xi)$ we use Taylor's formula:
\begin{equation*}\label{taylor}
\e^{\I h\xi}=1+\I h\xi+(\I h\xi)^2
\int_0^1\e^{\I ht\xi}(1-t)\di t.
\end{equation*}
It follows that the $12$ and $21$ entries  are estimated by $Ch\abs{\xi}^2$. Using Lemma~\ref{lemma31} the result follows.
\end{proof}

Using Lemmas~\ref{lemma31} and~\ref{lemma32} we can adapt the arguments in~\cite{CGJ} to obtain the following result. We omit the details here, and refer the reader to the proof of Theorem~\ref{thm44} where details of the adaptation are given.
\begin{theorem}\label{thm33}
Let $K\subset (\bC\setminus \bR)\cup(-m,m)$ be compact. Then there exists $C>0$ such that
\begin{equation*}
\norm{
J_h(H_{0,h}^{\rm fb}-z\identh)^{-1}K_h-(H_0-z\ident)^{-1}
}_{\cB(\cH^1)} \leq C h
\end{equation*}
for all $z\in K$ and $h\in(0,1]$.
\end{theorem}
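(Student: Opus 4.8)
The plan is to reduce the full operator-norm estimate on $\cH^1$ to the fiberwise matrix estimates already established in Lemmas~\ref{lemma31} and~\ref{lemma32}, following the scheme of \cite{CGJ}. First I would pass to Fourier space: under $\cF$ the resolvent $(H_0-z\ident)^{-1}$ becomes multiplication by $(G_0(\xi)-z\bone)^{-1}$ on $\widehat{\cH}^1$, and under $\sfF_h$ the discrete resolvent $(H_{0,h}^{\rm fb}-z\identh)^{-1}$ becomes multiplication by $(G_{0,h}^{\rm fb}(\xi)-z\bone)^{-1}$ on $\widehat{\cH}_h^1=L^2(\bT_h^1)\otimes\bC^2$. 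The embedding and discretization operators $J_h,K_h$ are, by construction from the biorthogonal Riesz sequences, also Fourier multipliers (up to the periodization built into $\sfF_h$), with symbols controlled by $\widehat\vp_0,\widehat\psi_0$; here Assumption~\ref{assumpA2} is what makes those symbols bounded and bounded below on the relevant frequency window $[-\tfrac{3\pi}{2h},\tfrac{3\pi}{2h}]$, which in the rescaled variable $h\xi$ is exactly the interval $[-\tfrac{3\pi}{2},\tfrac{3\pi}{2}]$ appearing in the two lemmas.

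Next I would split the difference $J_h(H_{0,h}^{\rm fb}-z\identh)^{-1}K_h-(H_0-z\ident)^{-1}$ into a "low-frequency" part, where frequencies lie in the support dictated by $\widehat\vp_0$ (so $h\xi\in[-\tfrac{3\pi}{2},\tfrac{3\pi}{2}]$), and a "high-frequency" remainder. On the low-frequency part the operator is unitarily equivalent to multiplication by a matrix-valued symbol of the form $a_h(\xi)(G_{0,h}^{\rm fb}(\xi)-z\bone)^{-1}b_h(\xi)-(G_0(\xi)-z\bone)^{-1}$ (with $a_h,b_h$ the scalar Riesz-sequence symbols, which themselves converge to $1$ at rate $O(h\abs{\xi})$ on this window by Taylor expansion of $\widehat\vp_0,\widehat\psi_0$), so the $\cB(\cH^1)$-norm is the essential supremum over $\xi$ of the $\cB(\bC^2)$-norm of that matrix, by the obvious analogue of Lemma~\ref{lemma21}. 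One then inserts and subtracts $(G_{0,h}^{\rm fb}(\xi)-z\bone)^{-1}$ and uses Lemma~\ref{lemma32} for the resolvent difference together with Lemma~\ref{lemma31} (first reducing general $z\in K$ to $z=\I$ by a standard resolvent-identity perturbation, or by directly bounding $(G^2+\ident)^{-1}$-type quantities uniformly for $z$ in the compact set $K\subset(\bC\setminus\bR)\cup(-m,m)$ via Lemma~\ref{lemmaG}), plus the $O(h)$ bound on $a_h-1$ and $b_h-1$; each term contributes $O(h)$ after absorbing one factor of $\abs{\xi}$ from $g_0(\xi)^{-1/2}$ or $\gfb_{0,h}(\xi)^{-1/2}$ against a factor $\abs{\xi}$ or $\abs{\xi}^2$ from the numerators. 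The high-frequency remainder is just $-(G_0(\xi)-z\bone)^{-1}$ restricted to $\abs{\xi}\gtrsim 1/h$, whose norm is $O(h)$ directly from \eqref{eq38}.

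The main obstacle I anticipate is bookkeeping the interaction between the $J_h$/$K_h$ symbols and the cutoff geometry: one must check that $J_hK_h$ acts as the sharp frequency projection onto the right window and that the leftover pieces (where $\widehat\vp_0$ or $\widehat\psi_0$ is supported but the indicator of $\bT_h^1$ has already been applied, i.e. $h\xi\in[-\tfrac{3\pi}{2},-\pi]\cup[\pi,\tfrac{3\pi}{2}]$) are handled — on that corona $G_{0,h}^{\rm fb}$ need not be close to $G_0$, but there $\abs{\xi}\sim 1/h$, so $(G_0(\xi)-z\bone)^{-1}$ and $(G_{0,h}^{\rm fb}(\xi)-z\bone)^{-1}$ are each $O(h)$ in norm and the term is harmless. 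Since, as the statement notes, this is precisely the adaptation carried out in detail for Theorem~\ref{thm44}, I would organize the write-up to mirror that argument and simply point to it, recording only the two dimension-one specifics: the explicit symbols \eqref{1DG0}–\eqref{1Dg0h}, the squaring identities \eqref{square} that let Lemma~\ref{lemmaG} reduce everything to the scalar quantities $g_0$ and $\gfb_{0,h}$, and the uniformity in $z\in K$.
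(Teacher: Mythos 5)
Your proposal follows essentially the same route as the paper: the paper proves Theorem~\ref{thm33} by pointing to the proof of Theorem~\ref{thm44}, which is exactly the Fourier-space reduction you describe — split off $(J_hK_h-\ident)(H_0-z\ident)^{-1}$, write the rest as a sum over $j\in\bZ^d$ of terms $\widehat{\vp}_0(h\xi)\overline{\widehat{\psi}_0(h\xi+2\pi j)}$ times a difference of matrix resolvents, handle $j=0$ by Lemma~\ref{lemma32} and the aliasing terms $\abs{j}=1$ by the decay estimate \eqref{eq39} at the shifted frequency, and reduce general $z\in K$ to $z=\I$ via uniform bounds on $(G-z\bone)(G-\I\bone)^{-1}$. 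One sub-step of your write-up is not justified as stated, though: you claim the Riesz-sequence symbols $a_h,b_h$ converge to $1$ at rate $O(h\abs{\xi})$ ``by Taylor expansion of $\widehat{\vp}_0,\widehat{\psi}_0$''. Assumption~\ref{assumpA2} only makes these functions essentially bounded with support and lower-bound conditions; no continuity or differentiability is assumed, so no Taylor expansion is available, and in fact the product $(2\pi)^d\widehat{\vp}_0(h\xi)\overline{\widehat{\psi}_0(h\xi)}$ need not be close to $1$ anywhere outside $[-\tfrac{\pi}{2},\tfrac{\pi}{2}]$. The correct mechanism is the one the paper uses: biorthogonality together with the support condition forces $(2\pi)^d\widehat{\vp}_0(\zeta)\overline{\widehat{\psi}_0(\zeta)}=1$ \emph{exactly} for $\zeta\in[-\tfrac{\pi}{2},\tfrac{\pi}{2}]$ (this is \cite[Lemma~2.7]{CGJ}), while for $\tfrac{\pi}{2}\leq\abs{h\xi}\leq\tfrac{3\pi}{2}$ one has $\abs{\xi}\geq\tfrac{\pi}{2h}$, so both matrix resolvents are $O(h)$ by \eqref{eq38} and \eqref{eq39} and boundedness of the symbols suffices. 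You already invoke exactly this resolvent-decay argument on your ``corona'' $\abs{h\xi}\in[\pi,\tfrac{3\pi}{2}]$; it is needed on all of $\abs{h\xi}\in[\tfrac{\pi}{2},\tfrac{3\pi}{2}]$, and once you use it there the Taylor-expansion claim becomes unnecessary and the argument closes.
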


\subsection{The 1D symmetric difference model}
\label{sect32}

The discrete model based on the symmetric difference operator~\eqref{1Ds} is
\begin{equation}\label{1DH0hs}
H_{0,h}^{\rm s}=\begin{bmatrix}
m\identhsf & D_h^{\rm s}\\
D_h^{\rm s} & -m\identhsf
\end{bmatrix}.
\end{equation}
In Fourier space it is a multiplier with symbol
\begin{equation}\label{1DG0hs}
G_{0,h}^{\rm s}(\xi)=\begin{bmatrix}
m & \frac{1}{h}\sin(h\xi)\\
\frac{1}{h}\sin(h\xi) & -m
\end{bmatrix}.
\end{equation}
We have
\begin{equation}\label{1Dg0hs}
G_{0,h}^{\rm s}(\xi)^2=g_{0,h}^{\rm s}(\xi)
\bone \quad \text{where} \quad
g_{0,h}^{\rm s}(\xi)=m^2+\tfrac{1}{h^2}\sin^2(h\xi).
\end{equation}

\begin{lemma}\label{lemma34}
There exists $c>0$ such that
\begin{equation}\label{new-bound}
\max_{\xi\in\bT^1_h}
\nbc{(G_{0,h}^{\rm fb}(\xi)-\I\bone)^{-1}-
(G_{0,h}^{\rm s}(\xi)-\I\bone)^{-1}}\geq c
\end{equation}
for all $h\in (0,1]$.
\end{lemma}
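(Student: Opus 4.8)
The plan is to exploit the explicit diagonalization already recorded in \eqref{square} and \eqref{1Dg0hs}: since both $G_{0,h}^{\rm fb}(\xi)$ and $G_{0,h}^{\rm s}(\xi)$ square to scalar multiples of $\bone$, Lemma~\ref{lemmaG} gives a clean formula for the resolvents. Writing $G:=G_{0,h}^{\rm fb}(\xi)$ and $G':=G_{0,h}^{\rm s}(\xi)$, and using $(G-\I\bone)^{-1}=(G+\I\bone)(G^2+\bone)^{-1}$ together with $G^2=(m^2+\tfrac{4}{h^2}\sin^2(\tfrac h2\xi))\bone$ and $(G')^2=(m^2+\tfrac1{h^2}\sin^2(h\xi))\bone$, the difference becomes
\begin{equation*}
(G-\I\bone)^{-1}-(G'-\I\bone)^{-1}
=\frac{G+\I\bone}{1+g_{0,h}^{\rm fb}(\xi)}-\frac{G'+\I\bone}{1+g_{0,h}^{\rm s}(\xi)}.
\end{equation*}
So everything reduces to estimating an explicit $2\times2$ matrix-valued function of $\xi$, and then maximizing its $\cB(\bC^2)$-norm over $\bT_h^1$.

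The key observation is that the symbols disagree at the \emph{edge} of the Brillouin zone: at $\xi=\pi/h$ one has $\sin(h\xi)=\sin\pi=0$, so $g_{0,h}^{\rm s}(\pi/h)=m^2$ and $G_{0,h}^{\rm s}(\pi/h)=m\sigma_3$, while $\sin^2(\tfrac h2\xi)=\sin^2(\tfrac\pi2)=1$, so $g_{0,h}^{\rm fb}(\pi/h)=m^2+4/h^2$ and the off-diagonal entries of $G_{0,h}^{\rm fb}(\pi/h)$ have modulus $2/h$. The second term above, evaluated at $\xi=\pi/h$, is therefore $(m\sigma_3+\I\bone)/(1+m^2)$, a fixed matrix of norm $\sqrt{(1+m^2)/(1+m^2)}\cdot(\dots)$ independent of $h$ — concretely its norm is $((1+m^2)^{-1})$ times something bounded below, and in fact one computes $\nbc{(G_{0,h}^{\rm s}(\pi/h)-\I\bone)^{-1}}=(1+m^2)^{-1/2}\ge 2^{-1/2}$ when $m\le1$ (and one can just fix attention on, say, $m$ in a bounded range, or note the bound is uniform). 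Meanwhile the first term has norm $\nbc{(G_{0,h}^{\rm fb}(\pi/h)-\I\bone)^{-1}}=(1+m^2+4/h^2)^{-1/2}\le h/2\to0$. Hence by the reverse triangle inequality the difference at $\xi=\pi/h$ has norm bounded below by $(1+m^2)^{-1/2}-h/2\ge c$ for $h$ small, and since $\max_{\xi\in\bT_h^1}\ge$ the value at $\xi=\pi/h$, the claimed bound \eqref{new-bound} follows. For the remaining $h\in(0,1]$ not covered by "$h$ small", one argues the maximum is bounded below by a positive constant by a compactness/continuity argument, or simply chooses the constant $c$ small enough; alternatively one notes the estimate is only needed qualitatively, as it feeds into the non-convergence proof.

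I expect the only mild obstacle is bookkeeping the dependence on the mass $m$ and making sure the lower bound constant $c$ can be chosen uniformly in $h\in(0,1]$ rather than merely for small $h$; this is handled either by restricting to the relevant edge point $\xi=\pi/h$ (where both symbols are computed in closed form for every $h$) and checking $(1+m^2)^{-1/2}-(1+m^2+4/h^2)^{-1/2}$ is a continuous, strictly positive function of $h\in(0,1]$ that tends to $(1+m^2)^{-1/2}>0$ as $h\to0$, hence bounded below by its infimum over the compact interval $[\,h_0,1]$ combined with the limit behavior near $0$ — so a single positive $c$ works throughout. Everything else is a direct substitution into Lemma~\ref{lemmaG} and elementary trigonometry, with no analytic subtlety.
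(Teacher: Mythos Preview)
Your proof is correct. You and the paper both evaluate at the same edge point $\xi=\pi/h$, where the symmetric symbol degenerates ($\sin(h\xi)=0$) while the forward-backward symbol does not. The difference lies in how the lower bound is extracted.

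The paper proceeds via the resolvent identity
\[
(G_{0,h}^{\rm fb}-\I\bone)\bigl[(G_{0,h}^{\rm s}-\I\bone)^{-1}-(G_{0,h}^{\rm fb}-\I\bone)^{-1}\bigr](G_{0,h}^{\rm s}-\I\bone)
=G_{0,h}^{\rm fb}-G_{0,h}^{\rm s}
=\tfrac{1}{h}(1-\cos(h\xi))\,\sigma_2,
\]
and then observes that $(1+g_{0,h}^{\rm fb})^{-1/2}(G_{0,h}^{\rm fb}-\I\bone)$ and $(1+g_{0,h}^{\rm s})^{-1/2}(G_{0,h}^{\rm s}-\I\bone)$ are \emph{unitary} (by \eqref{square}, \eqref{1Dg0hs}), and so is $\sigma_2$. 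This turns the identity into an exact norm \emph{equality}
\[
\nbc{(G_{0,h}^{\rm fb}-\I\bone)^{-1}-(G_{0,h}^{\rm s}-\I\bone)^{-1}}
=\frac{1-\cos(h\xi)}{h(1+g_{0,h}^{\rm fb})^{1/2}(1+g_{0,h}^{\rm s})^{1/2}},
\]
from which the lower bound at $h\xi=\pi$ is immediate and explicit for every $h\in(0,1]$, with no compactness argument needed.

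Your route is more elementary: you compute the two resolvent norms separately via Lemma~\ref{lemmaG} and apply the reverse triangle inequality, obtaining
\[
\nbc{(G_{0,h}^{\rm fb}(\tfrac{\pi}{h})-\I\bone)^{-1}-(G_{0,h}^{\rm s}(\tfrac{\pi}{h})-\I\bone)^{-1}}
\ge (1+m^2)^{-1/2}-(1+m^2+4/h^2)^{-1/2}.
\]
This is perfectly valid. Your handling of uniformity in $h\in(0,1]$ can be streamlined: the right-hand side is strictly decreasing in $h$ (since $4/h^2$ decreases), so its infimum on $(0,1]$ is simply its value at $h=1$, namely $(1+m^2)^{-1/2}-(5+m^2)^{-1/2}>0$. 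No splitting into ``$h$ small'' and ``remaining $h$'' is required. The paper's unitarity trick buys a sharper constant and an exact expression valid for all $\xi$; your argument buys simplicity and avoids the (admittedly short) unitarity observation.
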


\begin{proof}
We have
\begin{align}
\bigl(G_{0,h}^{\rm fb}(\xi)-\I\bone\bigr)
\bigl[
(G_{0,h}^{\rm s}(\xi)-\I\bone)^{-1}&-
(G_{0,h}^{\rm fb}(\xi)-\I\bone)^{-1}
\bigr]
\bigl(G_{0,h}^{\rm s}(\xi)-\I\bone\bigr)
\notag\\
&=G_{0,h}^{\rm fb}(\xi)-G_{0,h}^{\rm s}(\xi)
=\tfrac{1}{h}(1-\cos(h\xi))\sigma_2. \label{eq:lem34eq}
\end{align}
From \eqref{square} and \eqref{1Dg0hs}: $(1+g_{0,h}^{\rm fb}(\xi))^{-1/2}(G_{0,h}^{\rm fb}(\xi)-\I \bone)$ and $(1+g_{0,h}^{\rm s}(\xi))^{-1/2}(G_{0,h}^{\rm s}(\xi)-\I \bone)$ are unitary matrices for all $\xi\in\bT_h^1$. Since $\sigma_2$ is also unitary, \eqref{eq:lem34eq} gives the norm equality
\begin{equation*}
\nbc{(G_{0,h}^{\rm fb}(\xi)-\I\bone)^{-1}-
(G_{0,h}^{\rm s}(\xi)-\I\bone)^{-1}}
= \frac{1-\cos(h\xi)}{h(1+g_{0,h}^{\rm fb}(\xi))^{1/2}(1+g_{0,h}^{\rm s}(\xi))^{1/2}}.
\end{equation*}
If we take $h\xi=\pi$, the right hand side becomes
\begin{equation*}
\frac{2}{\sqrt{(1+m^2)h^2+4}\sqrt{1+m^2}}.
\end{equation*}
Thus for $0<h\leq1$ one can take $c=2((1+m^2)^2+4(1+m^2))^{-1/2}$ in~\eqref{new-bound}.
This concludes the proof.
\end{proof}
Using Lemmas~\ref{lemma21} and~\ref{lemma34} together with Theorem~\ref{thm33} and properties of $J_h$ and $K_h$, we get the following result.
\begin{theorem}
Let $z\in(\bC\setminus\bR)\cup(-m,m)$. Then $J_h(H_{0,h}^{\rm s}-z\identh)^{-1}K_h$ \emph{does not converge} to $(H_0-z\ident)^{-1}$ in the operator norm on $\cB(\cH^1)$ as $h\to 0$.
\end{theorem}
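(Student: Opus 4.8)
The plan is to combine the already-established convergence of the forward-backward model (Theorem~\ref{thm33}) with the lower bound of Lemma~\ref{lemma34}, using a triangle inequality argument: if the symmetric model converged in norm, then so would the difference $J_h[(H_{0,h}^{\rm s}-z\identh)^{-1} - (H_{0,h}^{\rm fb}-z\identh)^{-1}]K_h$ to zero, contradicting Lemma~\ref{lemma34} once we transfer the bound from the discrete Fourier side back to $\cB(\cH^1)$.

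First I would reduce to the case of a single, conveniently chosen $z$, say $z=\I$: if non-convergence holds for one point in $(\bC\setminus\bR)\cup(-m,m)$ it is all that is required for the statement, and $z=\I$ is the point for which Lemma~\ref{lemma34} (via Lemma~\ref{lemmaG}) was tailored, since $(G-\I\bone)^{-1}$ is exactly the Fourier symbol of $(H_{0,h}^{\rm s/fb}-\I\identh)^{-1}$. (For general $z$ one replaces $\I$ by $z$ throughout and notes the symbols $G_{0,h}^{\rm fb}(\xi)-z\bone$ and $G_{0,h}^{\rm s}(\xi)-z\bone$ still differ by $\tfrac1h(1-\cos(h\xi))\sigma_2$, and a straightforward modification of the computation in Lemma~\ref{lemma34} produces a positive lower bound; I would remark on this rather than redo it.) Next, I would use that both $H_{0,h}^{\rm fb}$ and $H_{0,h}^{\rm s}$ are Fourier multipliers, so that $\sfF_h(H_{0,h}^{\rm s}-\I\identh)^{-1}\sfF_h^\ast$ and $\sfF_h(H_{0,h}^{\rm fb}-\I\identh)^{-1}\sfF_h^\ast$ are multiplication operators with symbols $(G_{0,h}^{\rm s}(\xi)-\I\bone)^{-1}$ and $(G_{0,h}^{\rm fb}(\xi)-\I\bone)^{-1}$ respectively; Lemma~\ref{lemma21} then gives
\begin{equation*}
\norm{(H_{0,h}^{\rm fb}-\I\identh)^{-1}-(H_{0,h}^{\rm s}-\I\identh)^{-1}}_{\cB(\cH_h^1)}
=\max_{\xi\in\bT^1_h}\nbc{(G_{0,h}^{\rm fb}(\xi)-\I\bone)^{-1}-(G_{0,h}^{\rm s}(\xi)-\I\bone)^{-1}}\geq c.
\end{equation*}

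The remaining step is to transfer this uniform-in-$h$ lower bound on $\cH_h^1$ to a uniform lower bound for $J_h[(H_{0,h}^{\rm fb}-\I\identh)^{-1}-(H_{0,h}^{\rm s}-\I\identh)^{-1}]K_h$ on $\cH^1$. Here I would use $K_hJ_h=\identh$ together with the uniform boundedness $\sup_h\norm{J_h}<\infty$: writing $B_h$ for the discrete difference of resolvents, we have $B_h=K_hJ_hB_hK_hJ_h$ (applying $K_hJ_h=\identh$ on both sides is not quite what is needed — rather I would observe $B_h = K_h (J_h B_h K_h) J_h$ and hence $\norm{B_h}_{\cB(\cH_h^1)}\leq \norm{K_h}^2\norm{J_hB_hK_h}_{\cB(\cH^1)}$), so that $\norm{J_hB_hK_h}_{\cB(\cH^1)}\geq c/\sup_h\norm{K_h}^2 =: c' > 0$ for all $h\in(0,1]$. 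Finally, suppose for contradiction that $J_h(H_{0,h}^{\rm s}-\I\identh)^{-1}K_h\to(H_0-\I\ident)^{-1}$ in $\cB(\cH^1)$. By Theorem~\ref{thm33} also $J_h(H_{0,h}^{\rm fb}-\I\identh)^{-1}K_h\to(H_0-\I\ident)^{-1}$; subtracting, $J_hB_hK_h\to 0$ in $\cB(\cH^1)$, contradicting the uniform lower bound $c'>0$. Hence no such convergence holds.

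The main obstacle is the passage from the discrete-space lower bound to the embedded-space lower bound: one must be careful that the embedding $J_h$ and discretization $K_h$ do not wash out the bound, which is why the identity $K_hJ_h=\identh$ and the uniform operator-norm bounds on $J_h,K_h$ from the Riesz-sequence assumptions are essential. Everything else is bookkeeping: the Fourier-multiplier reduction is standard, Lemma~\ref{lemma21} does the norm computation on $\bT^1_h$, and Lemma~\ref{lemma34} already supplies the explicit constant $c$. I would also add a one-sentence remark that, since the argument shows the difference of the two embedded resolvents stays bounded away from zero, at most one of the two discrete models can converge in norm — and Theorem~\ref{thm33} identifies which one does.
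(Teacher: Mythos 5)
Your argument is exactly the paper's (the paper merely cites Lemma~\ref{lemma21}, Lemma~\ref{lemma34}, Theorem~\ref{thm33}, and the properties of $J_h,K_h$ without writing out the details you supply): a triangle-inequality contradiction, with the discrete lower bound transferred to $\cB(\cH^1)$ via $B_h=K_h(J_hB_hK_h)J_h$ and the uniform bounds on $J_h,K_h$. Two trivial points to fix: the denominator in your transfer step should be $\sup_h\norm{K_h}\cdot\sup_h\norm{J_h}$ rather than $\sup_h\norm{K_h}^2$, and since the theorem asserts non-convergence for \emph{every} admissible $z$, the parenthetical redoing of Lemma~\ref{lemma34} with $z$ in place of $\I$ is actually needed rather than optional (the factorization identity and the unboundedness of $\tfrac1h(1-\cos(h\xi))$ at $h\xi=\pi$ still give a $z$-dependent positive lower bound, so this works).
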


We can introduce a modified operator $\widetilde{H}^{\rm s}_{0,h}$ given by
\begin{equation*}
	\mH^{\rm s}=H_{0,h}^{\rm s}+\begin{bmatrix}
		-h\Delta_h & 0 \\ 0 & h\Delta_h
	\end{bmatrix},
\end{equation*}
where $-\Delta_h$ is the 1D discrete Laplacian; see \eqref{Lap}. We obtain norm resolvent convergence for the modified symmetric difference model, similar to the results in dimensions two and three; see Theorems~\ref{thm44} and~\ref{thm54}. The proof is omitted as it is nearly identical to the proof of Theorem~\ref{thm44}.

\begin{theorem}
	Let $K\subset (\bC\setminus \bR)\cup(-m,m)$ be compact. Then there exists $C>0$ such that
	\begin{equation*}
		\norm{
			J_h(\mH^{\rm s}-z\identh)^{-1}K_h-(H_0-z\ident)^{-1}
		}_{\cB(\cH^1)} \leq C h
	\end{equation*}
	for all $z\in K$ and $h\in(0,1]$.
\end{theorem}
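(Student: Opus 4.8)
The plan is to mimic the structure used for the forward-backward model (Theorem~\ref{thm33}) with $\mH^{\rm s}$ in place of $H_{0,h}^{\rm fb}$. The operator $\mH^{\rm s}$ is again a Fourier multiplier on $\widehat{\cH}_h^1$, with symbol
\begin{equation*}
\tGhs(\xi)=\begin{bmatrix} m+\tfrac{2}{h}(1-\cos(h\xi)) & \tfrac{1}{h}\sin(h\xi)\\ \tfrac{1}{h}\sin(h\xi) & -m-\tfrac{2}{h}(1-\cos(h\xi))\end{bmatrix},
\end{equation*}
obtained from $G_{0,h}^{\rm s}(\xi)$ by adding $\tfrac{2}{h}(1-\cos(h\xi))\sigma_3$, since $-h\Delta_h$ has symbol $\tfrac{2}{h}(1-\cos(h\xi))$ in one dimension. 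Writing $a_h(\xi)=m+\tfrac{2}{h}(1-\cos(h\xi))$ and $b_h(\xi)=\tfrac1h\sin(h\xi)$, one has $\tGhs(\xi)^2=\tghs(\xi)\bone$ with $\tghs(\xi)=a_h(\xi)^2+b_h(\xi)^2$, exactly as in~\eqref{square}. So Lemma~\ref{lemmaG} applies verbatim and gives $\nbc{(\tGhs(\xi)-\I\bone)^{-1}}=(1+\tghs(\xi))^{-1/2}$.

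The two ingredients needed are the analogues of Lemmas~\ref{lemma31} and~\ref{lemma32}. For the resolvent bound, I would show $\tghs(\xi)\geq c\abs{\xi}^2$ for $h\xi\in[-\tfrac{3\pi}{2},\tfrac{3\pi}{2}]$ and all $h\in(0,1]$: indeed $b_h(\xi)^2=\tfrac{1}{h^2}\sin^2(h\xi)$ already controls $\abs{\xi}^2$ on $h\abs{\xi}\leq\tfrac\pi2$ (where $\abs{\sin(h\xi)}\geq c h\abs\xi$), while on $\tfrac\pi2\leq h\abs{\xi}\leq\tfrac{3\pi}{2}$ the term $a_h(\xi)=m+\tfrac2h(1-\cos(h\xi))\geq \tfrac2h$ is bounded below by a constant times $\abs{\xi}$ since there $\abs\xi\leq\tfrac{3\pi}{2h}$; this yields $\nbc{(\tGhs(\xi)-\I\bone)^{-1}}\leq C/\abs\xi$. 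For the comparison estimate, the key point is that the \emph{added} term is $O(h\abs\xi^2)$: by Taylor's formula $1-\cos(h\xi)=\tfrac12(h\xi)^2+O((h\xi)^4)$, so $\tfrac2h(1-\cos(h\xi))=h\xi^2+O(h^3\xi^4)$, which on $h\abs\xi\leq\tfrac{3\pi}{2}$ is bounded by $Ch\abs\xi^2$; and $b_h(\xi)-\xi=\tfrac1h(\sin(h\xi)-h\xi)=O(h^2\abs\xi^3)=O(h\abs\xi^2)$ on the same range. Hence $\nbc{\tGhs(\xi)-G_0(\xi)}\leq Ch\abs\xi^2$, and the resolvent identity $(G_0-\I\bone)^{-1}-(\tGhs-\I\bone)^{-1}=(G_0-\I\bone)^{-1}(\tGhs-G_0)(\tGhs-\I\bone)^{-1}$ together with the two resolvent bounds gives $\nbc{(G_0(\xi)-\I\bone)^{-1}-(\tGhs(\xi)-\I\bone)^{-1}}\leq Ch$ for $h\xi\in[-\tfrac{3\pi}{2},\tfrac{3\pi}{2}]$.

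With these two lemmas in hand, the argument proceeds exactly as in the proof of Theorem~\ref{thm33} (equivalently Theorem~\ref{thm44}): one splits the resolvent difference using $J_h$, $K_h$, the relation $K_hJ_h=\identh$, and the support/lower-bound conditions in Assumption~\ref{assumpA2} on $\widehat{\vp}_0,\widehat{\psi}_0$, which confine the relevant frequencies to $h\xi\in[-\tfrac{3\pi}{2},\tfrac{3\pi}{2}]$ and allow one to pass from the fiberwise estimate above to the operator-norm estimate on $\cB(\cH^1)$, uniformly for $z$ in the compact set $K\subset(\bC\setminus\bR)\cup(-m,m)$ (the spectral parameter enters through a uniformly bounded perturbation of $-\I\bone$, handled as in~\cite{CGJ}).

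The main obstacle is the resolvent bound on the mid-frequency range $\tfrac\pi2\leq h\abs\xi\leq\tfrac{3\pi}{2}$: unlike the pure symmetric model, where $b_h$ alone suffices on $h\abs\xi\leq\tfrac{3\pi}{4}$ and degenerates near $h\abs\xi=\pi$, here one must genuinely use that the mass-type term $a_h(\xi)$ grows like $1/h$ and hence dominates $\abs\xi$ precisely in the regime where $\sin(h\xi)$ is small; checking that the lower bound $\tghs(\xi)\geq c\abs\xi^2$ holds uniformly in $h\in(0,1]$ across the transition requires a little care, but it is elementary. Everything else is a transcription of the already-established one-dimensional forward-backward argument.
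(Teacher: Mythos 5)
Your proposal is correct and follows exactly the route the paper intends: it supplies the 1D analogues of Lemmas~\ref{lemma42} and~\ref{lemma43} (the lower bound $\tilde{g}^{\rm s}_{0,h}(\xi)\geq c\abs{\xi}^2$ via the mass-modification term on the mid-frequency range, and the $Ch\abs{\xi}^2$ comparison of symbols) and then invokes the machinery of Theorem~\ref{thm44}, which is precisely why the paper omits the proof as ``nearly identical'' to that theorem. Your mid-range bound $a_h(\xi)\geq\tfrac{2}{h}\geq c\abs{\xi}$ is a slightly more direct version of the paper's estimate $\tfrac{1}{h}\sin^2(\tfrac{h}{2}\xi)\geq c\abs{\xi}$, but the argument is the same in substance.
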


\section{The 2D free Dirac operator}\label{sect2D}
In two dimensions the free Dirac operator on $\cH^2$ with mass $m\geq0$ is given by
\begin{equation}\label{2DH0}
H_0=-\I\frac{\partial}{\partial x_1}\sigma_1
-\I\frac{\partial}{\partial x_2}\sigma_2 +m\sigma_3 = \begin{bmatrix}
	m\sfI & -\I\frac{\partial}{\partial x_1}-\frac{\partial}{\partial x_2}\\
	-\I\frac{\partial}{\partial x_1}+\frac{\partial}{\partial x_2} & -m\sfI
\end{bmatrix},
\end{equation}
where the Pauli matrices are given in~\eqref{Pauli}.
In $\widehat{\cH}^2$ it is a Fourier multiplier with symbol
\begin{equation}\label{2DG0}
G_0(\xi)=\begin{bmatrix}
m & \xi_1-\I \xi_2\\
\xi_1+\I \xi_2 & -m
\end{bmatrix}.
\end{equation}
The corresponding discrete Dirac operator can be obtained by replacing the derivatives in~\eqref{2DH0} by finite differences.

\subsection{The 2D symmetric difference model}
\label{sect41}
We first consider the model 
obtained by using the symmetric difference operators; see~\eqref{Djs} for the definition.
\begin{equation}
H_{0,h}^{\rm s}=\begin{bmatrix}\label{2DH0h}
m\identhsf & D^{\textrm{s}}_{h;1}-\I D^{\textrm{s}}_{h;2}\\
D^{\textrm{s}}_{h;1}+\I D^{\textrm{s}}_{h;2} &-m\identhsf
\end{bmatrix}.
\end{equation}
In $\widehat{\cH}_h^2$ it acts as a Fourier multiplier with symbol
\begin{equation}\label{2DG0h}
G_{0,h}^{\rm s}(\xi)=
\begin{bmatrix}
m & \frac{1}{h}\sin(h\xi_1)- \frac{\I}{h}\sin(h\xi_2)\\
\frac{1}{h}\sin(h\xi_1)+ \frac{\I}{h}\sin(h\xi_2) & -m
\end{bmatrix}.
\end{equation}

The 2D discrete Laplacian is defined in~\eqref{Lap}. We introduce the modified symmetric difference model as
\begin{equation*}\label{2Dmod}
\mH^{\rm s}=H_{0,h}^{\rm s}+\begin{bmatrix}
	-h\Delta_h & 0 \\ 0 & h\Delta_h
\end{bmatrix}.
\end{equation*}
We will show that 
$J_h(\mH^{\rm s}-z\identh)^{-1}K_h$ converges in norm to $(H_0-z\ident)^{-1}$.

In $\widehat{\cH}_h^2$ the operator $\mH^{\rm s}$ acts as a Fourier multiplier with symbol
\begin{equation}\label{GG}
	\widetilde{G}_{0,h}^{\rm s}(\xi)=G_{0,h}^{\rm s}(\xi)+f_h(\xi)\begin{bmatrix}
		1 & 0\\
		0 & -1
	\end{bmatrix}
\end{equation}
where
\begin{equation}\label{fh}
	f_h(\xi)=\tfrac{4}{h}\sin^2(\tfrac{h}{2}\xi_1) + \tfrac{4}{h}\sin^2(\tfrac{h}{2}\xi_2).
\end{equation}
Related to the symbols $G_0$, $G_{0,h}^{\rm s}$, and $\widetilde{G}_{0,h}^{\rm s}$, we define
\begin{equation}\label{g0}
g_0(\xi)=m^2+\xi_1^2+\xi_2^2,
\end{equation}
\begin{equation}\label{g}
g_{0,h}^{\rm s}(\xi)=m^2+\tfrac{1}{h^2}\sin^2(h\xi_1)+\tfrac{1}{h^2}\sin^2(h\xi_2),
\end{equation}
and
\begin{equation}\label{gtilde}
\tilde{g}_{0,h}^{\rm s}(\xi)=\bigl(
m+f_h(\xi)
\bigr)^2+ \tfrac{1}{h^2}\sin^2(h\xi_1)+\tfrac{1}{h^2}\sin^2(h\xi_2).
\end{equation}
We have
\begin{equation} \label{eq:gsquared2D}
	G_0(\xi)^2=g_0(\xi)\bone, \quad G_{0,h}^{\rm s}(\xi)^2= 
	g_{0,h}^{\rm s}(\xi)\bone, \quad \text{and} \quad \widetilde{G}_{0,h}^{\rm s}(\xi)^2= 
	\tilde{g}_{0,h}^{\rm s}(\xi)\bone.
\end{equation}

\begin{lemma}\label{lemma42}
For $\xi\neq0$ we have
\begin{equation}\label{Gest}
\nbc{(G_0(\xi)-\I\bone)^{-1}}\leq\frac{1}{\abs{\xi}}.
\end{equation}
There exists $C>0$ such that for $h\xi\in[-\tfrac{3\pi}{2},\tfrac{3\pi}{2}]^2$ we have
\begin{equation}\label{Gtildeest}
\nbc{
(\widetilde{G}_{0,h}^{\rm s}(\xi)-\I\bone)^{-1}}
\leq\frac{C}{\abs{\xi}}.
\end{equation}
\end{lemma}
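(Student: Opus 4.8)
The plan is to prove the two estimates separately, using Lemma~\ref{lemmaG} together with the square identities in~\eqref{eq:gsquared2D}, exactly as was done in Lemma~\ref{lemma31} in the one-dimensional case. For~\eqref{Gest}, since $G_0(\xi)$ is self-adjoint, Lemma~\ref{lemmaG} gives
\begin{equation*}
\nbc{(G_0(\xi)-\I\bone)^{-1}}=\nbc{(G_0(\xi)^2+\bone)^{-1}}^{1/2}
=\bigl(1+g_0(\xi)\bigr)^{-1/2}=(1+m^2+\xi_1^2+\xi_2^2)^{-1/2}\leq\frac{1}{\abs{\xi}},
\end{equation*}
using $\abs{\xi}^2=\xi_1^2+\xi_2^2$. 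This part is immediate.

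\textbf{The modified symbol.} For~\eqref{Gtildeest} the corresponding computation via Lemma~\ref{lemmaG} and~\eqref{eq:gsquared2D} yields
\begin{equation*}
\nbc{(\widetilde{G}_{0,h}^{\rm s}(\xi)-\I\bone)^{-1}}
=\bigl(1+\tilde{g}_{0,h}^{\rm s}(\xi)\bigr)^{-1/2}
=\Bigl(1+\bigl(m+f_h(\xi)\bigr)^2+\tfrac{1}{h^2}\sin^2(h\xi_1)+\tfrac{1}{h^2}\sin^2(h\xi_2)\Bigr)^{-1/2},
\end{equation*}
so the task reduces to the scalar lower bound
\begin{equation*}
\bigl(m+f_h(\xi)\bigr)^2+\tfrac{1}{h^2}\sin^2(h\xi_1)+\tfrac{1}{h^2}\sin^2(h\xi_2)\;\geq\;c'(\xi_1^2+\xi_2^2)
\end{equation*}
for some $c'>0$, uniformly over $h\xi\in[-\tfrac{3\pi}{2},\tfrac{3\pi}{2}]^2$ and $h\in(0,1]$. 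Since $m\geq 0$ and $f_h\geq 0$, we have $(m+f_h(\xi))^2\geq f_h(\xi)^2$; and using the half-angle identity $\tfrac{1}{h^2}\sin^2(h\xi_j)=\tfrac{4}{h^2}\sin^2(\tfrac{h}{2}\xi_j)\cos^2(\tfrac{h}{2}\xi_j)$, we see both $f_h$ and the $\sin^2(h\xi_j)$-terms are built from $s_j:=\tfrac{2}{h}\abs{\sin(\tfrac{h}{2}\xi_j)}$. Concretely $f_h(\xi)=h(s_1^2+s_2^2)/2$ and $\tfrac{1}{h^2}\sin^2(h\xi_j)=s_j^2\cos^2(\tfrac{h}{2}\xi_j)$. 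The elementary inequality $\abs{\sin\theta}\geq c\abs{\theta}$ for $\abs{\theta}\leq\tfrac{3\pi}{4}$ (as used in Lemma~\ref{lemma31}) gives $s_j\geq c\abs{\xi_j}$ on the relevant range; the only loss of control is where $\cos^2(\tfrac{h}{2}\xi_j)$ is small, i.e.\ near $\abs{h\xi_j}=\pi$, but there $\tfrac{h}{2}\abs{\xi_j}$ is bounded below (by roughly $\pi/2$), so $s_j$ is bounded below by a constant \emph{independent of how large $\abs{\xi_j}$ could be} — except that $\abs{\xi_j}\le 3\pi/(2h)$ keeps things finite. The clean way to package this: split into the region where $\cos^2(\tfrac{h}{2}\xi_j)\geq \tfrac12$ for both $j$ (there $\tfrac{1}{h^2}\sin^2(h\xi_1)+\tfrac{1}{h^2}\sin^2(h\xi_2)\geq\tfrac12(s_1^2+s_2^2)\geq \tfrac{c^2}{2}\abs{\xi}^2$), and the complementary region where some $\cos^2(\tfrac{h}{2}\xi_j)<\tfrac12$, hence $\sin^2(\tfrac{h}{2}\xi_j)>\tfrac12$, so $f_h(\xi)\geq \tfrac{2}{h}>2$, and then $(m+f_h(\xi))^2\geq \tfrac{4}{h^2}\sin^2(\tfrac h2\xi_j)$ while $\abs{\xi_j}^2\le (3\pi/(2h))^2$; combining, $(m+f_h(\xi))^2$ dominates a constant times $\abs{\xi_j}^2$, and the other coordinate contributes its $\sin^2$ term, which together with $f_h^2$ again controls $\abs{\xi}^2$ up to a constant.

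\textbf{Main obstacle.} The only real subtlety is the region near $\abs{h\xi_j}=\pi$ where the symmetric difference symbol $\tfrac1h\sin(h\xi_j)$ vanishes (this is precisely the fermion-doubling point), and one must verify that the added term $f_h$ — which does \emph{not} vanish there, being $\tfrac{4}{h}\sin^2(\tfrac h2\xi_j)=\tfrac{4}{h}$ at $h\xi_j=\pi$ — compensates enough to restore the $C/\abs{\xi}$ bound. Tracking constants through the two-region split, uniformly in $h\in(0,1]$, is the bookkeeping that needs care; everything else is a direct application of Lemma~\ref{lemmaG}. I would present the scalar lower bound as the core computation and then read off~\eqref{Gtildeest} in one line.
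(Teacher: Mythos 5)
Your proposal is correct and follows essentially the same route as the paper: reduce both bounds to scalar estimates via Lemma~\ref{lemmaG} and the square identities, then prove $\tilde{g}_{0,h}^{\rm s}(\xi)\geq c\abs{\xi}^2$ by splitting each coordinate into a low-frequency region where $\tfrac{1}{h^2}\sin^2(h\xi_j)$ controls $\abs{\xi_j}^2$ and a high-frequency region near the doubling point where the added term $f_h$ (hence $(m+f_h)^2\geq f_h^2$) takes over. The only nitpick is the harmless constant slip $f_h(\xi)=h(s_1^2+s_2^2)$ rather than $h(s_1^2+s_2^2)/2$; it does not affect the argument.
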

\begin{proof}
Lemma~\ref{lemmaG} and \eqref{eq:gsquared2D} imply
\begin{equation*}
\nbc{(G_0(\xi)-\I\bone)^{-1}} = \frac{1}{(1+g_0(\xi))^{1/2}}\leq \frac{1}{\abs{\xi}},
\end{equation*}
such that~\eqref{Gest} holds.

To prove~\eqref{Gtildeest} we first use 
Lemma~\ref{lemmaG} and \eqref{eq:gsquared2D} to get
\begin{equation*}
\nbc{
(\widetilde{G}_{0,h}^{\rm s}(\xi)-\I\bone)^{-1}}
=\frac{1}{(1+\tilde{g}_{0,h}^{\rm s}(\xi))^{1/2}}.
\end{equation*}
Then
note that there exists $c>0$ such that $\abs{\sin(\theta)}\geq c \abs{\theta}$ for $\theta\in[-\tfrac{3\pi}{4},\tfrac{3\pi}{4}]$. 
Thus for $\abs{h\xi_j}\leq \tfrac{3\pi}{4}$, $j=1,2$, we have
\begin{equation*}
\tfrac{1}{h^2}\sin^2(h\xi_j)\geq c_1 \abs{\xi_j}^2,\quad j=1,2.
\end{equation*}
For $\tfrac{3\pi}{4}\leq\abs{h\xi_j}\leq \tfrac{3\pi}{2}$ we have
\begin{equation*}
\tfrac{1}{h}\sin^2(\tfrac{h}{2}\xi_j)\geq c_1 h\abs{\xi_j}^2\geq c_2\abs{\xi_j},
\quad j=1,2.
\end{equation*}
Combining these estimates we get
\begin{equation*}
\tilde{g}_{0,h}^{\rm s}(\xi)\geq c\abs{\xi}^2,
\quad h\xi\in[-\tfrac{3\pi}{2},\tfrac{3\pi}{2}]^2.
\end{equation*}
The estimate~\eqref{Gtildeest} follows.
\end{proof}

\begin{lemma}\label{lemma43}
There exists $C>0$ such that
\begin{equation*}
\nbc{(G_0(\xi)-\I\bone)^{-1}-(\widetilde{G}_{0,h}^{\rm s}(\xi)-\I\bone)^{-1}}\leq
Ch
\end{equation*}
for $h\xi\in[-\tfrac{3\pi}{2},\tfrac{3\pi}{2}]^2$.
\end{lemma}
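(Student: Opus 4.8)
The plan is to reduce everything, as in the proof of Lemma~\ref{lemma32}, to an estimate on the difference of the symbols together with the uniform resolvent bounds already established. Write the standard resolvent identity
\begin{equation*}
(G_0(\xi)-\I\bone)^{-1}-(\widetilde{G}_{0,h}^{\rm s}(\xi)-\I\bone)^{-1}
=(G_0(\xi)-\I\bone)^{-1}\bigl(\widetilde{G}_{0,h}^{\rm s}(\xi)-G_0(\xi)\bigr)(\widetilde{G}_{0,h}^{\rm s}(\xi)-\I\bone)^{-1}.
\end{equation*}
By Lemma~\ref{lemma42} both outer factors are bounded by $C/\abs{\xi}$ for $h\xi\in[-\tfrac{3\pi}{2},\tfrac{3\pi}{2}]^2$, so it suffices to show that the middle factor $\widetilde{G}_{0,h}^{\rm s}(\xi)-G_0(\xi)$ has $\cB(\bC^2)$-norm bounded by $Ch\abs{\xi}^2$ on that region. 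Combining the three bounds then gives $Ch\abs{\xi}^2\cdot\abs{\xi}^{-2}=Ch$, as claimed.

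It remains to estimate the entries of $\widetilde{G}_{0,h}^{\rm s}(\xi)-G_0(\xi)$. By \eqref{GG}, \eqref{2DG0}, and \eqref{2DG0h}, the off-diagonal entries are $\tfrac{1}{h}\sin(h\xi_j)-\xi_j$ (times $\pm\I$), while the diagonal entries are $\pm f_h(\xi)$. For the off-diagonal part I would Taylor-expand $\sin$ exactly as in Lemma~\ref{lemma32}: $\sin(h\xi_j)=h\xi_j+(h\xi_j)^3\int_0^1(\cdots)$, giving $\abs{\tfrac{1}{h}\sin(h\xi_j)-\xi_j}\leq Ch^2\abs{\xi_j}^3$, which is $\leq Ch\abs{\xi}^2$ precisely when $h\abs{\xi_j}\lesssim1$, i.e.\ on the region $h\xi\in[-\tfrac{3\pi}{2},\tfrac{3\pi}{2}]^2$. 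For the diagonal part, by \eqref{fh} and the elementary bound $\sin^2(\tfrac{h}{2}\xi_j)\leq\tfrac{h^2}{4}\xi_j^2$ one gets $\abs{f_h(\xi)}\leq h(\xi_1^2+\xi_2^2)=h\abs{\xi}^2$ directly, with no restriction on the domain needed. Taking the maximum of the entrywise bounds controls the matrix norm up to a constant.

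The main (and only real) obstacle here is conceptual rather than technical: one must notice that the extra term $f_h(\xi)$ introduced by the modification, although it is \emph{not} small in absolute terms (it grows like $\tfrac{1}{h}$ near the corners of $\bT_h^2$), is nonetheless of size $O(h\abs{\xi}^2)$ on the relevant region — the same order as the off-diagonal correction — so it does not spoil the $Ch$ bound. This is exactly the mechanism that makes the modification harmless for the symbol comparison on $[-\tfrac{3\pi}{2},\tfrac{3\pi}{2}]^2$ while still curing the non-convergence outside it. Everything else is routine: the resolvent identity, the Taylor remainder, and passing from entrywise bounds to the $\cB(\bC^2)$-norm via equivalence of norms on a fixed finite-dimensional space.
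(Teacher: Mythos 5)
Your proposal is correct and follows essentially the same route as the paper: the resolvent identity, the bound $\abs{f_h(\xi)}\leq h\abs{\xi}^2$ for the diagonal entries, the Taylor-remainder bound $Ch^2\abs{\xi_j}^3$ for the off-diagonal entries, and the two resolvent bounds from Lemma~\ref{lemma42}. The only cosmetic difference is that the paper keeps the $h^2\abs{\xi}^3$ term and absorbs it at the end via $h^2\abs{\xi}\leq Ch$ on the region, whereas you fold the same observation into the entrywise bound.
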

\begin{proof}
We have
\begin{equation*}
(G_0(\xi)-\I\bone)^{-1}
-(\widetilde{G}_{0,h}^{\rm s}(\xi)-\I\bone)^{-1}=
(G_0(\xi)-\I\bone)^{-1}\bigl(
\widetilde{G}_{0,h}^{\rm s}(\xi)-G_0(\xi)
\bigr)(\widetilde{G}_{0,h}^{\rm s}(\xi)-\I\bone)^{-1}.
\end{equation*}
The $11$ entry in $\widetilde{G}_{0,h}^{\rm s}(\xi)-G_0(\xi)$ is estimated using $\abs{\sin(\theta)}\leq\abs{\theta}$. We get
\begin{equation}\label{11-entry}
\abs{[\widetilde{G}_{0,h}^{\rm s}(\xi)-G_0(\xi)]_{11}}
\leq\abs{f_h(\xi)}\leq C h\abs{\xi}^2.
\end{equation}
The same estimate holds for the $22$ entry.

Taylor's formula yields 
\begin{equation}\label{taylor-sin}
\sin(\theta)=\theta-\frac12 \theta^3\int_0^1\cos(t\theta)(1-t)^2\di t.
\end{equation}
This result implies the estimates
\begin{equation*}
\abs{\tfrac{1}{h}\sin(h\xi_j)-\xi_j}\leq C h^2\abs{\xi_j}^3,\quad
j=1,2,
\end{equation*}
that are used to estimate the $12$ and $21$ entries in
$\widetilde{G}_{0,h}^{\rm s}(\xi)-G_0(\xi)$.

Combining these results with the estimates from Lemma~\ref{lemma42} we get
\begin{equation*}
\nbc{(G_0(\xi)-\I\bone)^{-1}-(\widetilde{G}_{0,h}^{\rm s}(\xi)-\I\bone)^{-1}}\leq
C\frac{h\abs{\xi}^2+h^2\abs{\xi}^3}{\abs{\xi}^2}
=C(h+h^2\abs{\xi})\leq Ch,
\end{equation*}
for $h\xi\in[-\tfrac{3\pi}{2},\tfrac{3\pi}{2}]^2$.
\end{proof} 
We state~\cite[Lemma~3.3]{CGJ} in a form adapted to the Dirac operators and outline its proof. 
\begin{lemma}\label{lemmaA3}
Let $d=1,2$, or $3$. Let $H_0$ be the free Dirac operator in $\cH^d$. 
Let $\vp_0$ and $\psi_0$ satisfy Assumption~{\rm\ref{assumpA2}}. 
Let $K\subset(\bC\setminus\bR)\cup(-m,m)$ be compact. Then there exists $C>0$ such that
\begin{equation*}\label{eqA9}
\norm{(J_hK_h-\ident)(H_0-z\ident)^{-1}}_{\cB(\cH^d)}\leq C h,
\end{equation*}
for all $z\in K$ and $h\in(0,1]$.
\end{lemma}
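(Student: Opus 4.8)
The plan is to prove Lemma~\ref{lemmaA3} by reducing the estimate to a purely Fourier-analytic statement about how the multiplication operators $\cF(J_hK_h-\ident)\cF^*$ behave against the resolvent symbol $(G_0(\xi)-z\bone)^{-1}$, uniformly for $z\in K$. First I would recall from \cite[section~2]{CGJ} the explicit action of $J_h$ and $K_h$ in Fourier space: because of the support condition in Assumption~\ref{assumpA2}, $\cF J_h$ and $K_h\cF^*$ are (up to the standard unitary identification between $L^2(\bT_h^d)$ and the relevant subspace of $L^2(\bR^d)$) multiplication by the functions built from $\widehat{\vp}_0(h\,\cdot)$ and $\widehat{\psi}_0(h\,\cdot)$, periodized appropriately. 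The key input is that $J_hK_h$ acts in Fourier space, on the region $\xi\in\bT_{h}^d$ where $\abs{h\xi}\le\pi$, essentially as multiplication by a scalar $\Phi_h(\xi)$ which satisfies $\abs{\Phi_h(\xi)-1}\le C\abs{h\xi}$ for $\abs{h\xi}\le \pi/2$, while for $\abs{h\xi}$ between $\pi/2$ and the support bound the factor stays bounded; and outside $\bT_h^d$, i.e. for $\abs{\xi_j}>\pi/h$, the operator $J_hK_h-\ident$ acts as $-\ident$ on the complementary region. This is exactly the content of \cite[Lemma~3.3]{CGJ}; the only adaptation for the Dirac setting is that the multiplier is matrix-valued, but $J_hK_h$ acts as the same scalar $\Phi_h(\xi)$ tensored with the identity on $\bC^{\nu(d)}$, so the scalar estimates carry over verbatim.

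Next I would split the norm to be estimated into a low-frequency and a high-frequency contribution. On the low-frequency region $\{\abs{h\xi}\le\pi/2\}$ (or up to the support bound of $\widehat{\vp}_0,\widehat{\psi}_0$), the symbol of $(J_hK_h-\ident)(H_0-z\ident)^{-1}$ is $(\Phi_h(\xi)-1)(G_0(\xi)-z\bone)^{-1}$, whose $\cB(\bC^{\nu(d)})$-norm is bounded, using Lemma~\ref{lemmaG} (or the analogous computation with general $z$ in place of $\I$) and $G_0(\xi)^2=g_0(\xi)\bone$, by $C\abs{h\xi}\,(1+g_0(\xi))^{-1/2}\le C\abs{h\xi}\,\jap{\xi}^{-1}\le Ch$ since $\abs{\xi}\le C\jap{\xi}$. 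On the high-frequency region $\{\abs{\xi_j}>\pi/h \text{ for some } j\}$ the operator $J_hK_h-\ident$ has norm at most $2$ there (it is a difference of a projection and the identity), while $\norm{(G_0(\xi)-z\bone)^{-1}}_{\cB(\bC^{\nu(d)})}\le C(1+\abs{\xi})^{-1}\le Ch$ because $\abs{\xi}\ge\pi/h$ on that set; the constant is uniform in $z\in K$ since $K$ is a compact subset of the resolvent set of every $G_0(\xi)$ (the spectrum of $G_0(\xi)$ is $\{\pm\sqrt{g_0(\xi)}\}\subset(\bC\setminus\bR)\cup((-\infty,-m]\cup[m,\infty))$, and $K$ avoids this set with a positive distance depending only on $K$). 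Then Lemma~\ref{lemma21}, applied on each region, converts these pointwise matrix-norm bounds into the operator-norm bound on $\cB(\cH^d)$, giving the claimed $Ch$.

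The main obstacle is bookkeeping rather than conceptual: one must be careful about the three-way decomposition of frequency space induced by the support of $\widehat{\vp}_0$ (the region $[-\pi/2,\pi/2]^d$ where $\Phi_h$ is close to $1$, the transition region up to $[-3\pi/2,3\pi/2]^d$ where $\Phi_h$ is merely bounded, and the region beyond $\bT_h^d$ where the embedded subspace contributes nothing), and to check that in the transition region one still gains a factor $h$ — there $\abs{\xi}\ge c/h$, so the resolvent decay $\jap{\xi}^{-1}\le Ch$ already suffices and no smallness of $\Phi_h-1$ is needed. I would also need the elementary fact, replacing Lemma~\ref{lemmaG}, that for self-adjoint $G$ with $G^2=g\bone$ and $z\notin\bR\cup\{\pm\sqrt g\}$ one has $\norm{(G-z\bone)^{-1}}_{\cB(\bC^{\nu(d)})}=\max(\abs{\sqrt g-z}^{-1},\abs{\sqrt g+z}^{-1})$, from which the uniform-in-$z$, $O(\jap{\xi}^{-1})$ bound is immediate. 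Beyond these points the argument is a direct transcription of the proof of \cite[Lemma~3.3]{CGJ}, which is why only an outline is given here.
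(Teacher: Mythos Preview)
Your outline tracks the paper's approach in spirit, but there is a structural inaccuracy in how you describe $\cF J_hK_h\cF^*$ that leaves a real gap. The operator $J_hK_h$ is \emph{not} a Fourier multiplier: for $u$ in the Schwartz class one has
\[
\bigl[\cF J_hK_h\cF^*u\bigr](\xi)=(2\pi)^d\widehat{\vp}_0(h\xi)\sum_{j\in\bZ^d}\overline{\widehat{\psi}_0(h\xi+2\pi j)}\,u\bigl(\xi+\tfrac{2\pi}{h}j\bigr),
\]
so different frequencies are coupled; the $j\neq0$ terms are the aliasing induced by sampling. Treating $J_hK_h$ as multiplication by a scalar $\Phi_h(\xi)$ suppresses these terms, and consequently Lemma~\ref{lemma21} (which concerns genuine multiplication operators) is not applicable in the way you propose. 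The paper applies the operator to a test function, writes out the sum above explicitly, and bounds each $j$ with $\abs{j}\le1$ separately: for $\abs{j}=1$ the support conditions force $\abs{\xi+\tfrac{2\pi}{h}j}\ge c_0/h$, and then~\eqref{Gest} gives the factor $h$. Your transition-region remark is the right \emph{idea} for these aliasing terms, but without the explicit representation you cannot see why only finitely many $j$ contribute or why each shifted frequency is large.

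Second, the low-frequency mechanism is not a Lipschitz estimate $\abs{\Phi_h(\xi)-1}\le C\abs{h\xi}$; Assumption~\ref{assumpA2} gives no regularity of $\widehat{\vp}_0,\widehat{\psi}_0$, so that bound is unjustified as stated. What actually happens is that for $h\xi\in[-\tfrac{\pi}{2},\tfrac{\pi}{2}]^d$ the support conditions reduce the sum to the single term $j=0$, and biorthogonality (via \cite[Lemma~2.7]{CGJ}) gives $(2\pi)^d\widehat{\vp}_0(h\xi)\overline{\widehat{\psi}_0(h\xi)}=1$ \emph{exactly}, so this term and the identity cancel outright. Once these two points are made precise, the remaining ingredients---boundedness of $\widehat{\vp}_0,\widehat{\psi}_0$, the decay $\norm{(G_0(\xi)-\I\bone)^{-1}}_{\cB(\bC^{\nu(d)})}\le\abs{\xi}^{-1}$, and the reduction from general $z\in K$ to $z=\I$ via the uniformly bounded factor $(H_0-\I\ident)(H_0-z\ident)^{-1}$---are exactly as you describe and match the paper's proof.
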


\begin{proof}
We assume $d=2$. It suffices to consider $K=\{\I\}$, since $(H_0-\I\ident)(H_0-z\ident)^{-1}$ is bounded uniformly in norm for $z\in K$.
Let $u\in\cS(\bR^2)\otimes\bC^2$, the Schwartz space. Going through the computations in~\cite[section~2]{CGJ} using that $\vp_0$ and $\psi_0$ are scalar functions, we get the result
\begin{align*}
\bigl(\cF(J_hK_h-\ident)&(H_0-\I\ident)^{-1}\cF^{\ast}u\bigr)(\xi) &
\\
&=
(2\pi)^d\widehat{\vp}_0(h\xi)
\sum_{j\in\bZ^2}\overline{\widehat{\psi}_0(h\xi+2\pi j)}(G_0(\xi+\tfrac{2\pi}{h}j)-\I\bone)^{-1}u(\xi+\tfrac{2\pi}{h}j)\\
&\phantom{{}={}} - (G_0(\xi)-\I\bone)^{-1}u(\xi), \quad \xi\in\bR^2.
\end{align*}
Here $G_0$ is given by~\eqref{2DG0}. If $h\xi\in[-\tfrac{\pi}{2},\tfrac{\pi}{2}]^2$ then the $j=0$ term is the only non-zero term in the sum. Using~\cite[Lemma~2.7]{CGJ} we conclude that this term and the last term cancel. For $h\xi\notin[-\tfrac{\pi}{2},\tfrac{\pi}{2}]^2$ we use Lemma~\ref{lemma42} to get $\nbc{(G_0(\xi)-\I\bone)^{-1}}\leq C h$, $0<h\leq1$. Since $\widehat{\vp}_0$ and $\widehat{\psi}_0$ are assumed essentially bounded, we conclude that the $j=0$ term in the sum and the last term are bounded by 
$C h\norm{u}_{\widehat{\cH}^2}$. 

Due to the support assumptions on $\widehat{\vp}_0$ and $\widehat{\psi}_0$, only the terms in the sum with $\abs{j}\leq1$ contribute. Assume $\abs{j}=1$ and $h\xi\in \supp(\widehat{\vp}_0)\cap
\supp(\widehat{\psi}_0(\cdot+2\pi j))$. Then for some $c_0>0$ we have $\abs{\xi+\frac{2\pi}{h}j}\geq \frac{c_0}{h}$, which by Lemma~\ref{lemma42} implies
\begin{equation*}
\nbc{(G_0(\xi+\tfrac{2\pi}{h}j)-\I\bone)^{-1}}\leq C h.
\end{equation*}
Again using the boundedness of $\widehat{\vp}_0$ and $\widehat{\psi}_0$ we conclude that
\begin{equation*}
\bigl\lvert(2\pi)^d\widehat{\vp}_0(h\xi)
\overline{\widehat{\psi}_0(h\xi+2\pi j)}(G_0(\xi+\tfrac{2\pi}{h}j)-\I\bone)^{-1}u(\xi+\tfrac{2\pi}{h}j)
\bigr\rvert
\leq C h\abs{u(\xi+\tfrac{2\pi}{h}j)},
\end{equation*}
$0<h\leq1$, $\xi\in\bR^2$. Squaring and integrating the result gives an estimate of the form $Ch\norm{u}_{\widehat{\cH}^2}$. By density, adding up the finite number of terms corresponding to $\abs{j}\leq 1$ gives the final result.
\end{proof}

We have now established the estimates necessary to repeat the arguments from~\cite{CGJ}.
Using the embedding operators $J_h$ and discretization operators $K_h$ defined in section~\ref{sectPrelim}, we state the result and then show in some detail how the arguments in~\cite{CGJ} are adapted to the Dirac case.

\begin{theorem}\label{thm44}
Let $K\subset (\bC\setminus \bR)\cup(-m,m)$ be compact. Then there exists $C>0$ such that
\begin{equation*}
\norm{
J_h(\mH^{\rm s}-z\identh)^{-1}K_h-(H_0-z\ident)^{-1}
}_{\cB(\cH^2)} \leq C h
\end{equation*}
for all $z\in K$ and $h\in(0,1]$.
\end{theorem}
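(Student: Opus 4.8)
The plan is to reduce the statement to the Fourier-multiplier estimates already proven in Lemmas~\ref{lemma42} and~\ref{lemma43}, combined with the approximation property of the embedding and discretization operators from Lemma~\ref{lemmaA3}, following the scheme of~\cite{CGJ}. As in Lemma~\ref{lemmaA3}, it suffices to treat $z=\I$, since $(H_0-\I\ident)(H_0-z\ident)^{-1}$ and its discrete analogue $(\mH^{\rm s}-\I\identh)(\mH^{\rm s}-z\identh)^{-1}$ are bounded uniformly in operator norm for $z$ in the compact set $K\subset(\bC\setminus\bR)\cup(-m,m)$; here one uses that $\mH^{\rm s}$ is self-adjoint so its resolvent at nonreal $z$ is controlled by $1/\abs{\im z}$, while on $(-m,m)$ one invokes that the symbol $\tghs(\xi)=(m+f_h(\xi))^2+\frac1{h^2}\sin^2(h\xi_1)+\frac1{h^2}\sin^2(h\xi_2)\geq m^2$ keeps $\I$ a valid reference point uniformly in $h$.

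First I would write the telescoping identity
\begin{align*}
J_h(\mH^{\rm s}-\I\identh)^{-1}K_h-(H_0-\I\ident)^{-1}
&=J_h\bigl[(\mH^{\rm s}-\I\identh)^{-1}K_h-K_h(H_0-\I\ident)^{-1}\bigr]\\
&\phantom{{}={}}+(J_hK_h-\ident)(H_0-\I\ident)^{-1}.
\end{align*}
The second term is $O(h)$ directly by Lemma~\ref{lemmaA3}. For the first term, since $\norm{J_h}$ is uniformly bounded, it suffices to bound $(\mH^{\rm s}-\I\identh)^{-1}K_h-K_h(H_0-\I\ident)^{-1}$ in $\cB(\cH^2,\cH^2_h)$. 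Here I would pass to Fourier space via $\cF$ and $\sfF_h$: on $\widehat{\cH}_h^2=L^2(\bT_h^2)\otimes\bC^2$ the operator $\mH^{\rm s}$ is multiplication by $\tGhs(\xi)$, and the transfer between the continuum variable $\xi\in\bR^2$ and the torus variable in $\bT_h^2$ is mediated by $\widehat\vp_0$ and $\widehat\psi_0$ exactly as in the computation reproduced in the proof of Lemma~\ref{lemmaA3}. Using biorthogonality and the support/lower-bound conditions of Assumption~\ref{assumpA2}, the difference localizes, modulo an $O(h)$ tail coming from the $\abs{j}\geq1$ Brillouin-zone copies (controlled again via the decay $\nbc{(\tGhs(\xi)-\I\bone)^{-1}}\leq C/\abs{\xi}$ of Lemma~\ref{lemma42}), to the principal term on $h\xi\in[-\tfrac\pi2,\tfrac\pi2]^2$, where it equals $\widehat\vp_0(h\xi)\overline{\widehat\psi_0(h\xi)}$ times
\[
(\tGhs(\xi)-\I\bone)^{-1}-(G_0(\xi)-\I\bone)^{-1},
\]
which is $O(h)$ uniformly by Lemma~\ref{lemma43}. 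Squaring, integrating, and summing the finitely many $\abs{j}\leq1$ contributions yields the bound $Ch\norm{u}_{\cH^2}$.

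The main obstacle is the bookkeeping in the Fourier-space comparison of the two resolvents living on different spaces $\bR^2$ versus $\bT_h^2$: one must carefully track how $K_h$ intertwines multiplication by $G_0$ on $\widehat{\cH}^2$ with multiplication by $\tGhs$ on $\widehat{\cH}_h^2$, handle the aliasing sum over $j\in\bZ^2$, and verify that the cross terms where the continuum symbol is large but the discrete one is small (or vice versa) are genuinely $O(h)$ rather than $O(1)$ — this is precisely the point where the modification $-h\Delta_h\sigma_3$ matters, because it is what upgrades the bound $\nbc{(\widetilde G_{0,h}^{\rm s}(\xi)-\I\bone)^{-1}}$ from merely $O(1)$ near the corners of $\bT_h^2$ (where $\sin(h\xi_j)$ vanishes) to $O(1/\abs{\xi})$, the estimate established in Lemma~\ref{lemma42} and used decisively in Lemma~\ref{lemma43}. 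Since all of these steps are verbatim the same as in~\cite{CGJ} once Lemmas~\ref{lemma42} and~\ref{lemma43} are in hand, I would present the adaptation in the same level of detail as the proof of Lemma~\ref{lemmaA3} and refer to~\cite{CGJ} for the remaining routine estimates.
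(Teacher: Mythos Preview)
Your proposal is correct and follows essentially the same route as the paper: the same telescoping into $(J_hK_h-\ident)(H_0-\I\ident)^{-1}$ plus the main Fourier-space comparison, the same use of Lemma~\ref{lemmaA3} for the first piece, the same aliasing decomposition over $j\in\bZ^2$ with only $\abs{j}\leq 1$ contributing, and the same reliance on Lemmas~\ref{lemma42} and~\ref{lemma43} for the $j=0$ and $\abs{j}=1$ terms. The only cosmetic differences are that the paper keeps $J_h$ inside and works entirely in $\widehat{\cH}^2$ (rather than factoring $J_h$ out and estimating in $\cB(\cH^2,\cH^2_h)$), treats the $j=0$ term on the full support $h\xi\in[-\tfrac{3\pi}{2},\tfrac{3\pi}{2}]^2$ rather than just $[-\tfrac{\pi}{2},\tfrac{\pi}{2}]^2$, and for general $z\in K$ the paper does not reduce to $z=\I$ at the operator level but instead re-establishes the symbol estimate $\nbc{(\tGhs(\xi)-z\bone)^{-1}-(G_0(\xi)-z\bone)^{-1}}\leq Ch$ directly from Lemma~\ref{lemma43} via the uniformly bounded factors $(G_0(\xi)-z\bone)(G_0(\xi)-\I\bone)^{-1}$ and $(\tGhs(\xi)-z\bone)(\tGhs(\xi)-\I\bone)^{-1}$.
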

\begin{proof}
We start by proving the result for $K=\{\I\}$. We have
\begin{align*}
J_h (\mH^{\rm s}-\I\identh)^{-1}K_h-
(H_0-\I\ident)^{-1}
&= 
J_h(\mH^{\rm s}-\I\identh)^{-1}K_h-
J_hK_h (H_0-\I\ident)^{-1}
\\
&\phantom{{}={}}+(J_hK_h-\ident)(H_0-\I\ident)^{-1}. 
\end{align*}
The last term is estimated using Lemma~\ref{lemmaA3}.

To estimate the remaining terms we go to Fourier space. 
We have
\begin{align}
\cF\bigl(J_h(\mH^{\rm s}-\I\identh)^{-1}K_h-
J_hK_h(H_0-\I\ident)^{-1}\bigr)\cF^{\ast} 
&= \cF J_h\sfF_h^{\ast}\sfF_h(\mH^{\rm s}-\I\identh)^{-1}
\sfF_h^{\ast}\sfF_h
K_h\cF^{\ast}
\notag
\\
&\phantom{{}={}}
-\cF J_hK_h\cF^{\ast}\cF (H_0-\I\ident)^{-1}\cF^{\ast}.
\label{rewrite1}
\end{align}
Let $u\in\cS(\bR^2)\otimes\bC^2$. We now use a modified version of the computation leading to~\cite[equation~(2.11)]{CGJ}. For the first term we get
\begin{multline}
\bigl[\cF J_h\sfF_h^{\ast}\sfF_h
(\mH^{\rm s}-\I\identh)^{-1}\sfF_h^{\ast}\sfF_h
K_h\cF^{\ast}u\bigr](\xi)\\
=(2\pi)^d\widehat{\vp}_0(h\xi)
(\widetilde{G}_{0,h}^{\rm s}(\xi)-\I\bone)^{-1}\sum_{j\in\bZ^2}
\overline{\widehat{\psi}_0(h\xi+2\pi j)}
u(\xi+\tfrac{2\pi}{h}j).\label{term1}
\end{multline}
For the second term we get
\begin{multline}
\bigl[\cF J_hK_h\cF^{\ast} 
\cF (H_0-\I\ident)^{-1}\cF^{\ast}u\bigr](\xi)
\\
=(2\pi)^d\widehat{\vp}_0(h\xi)
\sum_{j\in\bZ^2}
\overline{\widehat{\psi}_0(h\xi+2\pi j)}
(G_0(\xi+\tfrac{2\pi}{h}j)-\I\bone)^{-1}
u(\xi+\tfrac{2\pi}{h}j).\label{term2}
\end{multline}
We need to rewrite~\eqref{term1}. First we note that 
\begin{equation*}
\widetilde{G}_{0,h}^{\rm s}(\xi)
=\widetilde{G}_{0,h}^{\rm s}(\xi+\tfrac{2\pi}{h}j),
\quad j\in\bZ^2.
\end{equation*}
Next we can rewrite part of~\eqref{term1} as follows, since $\widehat{\psi}_0$ is a scalar-valued function:
\begin{multline*}
(\widetilde{G}_{0,h}^{\rm s}(\xi)-\I\bone)^{-1}\sum_{j\in\bZ^2}
\overline{\widehat{\psi}_0(h\xi+2\pi j)}
u(\xi+\tfrac{2\pi}{h}j)
\\
=\sum_{j\in\bZ^2}
\overline{\widehat{\psi}_0(h\xi+2\pi j)}
(\widetilde{G}_{0,h}^{\rm s}(\xi+\tfrac{2\pi}{h}j)-\I\bone)^{-1}
u(\xi+\tfrac{2\pi}{h}j).
\end{multline*}
We now insert~\eqref{term2} and the rewritten~\eqref{term1}
into~\eqref{rewrite1} to get
\begin{equation*}
\cF\bigl(
J_h(\mH^{\rm s}-\I\identh)^{-1}K_h-J_hK_h (H_0-\I\ident)^{-1}
\bigr)\cF^{\ast}u=\sum_{j\in\bZ^2}q_{j,h},
\end{equation*}
where
\begin{multline*}
q_{j,h}(\xi)=
(2\pi)^d\widehat{\vp}_0(h\xi)\overline{\widehat{\psi}_0(h\xi+2\pi j)}
\\
\times\bigl(
(\widetilde{G}_{0,h}^{\rm s}(\xi+\tfrac{2\pi}{h}j)-\I\bone)^{-1}
-(G_0(\xi+\tfrac{2\pi}{h}j)-\I\bone)^{-1}
\bigr)u(\xi+\tfrac{2\pi}{h}j).
\end{multline*}
Due to the support conditions on $\widehat{\vp}_0$ and $\widehat{\psi}_0$ in Assumption~\ref{assumpA2}, only terms with $\abs{j}\leq1$ contribute. First consider $j=0$. We have assumed
$\supp(\widehat{\vp}_0)$, $\supp(\widehat{\psi}_0)\subseteq [-\tfrac{3\pi}{2},\tfrac{3\pi}{2}]^2$. Using Lemma~\ref{lemma43} and Assumption~\ref{assumpA2} we get
\begin{equation*}
\norm{q_{0,h}}_{\widehat{\cH}^2}\leq C h 
\norm{u}_{\widehat{\cH}^2}.
\end{equation*}
Fix $j\in\bZ^2$ with $\abs{j}=1$. Define
\begin{equation*}
	M=\supp(\widehat{\vp}_0)\cap\supp(\widehat{\psi}_0(\,\cdot+2\pi j)).
\end{equation*}
From the supports of $\widehat{\vp}_0$ and $\widehat{\psi}_0$ we have
\begin{equation*}
	M\subseteq\{\zeta\in[-\tfrac{3\pi}{2},\tfrac{3\pi}{2}]^2 : \abs{\zeta+2\pi j}\geq \tfrac{\pi}{2}\}.
\end{equation*}
Assume $h\xi\in M$, then Lemma~\ref{lemma42} implies
\begin{equation*}
\nbc{(G_0(\xi+\tfrac{2\pi}{h}j)-\I\bone)^{-1}}\leq C h,
\end{equation*}
and

\begin{equation*}
\nbc{(\widetilde{G}_{0,h}^{\rm s}(\xi+\tfrac{2\pi}{h}j)
-\I\bone)^{-1}}\leq C h.
\end{equation*}
These estimates imply
\begin{equation*}
\norm{q_{j,h}}_{\widehat{\cH}^2}\leq C h 
\norm{u}_{\widehat{\cH}^2},\quad \abs{j}=1.
\end{equation*}
Since we have a finite number of $j$ with $\abs{j}\leq1$ and since $u$ is in a dense set, the estimate in Theorem~\ref{thm44} follows in the $K=\{\I\}$ case. For the general case we use the estimates
\begin{equation*}
\nbc{(G_0(\xi)-z\bone)(G_0(\xi)-\I\bone)^{-1}}\leq C
\end{equation*}
and
\begin{equation*}
\nbc{(\widetilde{G}_{0,h}^{\rm s}(\xi)-z\bone)
(\widetilde{G}_{0,h}^{\rm s}(\xi)-\I\bone)^{-1}}\leq C
\end{equation*}
for $z\in K$ where $K\subset(\bC\setminus\bR)\cup(-m,m)$ is compact. Combining these estimates with Lemma~\ref{lemma43} we get for $h\xi\in[-\tfrac{3\pi}{2},\tfrac{3\pi}{2}]^2$
\begin{equation*}
\nbc{(\widetilde{G}_{0,h}^{\rm s}(\xi)-z\bone)^{-1}
-(G_0(\xi)-z\bone)^{-1}}\leq C h,\quad z\in K.
\end{equation*}
This is the crucial estimate used above. Further details are omitted.
\end{proof}

Next we show that, without modification to the symmetric difference model, the norm convergence stated in the theorem fails.

\begin{lemma}\label{lemma45}
There exists $c>0$ such that 
\begin{equation}\label{LB}
\max_{\xi\in\bT_h^2}\norm{(G_{0,h}^{\rm s}(\xi)-\I\bone)^{-1}
-(\widetilde{G}_{0,h}^{\rm s}(\xi)-\I\bone)^{-1}}_{\cB(\bC^2)}\geq c
\end{equation}
for all $h\in(0,1]$.
\end{lemma}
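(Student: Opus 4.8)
The plan is to mimic the argument of Lemma~\ref{lemma34}: produce an explicit factorization of the difference of the two resolvents, reduce it to a scalar quantity using the fact that $\tfrac12$-scaled $(G-\I\bone)$ are proportional to unitaries, and then evaluate at a cleverly chosen $\xi$ where the scalar does not go to zero.

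First I would write, in analogy with \eqref{eq:lem34eq},
\begin{align*}
\bigl(G_{0,h}^{\rm s}(\xi)-\I\bone\bigr)
\bigl[
(\widetilde{G}_{0,h}^{\rm s}(\xi)-\I\bone)^{-1}&-
(G_{0,h}^{\rm s}(\xi)-\I\bone)^{-1}
\bigr]
\bigl(\widetilde{G}_{0,h}^{\rm s}(\xi)-\I\bone\bigr)\\
&=G_{0,h}^{\rm s}(\xi)-\widetilde{G}_{0,h}^{\rm s}(\xi)
=-f_h(\xi)\sigma_3,
\end{align*}
using \eqref{GG}. By \eqref{eq:gsquared2D}, the matrices $(1+g_{0,h}^{\rm s}(\xi))^{-1/2}(G_{0,h}^{\rm s}(\xi)-\I\bone)$ and $(1+\tilde g_{0,h}^{\rm s}(\xi))^{-1/2}(\widetilde{G}_{0,h}^{\rm s}(\xi)-\I\bone)$ are unitary for all $\xi\in\bT_h^2$, and $\sigma_3$ is unitary; hence taking $\cB(\bC^2)$-norms gives the exact identity
\begin{equation*}
\nbc{(G_{0,h}^{\rm s}(\xi)-\I\bone)^{-1}-(\widetilde{G}_{0,h}^{\rm s}(\xi)-\I\bone)^{-1}}
=\frac{f_h(\xi)}{(1+g_{0,h}^{\rm s}(\xi))^{1/2}(1+\tilde g_{0,h}^{\rm s}(\xi))^{1/2}}.
\end{equation*}

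Next I would pick a test point. Taking $\xi=(\xi_1,\xi_2)$ with $h\xi_1=\pi$ and $\xi_2=0$ makes $\sin(h\xi_1)=0$, so that $g_{0,h}^{\rm s}(\xi)=m^2$, while $f_h(\xi)=\tfrac{4}{h}\sin^2(\tfrac\pi2)=\tfrac4h$, so that $\tilde g_{0,h}^{\rm s}(\xi)=(m+\tfrac4h)^2$. The right-hand side becomes
\begin{equation*}
\frac{4/h}{(1+m^2)^{1/2}\bigl(1+(m+4/h)^2\bigr)^{1/2}}
=\frac{4}{(1+m^2)^{1/2}\bigl(h^2+(mh+4)^2\bigr)^{1/2}}.
\end{equation*}
For $0<h\le1$ the denominator is bounded above by $(1+m^2)^{1/2}(1+(m+4)^2)^{1/2}$, so this quantity is bounded below by a positive constant $c$ depending only on $m$; since $(\xi_1,\xi_2)=(\pi/h,0)\in\bT_h^2$, inserting this point into the maximum in \eqref{LB} finishes the proof.

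I do not expect a serious obstacle here: the only subtlety is checking that $(G-\I\bone)$ is a scalar multiple of a unitary, which is immediate from the square identities \eqref{eq:gsquared2D}, and then verifying the chosen point lies in $\bT_h^2$. One should just be slightly careful that $f_h(\xi)$ does not vanish at the chosen point while $g_{0,h}^{\rm s}$ stays bounded — which is exactly why $h\xi_1=\pi$, $\xi_2=0$ is a good choice: the symmetric difference symbol sees the Nyquist frequency as zero, but the added discrete-Laplacian term $f_h$ blows up like $1/h$, which is precisely the mechanism behind the non-convergence. This is the analogue in the $\widetilde{G}^{\rm s}$ vs.\ $G^{\rm s}$ comparison of the phenomenon already exploited in Lemma~\ref{lemma34}.

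\begin{proof}
We argue as in the proof of Lemma~\ref{lemma34}. From \eqref{GG} we have
\begin{align*}
\bigl(G_{0,h}^{\rm s}(\xi)-\I\bone\bigr)
\bigl[
(\widetilde{G}_{0,h}^{\rm s}(\xi)-\I\bone)^{-1}&-
(G_{0,h}^{\rm s}(\xi)-\I\bone)^{-1}
\bigr]
\bigl(\widetilde{G}_{0,h}^{\rm s}(\xi)-\I\bone\bigr)\\
&=G_{0,h}^{\rm s}(\xi)-\widetilde{G}_{0,h}^{\rm s}(\xi)
=-f_h(\xi)\sigma_3.
\end{align*}
By \eqref{eq:gsquared2D}, the matrices $(1+g_{0,h}^{\rm s}(\xi))^{-1/2}(G_{0,h}^{\rm s}(\xi)-\I\bone)$ and $(1+\tilde g_{0,h}^{\rm s}(\xi))^{-1/2}(\widetilde{G}_{0,h}^{\rm s}(\xi)-\I\bone)$ are unitary for all $\xi\in\bT_h^2$. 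Since $\sigma_3$ is unitary, taking $\cB(\bC^2)$-norms yields
\begin{equation*}
\nbc{(G_{0,h}^{\rm s}(\xi)-\I\bone)^{-1}-(\widetilde{G}_{0,h}^{\rm s}(\xi)-\I\bone)^{-1}}
=\frac{f_h(\xi)}{(1+g_{0,h}^{\rm s}(\xi))^{1/2}(1+\tilde g_{0,h}^{\rm s}(\xi))^{1/2}}.
\end{equation*}
Now choose $\xi=(\pi/h,0)\in\bT_h^2$. Then $\sin(h\xi_1)=0$ and $\sin(h\xi_2)=0$, so $g_{0,h}^{\rm s}(\xi)=m^2$, while $f_h(\xi)=\tfrac{4}{h}\sin^2(\tfrac{\pi}{2})=\tfrac{4}{h}$ by \eqref{fh}, and hence $\tilde g_{0,h}^{\rm s}(\xi)=(m+\tfrac{4}{h})^2$ by \eqref{gtilde}. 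The right-hand side above becomes
\begin{equation*}
\frac{4/h}{(1+m^2)^{1/2}\bigl(1+(m+4/h)^2\bigr)^{1/2}}
=\frac{4}{(1+m^2)^{1/2}\bigl(h^2+(mh+4)^2\bigr)^{1/2}}.
\end{equation*}
For $0<h\leq1$ the denominator is at most $(1+m^2)^{1/2}(1+(m+4)^2)^{1/2}$, so the expression is bounded below by $c=4\bigl((1+m^2)(1+(m+4)^2)\bigr)^{-1/2}>0$. Since this value of $\xi$ belongs to $\bT_h^2$, the bound \eqref{LB} follows.
\end{proof}
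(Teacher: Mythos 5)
Your proof is correct and follows essentially the same route as the paper: the factorization of the resolvent difference, the unitarity argument yielding the exact norm identity, and evaluation at a point where $f_h\sim 1/h$ while $g_{0,h}^{\rm s}$ stays bounded. The only (immaterial) difference is the test point — the paper uses $h\xi=(\pi,\pi)$, giving $f_h=8/h$, while you use $(\pi/h,0)$, giving $f_h=4/h$; both yield a valid uniform lower bound.
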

\begin{proof}
Using the notation from~\eqref{GG} we have
\begin{align*}
(G_{0,h}^{\rm s}(\xi)-\I\bone)
\bigl[
(G_{0,h}^{\rm s}(\xi)-\I\bone)^{-1}
&
-(\widetilde{G}_{0,h}^{\rm s}(\xi)-\I\bone)^{-1}
\bigr]
(\widetilde{G}_{0,h}^{\rm s}(\xi)-\I\bone)
\\
&=\widetilde{G}_{0,h}^{\rm s}(\xi) - G_{0,h}^{\rm s}(\xi)
=f_h(\xi)\sigma_3.
\end{align*}
From the same reasoning as in the proof of Lemma~\ref{lemma34}, we obtain
\begin{equation}
\label{lbound}
\nbc{
(G_{0,h}^{\rm s}(\xi)-\I\bone)^{-1}
-(\widetilde{G}_{0,h}^{\rm s}(\xi)-\I\bone)^{-1}
}
=\frac{f_h(\xi)}{(1+g_{0,h}^{\rm s}(\xi))^{1/2}
(1+\tilde{g}_{0,h}^{\rm s}(\xi))^{1/2}}.
\end{equation}
Here $g_{0,h}^{\rm s}(\xi)$ is given by~\eqref{g},
$\tilde{g}_{0,h}^{\rm s}(\xi)$ by~\eqref{gtilde}, and $f_h(\xi)$ by~\eqref{fh}. 

Take $h\xi_1=\pi$ and $h\xi_2=\pi$, and insert them in the last term in~\eqref{lbound}. We get
\begin{equation*}
\max_{\xi\in\bT_h^2}\norm{(G_{0,h}^{\rm s}(\xi)-\I\bone)^{-1}
	-(\widetilde{G}_{0,h}^{\rm s}(\xi)-\I\bone)^{-1}}_{\cB(\bC^2)}\geq \frac{8}{\sqrt{1+m^2}\sqrt{h^2+(8+hm)^2}}.
\end{equation*}
The result~\eqref{LB} then holds for $0<h\leq1$ with 
$c=8[(1+m^2)(1+(8+m)^2)]^{-1/2}$.
\end{proof}

Combining Theorem~\ref{thm44} with Lemma~\ref{lemma21} and 
Lemma~\ref{lemma45}, we obtain the following result using the operators $J_h$ and $K_h$ introduced in section~\ref{sectPrelim}.
\begin{theorem}\label{thm46}
Let $z\in(\bC\setminus\bR)\cup(-m,m)$. Then $J_h(H_{0,h}^{\rm s}-z\identh)^{-1}K_h$ \emph{does not converge} to $(H_0-z\ident)^{-1}$ in the operator norm on $\cB(\cH^2)$ as $h\to 0$.
\end{theorem}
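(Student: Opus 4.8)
The plan is to derive the non-convergence of $J_h(H_{0,h}^{\rm s}-z\identh)^{-1}K_h$ from the norm convergence of the \emph{modified} model (Theorem~\ref{thm44}) together with the lower bound in Lemma~\ref{lemma45}. The key observation is that if both $J_h(H_{0,h}^{\rm s}-z\identh)^{-1}K_h$ and $J_h(\mH^{\rm s}-z\identh)^{-1}K_h$ converged in norm to $(H_0-z\ident)^{-1}$, then their difference would converge to zero in $\cB(\cH^2)$; I will show this difference does \emph{not} go to zero, yielding a contradiction. It suffices to treat $z=\I$, since for general $z\in(\bC\setminus\bR)\cup(-m,m)$ the resolvent identity relates $(H_{0,h}^{\rm s}-z\identh)^{-1}$ to $(H_{0,h}^{\rm s}-\I\identh)^{-1}$ through a factor bounded uniformly in $h$ (using that $G_{0,h}^{\rm s}(\xi)$ is self-adjoint so its resolvent at $z$ is controlled via Lemma~\ref{lemmaG}), and similarly for $\mH^{\rm s}$; a standard argument then transfers non-convergence from $z=\I$ to all such $z$.

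First I would pass to Fourier space. Both $(H_{0,h}^{\rm s}-\I\identh)^{-1}$ and $(\mH^{\rm s}-\I\identh)^{-1}$ are Fourier multipliers on $\widehat{\cH}_h^2$ with symbols $(G_{0,h}^{\rm s}(\xi)-\I\bone)^{-1}$ and $(\widetilde{G}_{0,h}^{\rm s}(\xi)-\I\bone)^{-1}$ respectively. By Lemma~\ref{lemma21},
\begin{equation*}
\norm{(H_{0,h}^{\rm s}-\I\identh)^{-1}-(\mH^{\rm s}-\I\identh)^{-1}}_{\cB(\widehat{\cH}_h^2)}
=\max_{\xi\in\bT_h^2}\nbc{(G_{0,h}^{\rm s}(\xi)-\I\bone)^{-1}-(\widetilde{G}_{0,h}^{\rm s}(\xi)-\I\bone)^{-1}},
\end{equation*}
which by Lemma~\ref{lemma45} is bounded below by a constant $c>0$ uniformly in $h\in(0,1]$. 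Since $\sfF_h$ is unitary, the same lower bound holds for $\norm{(H_{0,h}^{\rm s}-\I\identh)^{-1}-(\mH^{\rm s}-\I\identh)^{-1}}_{\cB(\cH_h^2)}$.

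Next I must turn this lower bound on $\cH_h^2$ into one on $\cH^2$ after conjugating by $J_h$ and $K_h$. Here I use the identity $K_hJ_h=\identh$: for any bounded operator $B_h$ on $\cH_h^2$ we have $B_h=K_h(J_hB_hK_h)J_h$, hence $\norm{B_h}_{\cB(\cH_h^2)}\leq\norm{K_h}\,\norm{J_hB_hK_h}_{\cB(\cH^2)}\,\norm{J_h}$. Applying this with $B_h=(H_{0,h}^{\rm s}-\I\identh)^{-1}-(\mH^{\rm s}-\I\identh)^{-1}$ and using the uniform bounds $\sup_h\norm{J_h}<\infty$, $\sup_h\norm{K_h}<\infty$ from section~\ref{sectPrelim}, we obtain
\begin{equation*}
\norm{J_h(H_{0,h}^{\rm s}-\I\identh)^{-1}K_h-J_h(\mH^{\rm s}-\I\identh)^{-1}K_h}_{\cB(\cH^2)}\geq \frac{c}{\norm{J_h}\,\norm{K_h}}\geq c'>0
\end{equation*}
for all $h\in(0,1]$. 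If $J_h(H_{0,h}^{\rm s}-\I\identh)^{-1}K_h$ converged in norm to $(H_0-\I\ident)^{-1}$, then combined with Theorem~\ref{thm44} (the $K=\{\I\}$ case) the left-hand side would tend to $0$, contradicting the bound $\geq c'>0$. Therefore no norm convergence holds at $z=\I$, and the transfer argument in the first paragraph extends this to all $z\in(\bC\setminus\bR)\cup(-m,m)$.

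The main obstacle is the passage between $\cH_h^2$ and $\cH^2$: one needs that conjugation by $J_h$ and $K_h$ cannot destroy a uniform-in-$h$ lower bound, which is exactly why the relation $K_hJ_h=\identh$ together with the uniform operator-norm bounds on $J_h$ and $K_h$ is essential. A minor point to handle carefully is the reduction to $z=\I$: one should verify that $(G_{0,h}^{\rm s}(\xi)-z\bone)(G_{0,h}^{\rm s}(\xi)-\I\bone)^{-1}$ and the analogous expression with $\widetilde{G}_{0,h}^{\rm s}$ are bounded uniformly in $\xi\in\bT_h^2$ and $h\in(0,1]$ for $z$ in a compact subset of $(\bC\setminus\bR)\cup(-m,m)$, which follows from Lemma~\ref{lemmaG} and the fact that the relevant scalar functions $g_{0,h}^{\rm s}$ and $\tilde g_{0,h}^{\rm s}$ are nonnegative.
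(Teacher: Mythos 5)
Your proof is correct and follows essentially the same route as the paper, which obtains Theorem~\ref{thm46} by combining Theorem~\ref{thm44}, Lemma~\ref{lemma21}, and Lemma~\ref{lemma45}; the bridging step $B_h=K_h(J_hB_hK_h)J_h$ via $K_hJ_h=\identh$ and the uniform bounds on $J_h,K_h$ is exactly the intended mechanism for converting the symbol lower bound into a lower bound in $\cB(\cH^2)$. Your sketch of the reduction to $z=\I$ (sandwiching the symbol difference at $z$ between the uniformly bounded factors $(G-\I)^{-1}(G-z)$ and $(\widetilde G-z)(\widetilde G-\I)^{-1}$) supplies a detail the paper leaves implicit and is sound, since both symbols are self-adjoint with spectrum in $\bR\setminus(-m,m)$.
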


\subsection{The 2D forward-backward difference model}\label{sect2Dfb}
We now consider the model for the discrete Dirac operator obtained by using the forward and backward difference operators; see~\eqref{Dj+} and~\eqref{Dj-} for definitions. The discretized operator is given by
\begin{equation}\label{2DH0fb}
H^{\rm fb}_{0,h}=
\begin{bmatrix}
m\identhsf & D^-_{h;1}-\I D^-_{h;2}\\
D^+_{h;1}+\I D^+_{h;2} &-m\identhsf
\end{bmatrix}.
\end{equation}
In $\cH_h^2$ it is a Fourier multiplier with the symbol
\begin{equation*}
\Gfb(\xi)=
\begin{bmatrix}
m & -\frac{1}{\I h}(\e^{-\I h\xi_1}-1)
+\frac{1}{h}(\e^{-\I h\xi_2}-1)\\
\frac{1}{\I h}(\e^{\I h\xi_1}-1)
+\frac{1}{h}(\e^{\I h\xi_2}-1) & -m
\end{bmatrix}.
\end{equation*}
We also consider the modified model, where the modification is the same as in the symmetric case, i.e.
\begin{equation*}\label{2Dmodfb}
\tHfb=\Hfb+\begin{bmatrix}-h\Delta_h & 0 \\ 0 & h\Delta_h\end{bmatrix}.
\end{equation*}
The corresponding Fourier multiplier is
\begin{equation*}
\tGfb(\xi)=\Gfb(\xi)+
f_h(\xi)\begin{bmatrix} 1 & 0 \\ 0 & -1\end{bmatrix},
\end{equation*}
where $f_h(\xi)$ is given by~\eqref{fh}. We recall the expression
\begin{equation*}
f_h(\xi)=\tfrac{4}{h}\sin^2(\tfrac{h}{2}\xi_1) + \tfrac{4}{h}\sin^2(\tfrac{h}{2}\xi_2).
\end{equation*}
Define
\begin{align}
\gfb(\xi)&=m^2+\tfrac{4}{h^2}\sin^2(\tfrac{h}{2}\xi_1)+\tfrac{4}{h^2}\sin^2(\tfrac{h}{2}\xi_2)\notag\\
&\phantom{{}={}} + \tfrac{2}{h^2}\sin(h(\xi_1-\xi_2)) - \tfrac{2}{h^2}\sin(h\xi_1)
 + \tfrac{2}{h^2}\sin(h\xi_2)
\label{gfb}
\end{align}
and
\begin{align}
\tgfb(\xi)&=\bigl(m+f_h(\xi)
\bigr)^2 + \tfrac{4}{h^2}\sin^2(\tfrac{h}{2}\xi_1) + \tfrac{4}{h^2}\sin^2(\tfrac{h}{2}\xi_2)\notag\\
&\phantom{{}={}} + \tfrac{2}{h^2}\sin(h(\xi_1-\xi_2)) - \tfrac{2}{h^2}\sin(h\xi_1)
 + \tfrac{2}{h^2}\sin(h\xi_2).
\label{tgfb}
\end{align}
Straightforward computations show that
\begin{equation*}
\Gfb(\xi)^2=\gfb(\xi)\bone
\quad\text{and}\quad
\tGfb(\xi)^2=\tgfb(\xi)\bone.
\end{equation*}
We now prove the analogue of~\eqref{Gtildeest} for $\tGfb(\xi)$.

\begin{lemma} \label{lemma49}
There exists $C>0$ such that for $h\xi\in[-\tfrac{3\pi}{2},\tfrac{3\pi}{2}]^2$ we have
\begin{equation*}\label{tGfb-est}
\nbc{(\tGfb(\xi)-\I\bone)^{-1}}\leq\frac{C}{\abs{\xi}}.
\end{equation*}
\end{lemma}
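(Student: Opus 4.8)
The plan is to follow the pattern of Lemma~\ref{lemma42}: reduce the operator-norm bound to a pointwise lower bound on the scalar $\tgfb$, and establish that bound by splitting the cube $h\xi\in[-\tfrac{3\pi}{2},\tfrac{3\pi}{2}]^2$ into a neighbourhood of the origin and its complement. The operator $\tHfb$ is self-adjoint, since $(D^+_{h;1}+\I D^+_{h;2})^{\ast}=D^-_{h;1}-\I D^-_{h;2}$ and $-h\Delta_h\sigma_3$ is self-adjoint, so $\tGfb(\xi)$ is a self-adjoint matrix for each $\xi$; Lemma~\ref{lemmaG} together with $\tGfb(\xi)^2=\tgfb(\xi)\bone$ then gives
\begin{equation*}
\nbc{(\tGfb(\xi)-\I\bone)^{-1}}=\bigl(1+\tgfb(\xi)\bigr)^{-1/2}\le\tgfb(\xi)^{-1/2}.
\end{equation*}
Hence it suffices to find $c>0$, independent of $h\in(0,1]$ and of $\xi$ with $h\xi\in[-\tfrac{3\pi}{2},\tfrac{3\pi}{2}]^2$, such that $\tgfb(\xi)\ge c\abs{\xi}^2$; the lemma then holds with $C=c^{-1/2}$.

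Write $\tgfb(\xi)=(m+f_h(\xi))^2+\abs{a(\xi)}^2$, where $a(\xi)$ is the $(1,2)$-entry of $\Gfb(\xi)$ and $f_h$ is given by~\eqref{fh}. In the variables $u=h\xi_1$, $v=h\xi_2$ one has $h\,a(\xi)=\I(\e^{-\I u}-1)+(\e^{-\I v}-1)$, so a second-order Taylor expansion yields $h\,a(\xi)=u-\I v+r$ with $\abs{r}\le\tfrac12(u^2+v^2)$. Fix $\delta=2^{-1/2}$ (note $\delta<\tfrac{\pi}{2}$). If $\abs{h\xi_j}\le\delta$ for $j=1,2$, then $\tfrac12(u^2+v^2)\le\tfrac{\delta}{\sqrt2}\sqrt{u^2+v^2}=\tfrac12\sqrt{u^2+v^2}$, hence $\abs{h\,a(\xi)}\ge\tfrac12\sqrt{u^2+v^2}$, and therefore $\tgfb(\xi)\ge\abs{a(\xi)}^2\ge\tfrac14\abs{\xi}^2$ on this neighbourhood of the origin.

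On the complement at least one $\abs{h\xi_j}>\delta$; together with $\abs{h\xi_j}\le\tfrac{3\pi}{2}$ this puts $\tfrac12\abs{h\xi_j}$ in $(\tfrac\delta2,\tfrac{3\pi}{4}]$, where $\abs{\sin}$ is bounded below by $\sin(\tfrac\delta2)$, so $\sin^2(\tfrac h2\xi_j)\ge\sin^2(\tfrac\delta2)=:\eta>0$. By~\eqref{fh} this gives $f_h(\xi)\ge\tfrac{4\eta}{h}$, hence $\tgfb(\xi)\ge(m+f_h(\xi))^2\ge\tfrac{16\eta^2}{h^2}$; since $\abs{\xi}^2\le\tfrac{9\pi^2}{2h^2}$ on the cube, it follows that $\tgfb(\xi)\ge\tfrac{32\eta^2}{9\pi^2}\abs{\xi}^2$ there. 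Taking $c=\min\{\tfrac14,\tfrac{32\eta^2}{9\pi^2}\}$ completes the estimate.

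The only substantive point — and the reason the unmodified operator $\Hfb$ does \emph{not} give norm resolvent convergence in dimensions $d=2,3$ — is that $\abs{a(\xi)}^2$ alone does not control $\abs{\xi}^2$ away from the origin: it has spurious zeros inside the cube $h\xi\in[-\tfrac{3\pi}{2},\tfrac{3\pi}{2}]^2$, for instance at $h\xi=(\tfrac\pi2,-\tfrac\pi2)$. The modification term $f_h(\xi)$ is of order $1/h$ exactly at such points, while $\abs{\xi}$ is at most of order $1/h$ on the cube, and this is what repairs the bound; quantifying it as above is where the work sits. Alternatively one may argue by compactness: $h^2\tgfb(\xi)$ dominates the $h$-independent function $\bigl(4\sin^2\tfrac u2+4\sin^2\tfrac v2\bigr)^2+\abs{\I\e^{-\I u}+\e^{-\I v}-1-\I}^2$ of $(u,v)=(h\xi_1,h\xi_2)$, which on the compact cube vanishes only at $u=v=0$ and is asymptotic to $u^2+v^2$ there, hence is bounded below by a positive multiple of $u^2+v^2$.
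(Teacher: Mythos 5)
Your proof is correct and follows essentially the same route as the paper: reduce via Lemma~\ref{lemmaG} and $\tGfb(\xi)^2=\tgfb(\xi)\bone$ to the lower bound $\tgfb(\xi)\geq c\abs{\xi}^2$, prove it near the origin from the leading behaviour of the off-diagonal symbol, and use the modification term $f_h$ elsewhere. The only difference is cosmetic: the paper bounds the cross terms $\tfrac{2}{h^2}\bigl(\sin(h(\xi_1-\xi_2))-\sin(h\xi_1)+\sin(h\xi_2)\bigr)$ by $h\abs{\xi_1}\abs{\xi_2}^2+h\abs{\xi_2}\abs{\xi_1}^2$ and absorbs them into the $\tfrac{4}{h^2}\sin^2(\tfrac{h}{2}\xi_j)$ terms, whereas you Taylor-expand the entry $a(\xi)$ directly to get $\abs{a(\xi)}\geq\tfrac12\abs{\xi}$ --- an equivalent computation, and your partition of the cube into $\{\abs{h\xi_j}\leq\delta\text{ for all }j\}$ and its complement is in fact slightly cleaner than the paper's two-case split.
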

\begin{proof}
We will show that we have a lower bound
\begin{equation}\label{lb-est}
\tgfb(\xi)\geq c \abs{\xi}^2,\quad h\xi\in[\tfrac{3\pi}{2},\tfrac{3\pi}{2}]^2.
\end{equation}
The result then follows from Lemma~\ref{lemmaG}.
We start with the estimate
\begin{align}
\tgfb(\xi)&\geq
\tfrac{4}{h^2}\sin^2(\tfrac{h}{2}\xi_1)+\tfrac{4}{h^2}\sin^2(\tfrac{h}{2}\xi_2)\notag\\
&\phantom{{}\geq{}} +\tfrac{2}{h^2}\sin(h(\xi_1-\xi_2))-\tfrac{2}{h^2}\sin(h\xi_1)
+\tfrac{2}{h^2}\sin(h\xi_2). \label{del2}
\end{align}
We have
\begin{multline*}
\tfrac{2}{h^2}\sin(h(\xi_1-\xi_2))-\tfrac{2}{h^2}\sin(h\xi_1)
+\tfrac{2}{h^2}\sin(h\xi_2)
\\
=\tfrac{2}{h^2}\bigl(
\sin(h\xi_1)(\cos(h\xi_2)-1)+\sin(h\xi_2)(1-\cos(h\xi_1))
\bigr).
\end{multline*}
We recall the elementary estimates
\begin{equation*}
\abs{\sin(\theta)}\leq\abs{\theta}\quad\text{and}\quad
1-\cos(\theta)\leq\tfrac12 \theta^2
\end{equation*}
for all $\theta\in\bR$.
Using these estimates we get
\begin{equation*}
\tfrac{2}{h^2}\abs{
\sin(h\xi_1)(\cos(h\xi_2)-1)+\sin(h\xi_2)(1-\cos(h\xi_1))}
\leq h\abs{\xi_1}\abs{\xi_2}^2+h\abs{\xi_2}\abs{\xi_1}^2.
\end{equation*}
Recall that there exists $c_0>0$ such that
$\tfrac{4}{h^2}\sin^2(\tfrac{h}{2}\xi_j)\geq c_0\abs{\xi_j}^2$ for all
$h\abs{\xi_j}\leq \frac{3\pi}{2}$, $j=1,2$. Thus using these estimates and~\eqref{del2}, we find
\begin{equation*}
\tgfb(\xi)\geq (c_0-h\abs{\xi_2})\abs{\xi_1}^2+
(c_0-h\abs{\xi_1})\abs{\xi_2}^2
\geq \tfrac12 c_0\abs{\xi}^2
\end{equation*}
for $h\abs{\xi_j}\leq\frac12 c_0$, $j=1,2$.
 
For $\frac12 c_0\leq h\abs{\xi_j}\leq \frac{3\pi}{2}$, $j=1,2$, the estimate~\eqref{lb-est} is obtained as in the proof of Lemma~\ref{lemma42}. We omit the details.
\end{proof}

\begin{lemma}\label{lemma410}
There exists $C>0$ such that
\begin{equation*}
\nbc{(G_0(\xi)-\I\bone)^{-1}-(\tGfb(\xi)-\I\bone)^{-1}}\leq
Ch
\end{equation*}
for $h\xi\in[-\tfrac{3\pi}{2},\tfrac{3\pi}{2}]^2$.
\end{lemma}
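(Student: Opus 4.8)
The plan is to follow the same pattern as in the proofs of Lemma~\ref{lemma32} and Lemma~\ref{lemma43}. Since both $G_0(\xi)$ and $\tGfb(\xi)$ are self-adjoint, $\I$ lies in their resolvent sets, and I would start from the identity
\begin{equation*}
(G_0(\xi)-\I\bone)^{-1}-(\tGfb(\xi)-\I\bone)^{-1}=(G_0(\xi)-\I\bone)^{-1}\bigl(\tGfb(\xi)-G_0(\xi)\bigr)(\tGfb(\xi)-\I\bone)^{-1}.
\end{equation*}
The two outer factors are already controlled: \eqref{Gest} in Lemma~\ref{lemma42} gives $\nbc{(G_0(\xi)-\I\bone)^{-1}}\leq 1/\abs{\xi}$, and Lemma~\ref{lemma49} gives $\nbc{(\tGfb(\xi)-\I\bone)^{-1}}\leq C/\abs{\xi}$ for $h\xi\in[-\tfrac{3\pi}{2},\tfrac{3\pi}{2}]^2$, $\xi\neq0$. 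So the whole estimate reduces to the entrywise bound $\nbc{\tGfb(\xi)-G_0(\xi)}\leq Ch\abs{\xi}^2$ on that set.

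Next I would estimate the entries of $\tGfb(\xi)-G_0(\xi)$. The $11$ and $22$ entries are $\pm f_h(\xi)$, and $\abs{\sin\theta}\leq\abs{\theta}$ yields $\abs{f_h(\xi)}\leq h\abs{\xi}^2$ exactly as in~\eqref{11-entry}. The diagonal modification $f_h(\xi)\,\mathrm{diag}(1,-1)$ contributes nothing to the off-diagonal entries, so for those it suffices to compare $\Gfb(\xi)$ with $G_0(\xi)$, i.e.\ to bound quantities such as $\bigl\lvert\tfrac{\I}{h}(\e^{-\I h\xi_1}-1)-\xi_1\bigr\rvert$ and $\bigl\lvert\tfrac1h(\e^{-\I h\xi_2}-1)+\I\xi_2\bigr\rvert$ (and their conjugate counterparts). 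Using the first-order Taylor expansion $\e^{\pm\I h\xi_j}=1\pm\I h\xi_j+(\I h\xi_j)^2\int_0^1\e^{\pm\I th\xi_j}(1-t)\di t$, exactly as in the proof of Lemma~\ref{lemma32}, each of these differences is bounded by $Ch\abs{\xi_j}^2\leq Ch\abs{\xi}^2$. Collecting the four entries gives $\nbc{\tGfb(\xi)-G_0(\xi)}\leq Ch\abs{\xi}^2$.

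Combining the three bounds through the identity above yields, for $\xi\neq0$ with $h\xi\in[-\tfrac{3\pi}{2},\tfrac{3\pi}{2}]^2$,
\begin{equation*}
\nbc{(G_0(\xi)-\I\bone)^{-1}-(\tGfb(\xi)-\I\bone)^{-1}}\leq \frac{1}{\abs{\xi}}\cdot Ch\abs{\xi}^2\cdot\frac{C}{\abs{\xi}}=Ch,
\end{equation*}
and the apparent singularity at $\xi=0$ is harmless since the factor $\abs{\xi}^2$ in the numerator cancels it (indeed $G_0(0)=\tGfb(0)=m\sigma_3$, so the difference vanishes at the origin). I expect essentially no real obstacle here: the argument is routine bookkeeping of entrywise Taylor estimates, entirely parallel to Lemma~\ref{lemma32} and Lemma~\ref{lemma43}. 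The only point worth flagging is that the forward–backward off-diagonal entries are only first-order accurate, with error $O(h\abs{\xi}^2)$ rather than the $O(h^2\abs{\xi}^3)$ available in the symmetric case; this is exactly why the rate $Ch$ is the correct one and cannot be improved.
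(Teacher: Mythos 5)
Your proposal is correct and matches the paper's proof essentially verbatim: the same resolvent identity, the bound $\abs{f_h(\xi)}\leq Ch\abs{\xi}^2$ for the diagonal entries as in~\eqref{11-entry}, the Taylor expansion~\eqref{taylor-exp} for the off-diagonal entries, and the outer-factor bounds from Lemmas~\ref{lemma42} and~\ref{lemma49}. The extra remark on the behaviour at $\xi=0$ is a correct (if unnecessary) addition.
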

\begin{proof}
We have
\begin{equation*}
(G_0(\xi)-\I\bone)^{-1}-(\tGfb(\xi)-\I\bone)^{-1}=
(G_0(\xi)-\I\bone)^{-1}\bigl(
\tGfb(\xi)-G_0(\xi)
\bigr)(\tGfb(\xi)-\I\bone)^{-1}.
\end{equation*}
The $11$ and $22$ entries in $\tGfb(\xi)-G_0(\xi)$ are estimated by $Ch\abs{\xi}^2$; see~\eqref{11-entry}. To estimate the $12$ and $21$ entries we use Taylor's formula:
\begin{equation}\label{taylor-exp}
\e^{\I h\xi_j}=1+\I h\xi_j+(\I h\xi_j)^2
\int_0^1\e^{\I ht\xi_j}(1-t)\di t.
\end{equation}
It follows that the $12$ and $21$ entries also are estimated by $Ch\abs{\xi}^2$. Using Lemmas~\ref{lemma42} and~\ref{lemma49} the result follows.
\end{proof}

We can now state the analogue of Theorem~\ref{thm44}. The proof is omitted, since it is almost identical to the proof of Theorem~\ref{thm44}; indeed the key ingredients are the estimates in Lemmas~\ref{lemma49} and \ref{lemma410}, that correspond to the results from Lemmas~\ref{lemma42} and \ref{lemma43} with the modified symmetric difference model.
\begin{theorem}\label{thm410}
Let $K\subset (\bC\setminus \bR)\cup(-m,m)$ be compact. Then there exists $C>0$ such that
\begin{equation*}
\norm{
J_h(\tHfb-z\identh)^{-1}K_h-(H_0-z\ident)^{-1}
}_{\cB(\cH^2)} \leq C h
\end{equation*}
for all $z\in K$ and $h\in(0,1]$.
\end{theorem}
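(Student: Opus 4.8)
The plan is to repeat the proof of Theorem~\ref{thm44} essentially verbatim, replacing $\widetilde{G}_{0,h}^{\rm s}$ by $\tGfb$ and the symmetric scalar symbols by their forward-backward counterparts~\eqref{gfb}, \eqref{tgfb}; Lemmas~\ref{lemma49} and~\ref{lemma410} were set up precisely to make this substitution legitimate, so I only indicate where the two arguments differ. As in Theorem~\ref{thm44}, I would first treat $K=\{\I\}$. Write
\begin{align*}
J_h(\tHfb-\I\identh)^{-1}K_h-(H_0-\I\ident)^{-1}
&=J_h(\tHfb-\I\identh)^{-1}K_h-J_hK_h(H_0-\I\ident)^{-1}\\
&\phantom{{}={}}+(J_hK_h-\ident)(H_0-\I\ident)^{-1},
\end{align*}
and bound the last term by $Ch$ with Lemma~\ref{lemmaA3}.

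For the first two terms I pass to Fourier space through $\cF J_h\sfF_h^{\ast}$ and $\sfF_hK_h\cF^{\ast}$ and reproduce the computation giving~\eqref{term1}--\eqref{term2}. The only structural input is that $\tGfb$ is $\tfrac{2\pi}{h}$-periodic in each variable — immediate from its definition, since it is assembled from $\e^{\pm\I h\xi_j}$ and $f_h$, all of which have the required periodicity — so that $(\tGfb(\xi)-\I\bone)^{-1}$ may be moved inside the sum over $j\in\bZ^2$ exactly as before. This reduces everything to bounding in $\widehat{\cH}^2$
\begin{equation*}
q_{j,h}(\xi)=(2\pi)^d\widehat{\vp}_0(h\xi)\overline{\widehat{\psi}_0(h\xi+2\pi j)}\bigl((\tGfb(\xi+\tfrac{2\pi}{h}j)-\I\bone)^{-1}-(G_0(\xi+\tfrac{2\pi}{h}j)-\I\bone)^{-1}\bigr)u(\xi+\tfrac{2\pi}{h}j),
\end{equation*}
summed over the finitely many $j$ with $\abs{j}\le1$ allowed by the support condition in Assumption~\ref{assumpA2}. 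For $j=0$, Lemma~\ref{lemma410} shows the bracket is $O(h)$ on $[-\tfrac{3\pi}{2},\tfrac{3\pi}{2}]^2$; for $\abs{j}=1$, on $\supp(\widehat{\vp}_0)\cap\supp(\widehat{\psi}_0(\,\cdot+2\pi j))$ one has $\abs{\xi+\tfrac{2\pi}{h}j}\ge c_0/h$, whence Lemma~\ref{lemma42} and Lemma~\ref{lemma49} give $\nbc{(G_0(\xi+\tfrac{2\pi}{h}j)-\I\bone)^{-1}}\le Ch$ and $\nbc{(\tGfb(\xi+\tfrac{2\pi}{h}j)-\I\bone)^{-1}}\le Ch$. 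Using the essential boundedness of $\widehat{\vp}_0,\widehat{\psi}_0$, squaring and integrating yields $\norm{q_{j,h}}_{\widehat{\cH}^2}\le Ch\norm{u}_{\widehat{\cH}^2}$ for each such $j$; a finite sum together with density of $\cS(\bR^2)\otimes\bC^2$ finishes the case $z=\I$.

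For general $z\in K$ I would, exactly as at the end of the proof of Theorem~\ref{thm44}, use that $\tGfb(\xi)$ is self-adjoint with $\tGfb(\xi)^2=\tgfb(\xi)\bone$, so its eigenvalues are $\pm\tgfb(\xi)^{1/2}$, and note $\tgfb(\xi)=(m+f_h(\xi))^2+\abs{[\Gfb(\xi)]_{12}}^2\ge m^2$ from~\eqref{tgfb}; this yields $\nbc{(G_0(\xi)-z\bone)(G_0(\xi)-\I\bone)^{-1}}\le C$ and $\nbc{(\tGfb(\xi)-z\bone)(\tGfb(\xi)-\I\bone)^{-1}}\le C$ uniformly for $z\in K$, and combined with Lemma~\ref{lemma410} the crucial estimate $\nbc{(\tGfb(\xi)-z\bone)^{-1}-(G_0(\xi)-z\bone)^{-1}}\le Ch$ for $h\xi\in[-\tfrac{3\pi}{2},\tfrac{3\pi}{2}]^2$, $z\in K$. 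Re-running the Fourier-space argument with $z$ in place of $\I$ then gives the theorem. I do not anticipate a genuine obstacle: the mechanism of Theorem~\ref{thm44} transfers unchanged, and the only bookkeeping requiring care is confirming the $\tfrac{2\pi}{h}$-periodicity of $\tGfb$ (needed for the commutation step) and the lower bound $\tgfb\ge m^2$ (needed to keep $z\in K\cap(-m,m)$ away from the spectrum of $\tHfb$), both of which are elementary consequences of the explicit symbols.
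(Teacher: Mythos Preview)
Your proposal is correct and follows precisely the approach the paper indicates: the paper omits the proof, stating it is almost identical to that of Theorem~\ref{thm44} with Lemmas~\ref{lemma49} and~\ref{lemma410} playing the roles of Lemmas~\ref{lemma42} and~\ref{lemma43}. Your write-up even supplies the minor bookkeeping points (periodicity of $\tGfb$, and $\tgfb=(m+f_h)^2+\abs{[\Gfb]_{12}}^2\ge m^2$) that the paper leaves implicit.
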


The negative result in Theorem~\ref{thm46} for the symmetric model holds also in the forward-backward case. 
\begin{lemma}\label{lemma412}
	There exists $c>0$ such that 
	\begin{equation*}
\max_{\xi\in\bT_h^2}\norm{(G_{0,h}^{\rm fb}(\xi)-\I\bone)^{-1}
-(\widetilde{G}_{0,h}^{\rm fb}(\xi)-\I\bone)^{-1}}_{\cB(\bC^2)}\geq c
	\end{equation*}
	for all $h\in(0,1]$.
\end{lemma}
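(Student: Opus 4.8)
The plan is to adapt the proof of Lemma~\ref{lemma45} to the forward-backward symbols; the only genuinely new point will be the choice of the frequency at which the relevant quantity is evaluated.

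First I would record that $\Hfb$ and $\tHfb$ are self-adjoint, since $(D^+_{h;j})^\ast=D^-_{h;j}$ and $-\Delta_h$ is self-adjoint; consequently $\Gfb(\xi)$ and $\tGfb(\xi)$ are Hermitian for every $\xi\in\bT_h^2$, and the identities $\Gfb(\xi)^2=\gfb(\xi)\bone$, $\tGfb(\xi)^2=\tgfb(\xi)\bone$ force $\gfb(\xi),\tgfb(\xi)\ge0$. Exactly as in the proof of Lemma~\ref{lemma34}, this makes $(1+\gfb(\xi))^{-1/2}(\Gfb(\xi)-\I\bone)$ and $(1+\tgfb(\xi))^{-1/2}(\tGfb(\xi)-\I\bone)$ unitary. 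Using $\tGfb(\xi)=\Gfb(\xi)+f_h(\xi)\sigma_3$ I then have
\[
(\Gfb(\xi)-\I\bone)\bigl[(\Gfb(\xi)-\I\bone)^{-1}-(\tGfb(\xi)-\I\bone)^{-1}\bigr](\tGfb(\xi)-\I\bone)=f_h(\xi)\sigma_3,
\]
and since $\sigma_3$ is unitary, the reasoning from the proof of Lemma~\ref{lemma45} yields
\[
\nbc{(\Gfb(\xi)-\I\bone)^{-1}-(\tGfb(\xi)-\I\bone)^{-1}}=\frac{f_h(\xi)}{(1+\gfb(\xi))^{1/2}(1+\tgfb(\xi))^{1/2}}.
\]

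Next I would choose $\xi$, which is where the forward-backward case differs from Lemma~\ref{lemma45}. The point $h\xi_1=h\xi_2=\pi$ used there fails here: the extra terms $\tfrac{2}{h^2}\sin(h(\xi_1-\xi_2))-\tfrac{2}{h^2}\sin(h\xi_1)+\tfrac{2}{h^2}\sin(h\xi_2)$ that appear in $\gfb$ but not in $g_{0,h}^{\rm s}$ would leave $\gfb(\xi)$ of order $h^{-2}$, driving the ratio above to $0$ at rate $h$. Instead I would take $\xi=(\tfrac{\pi}{2h},-\tfrac{\pi}{2h})\in\bT_h^2$. There $\tfrac{4}{h^2}\sin^2(\tfrac h2\xi_j)=\tfrac{2}{h^2}$ for $j=1,2$, while $\sin(h(\xi_1-\xi_2))=0$, $\sin(h\xi_1)=1$ and $\sin(h\xi_2)=-1$, so all the $h^{-2}$ contributions in \eqref{gfb} cancel and $\gfb(\xi)=m^2$; moreover $f_h(\xi)=\tfrac4h$, so by \eqref{tgfb} also $\tgfb(\xi)=(m+\tfrac4h)^2$. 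Substituting into the norm identity, the right-hand side equals $\tfrac{4}{(1+m^2)^{1/2}(h^2+(hm+4)^2)^{1/2}}$, which for $h\in(0,1]$ is bounded below by $c:=4[(1+m^2)(1+(m+4)^2)]^{-1/2}$; this gives the claim.

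I expect the main obstacle to be only this bookkeeping: locating a frequency at which the lower-order terms of $\gfb$ cancel so that $\gfb(\xi)$ stays $O(1)$. Once that cancellation is in hand, the rest of the argument is identical to the symmetric case and requires nothing new.
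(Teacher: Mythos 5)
Your proposal is correct and follows essentially the same route as the paper: the same norm identity via the unitarity of $(1+\gfb(\xi))^{-1/2}(\Gfb(\xi)-\I\bone)$ and $(1+\tgfb(\xi))^{-1/2}(\tGfb(\xi)-\I\bone)$, the same evaluation point $\xi=(\tfrac{\pi}{2h},-\tfrac{\pi}{2h})$ at which $\gfb(\xi)=m^2$ and $\tgfb(\xi)=(m+\tfrac{4}{h})^2$, and the same resulting constant. No gaps.
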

\begin{proof}
As in the proof of Lemma~\ref{lemma45} we get
\begin{equation*}
\nbc{(\Gfb(\xi)-\I\bone)^{-1}
-(\tGfb(\xi)-\I\bone)^{-1}} =
\frac{f_h(\xi)}{(1+\gfb(\xi))^{1/2}(1+\tgfb(\xi))^{1/2}}.
\end{equation*}
Using~\eqref{fh},~\eqref{gfb}, and~\eqref{tgfb} we get 
	\begin{equation*}
		f_h(\tfrac{\pi}{2h},-\tfrac{\pi}{2h})=\tfrac{4}{h}, \quad \gfb(\tfrac{\pi}{2h},-\tfrac{\pi}{2h})=m^2, \quad\text{and}\quad \tgfb(\tfrac{\pi}{2h},-\tfrac{\pi}{2h})=\bigl(m+\tfrac{4}{h}\bigr)^2.
	\end{equation*}
It follows that we have a lower bound
\begin{align*}
\max_{\xi\in\bT_h^2}\norm{(\Gfb(\xi)-\I\bone)^{-1}
-(\tGfb(\xi)-\I\bone)^{-1}}_{\cB(\bC^2)}& \geq \frac{4}{(1+m^2)^{1/2}(h^2+(4+hm)^2)^{1/2}}\\
&\geq
\frac{4}{(1+m^2)^{1/2}(1+(4+m)^2)^{1/2}}
\end{align*}
for $0<h\leq1$.
\end{proof}

Theorem~\ref{thm410} combined with Lemma~\ref{lemma21} and 
Lemma~\ref{lemma412} gives the following result.
\begin{theorem}
Let $z\in(\bC\setminus\bR)\cup(-m,m)$. Then $J_h(\Hfb-z\identh)^{-1}K_h$ \emph{does not converge}
to $(H_0-z\ident)^{-1}$ in the operator norm on $\cB(\cH^2)$ as $h\to 0$.
\end{theorem}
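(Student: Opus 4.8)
The plan is to argue by contradiction, combining three ingredients: norm resolvent convergence of the \emph{modified} model (Theorem~\ref{thm410}); the transfer between operators on $\cH_h^2$ and their Fourier symbols via the unitary discrete Fourier transform together with Lemma~\ref{lemma21}; and a lower bound on the symbol difference of the type of Lemma~\ref{lemma412}, valid for the given $z$.

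Fix $z\in(\bC\setminus\bR)\cup(-m,m)$ and suppose, for contradiction, that $J_h(\Hfb-z\identh)^{-1}K_h\to(H_0-z\ident)^{-1}$ in $\cB(\cH^2)$ as $h\to0$. Since $\{z\}$ is a compact subset of $(\bC\setminus\bR)\cup(-m,m)$, Theorem~\ref{thm410} gives $J_h(\tHfb-z\identh)^{-1}K_h\to(H_0-z\ident)^{-1}$ in $\cB(\cH^2)$, so, subtracting,
\[
J_h A_h K_h\longrightarrow 0\quad\text{in }\cB(\cH^2),\qquad A_h:=(\Hfb-z\identh)^{-1}-(\tHfb-z\identh)^{-1}.
\]
Next I would remove the sandwich: since $K_hJ_h=\identh$ one has $A_h=K_h(J_hA_hK_h)J_h$, hence $\norm{A_h}_{\cB(\cH_h^2)}\leq\norm{K_h}\,\norm{J_hA_hK_h}_{\cB(\cH^2)}\,\norm{J_h}$, and the uniform bounds $\sup_h\norm{J_h}<\infty$, $\sup_h\norm{K_h}<\infty$ from section~\ref{sectPrelim} force $\norm{A_h}_{\cB(\cH_h^2)}\to0$. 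Conjugating $A_h$ by the unitary $\sfF_h$ turns it into multiplication by the continuous matrix-valued function $(\Gfb(\xi)-z\bone)^{-1}-(\tGfb(\xi)-z\bone)^{-1}$ on $\bT_h^2$, so Lemma~\ref{lemma21} yields
\[
\max_{\xi\in\bT_h^2}\nbc{(\Gfb(\xi)-z\bone)^{-1}-(\tGfb(\xi)-z\bone)^{-1}}\longrightarrow0\quad\text{as }h\to0 .
\]

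To reach a contradiction I would evaluate the symbol difference at the same ``fermion doubling'' frequency used in Lemma~\ref{lemma412}, namely $\xi_h:=(\tfrac{\pi}{2h},-\tfrac{\pi}{2h})\in\bT_h^2$, for which $h\xi_{h,1}=\tfrac\pi2$ and $h\xi_{h,2}=-\tfrac\pi2$ are independent of $h$. A direct computation shows $\Gfb(\xi_h)=m\sigma_3$ and $f_h(\xi_h)=\tfrac4h$, hence $\tGfb(\xi_h)=(m+\tfrac4h)\sigma_3$; both matrices are diagonal, so
\[
(\Gfb(\xi_h)-z\bone)^{-1}-(\tGfb(\xi_h)-z\bone)^{-1}=\begin{bmatrix}\dfrac{1}{m-z}-\dfrac{1}{m+\frac4h-z} & 0\\[2.2ex] 0 & \dfrac{1}{-m-z}-\dfrac{1}{-m-\frac4h-z}\end{bmatrix}.
\]
As $h\to0$ the two subtracted terms are $O(h)$ and vanish, and the right-hand side converges to the diagonal matrix with entries $\tfrac{1}{m-z}$ and $\tfrac{-1}{m+z}$, whose $\cB(\bC^2)$-norm equals $\dist(z,\{-m,m\})^{-1}$, a strictly positive number since $z\neq\pm m$. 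Hence $\max_{\xi\in\bT_h^2}\nbc{(\Gfb(\xi)-z\bone)^{-1}-(\tGfb(\xi)-z\bone)^{-1}}\geq\tfrac12\dist(z,\{-m,m\})^{-1}$ for all small $h$, contradicting the display above. Alternatively one could reduce to $z=\I$, where Lemma~\ref{lemma412} applies verbatim, using the resolvent identity $R_h(\I)=R_h(z)+(\I-z)R_h(z)R_h(\I)$ together with the multiplicativity $(J_hA_hK_h)(J_hB_hK_h)=J_hA_hB_hK_h$ coming from $K_hJ_h=\identh$.

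I expect the only mildly delicate bookkeeping to be the ``unsandwiching'' step, i.e.\ deducing $\norm{A_h}_{\cB(\cH_h^2)}\to0$ from $\norm{J_hA_hK_h}_{\cB(\cH^2)}\to0$; the rest is a routine transfer to the Fourier picture. The conceptual point, already isolated in Lemma~\ref{lemma412}, is that the forward-backward symbol degenerates to $m\sigma_3$ at $\xi_h$—so the unmodified resolvent there is $O(1)$—whereas the mass-type modification $-h\Delta_h\sigma_3$ pushes the eigenvalues at that frequency out to $\pm(m+\tfrac4h)$, making the two resolvents differ there by an amount that does not shrink as $h\to0$.
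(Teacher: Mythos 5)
Your proposal is correct and follows essentially the same route as the paper, which obtains this theorem by combining Theorem~\ref{thm410}, Lemma~\ref{lemma21}, and Lemma~\ref{lemma412} without spelling out the details. Your write-up simply fills those details in — the unsandwiching via $K_hJ_h=\identh$ and the uniform bounds on $J_h,K_h$, the passage to symbols via $\sfF_h$, and the extension of the lower bound at $\xi_h=(\tfrac{\pi}{2h},-\tfrac{\pi}{2h})$ from $z=\I$ to general $z\in(\bC\setminus\bR)\cup(-m,m)$ — all of which check out.
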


This result implies that the strong convergence result in~\cite{SU} cannot be improved to a norm convergence result, without modifying the discretization.

\section{The 3D free Dirac operator}\label{sect3D}

Write $\sigma=(\sigma_1,\sigma_2,\sigma_3)$.
Let $\bzero$ and $\bone$ denote the $2\times2$ zero and identity matrices, and
let $\sfzero$ and $\sfone$ denote the corresponding $4\times4$ matrices.

For $U,W\in\bC^3$ there is the following identity related to the Pauli matrices, where the ``dot'' does not involve complex conjugation:
\begin{equation} \label{eq:paulicross}
	(U\cdot\sigma)(W\cdot\sigma) = (U\cdot W)\bone + \I (U\times W)\cdot \sigma.
\end{equation}

The Dirac matrices
$\alpha=(\alpha_1,\alpha_2,\alpha_3)$ and $\beta$ satisfy
\begin{align*}
	\alpha_j\alpha_k+\alpha_k\alpha_j&=2\delta_{j,k}\sfone,\\
	\alpha_j\beta+\beta\alpha_j&=\sfzero,\\
	\beta^2&=\sfone.
\end{align*}   
We can choose
\begin{equation*}\label{Dirac-choice}
	\beta=\begin{bmatrix}\bone &\bzero\\ \bzero & -\bone\end{bmatrix},\quad
	\alpha_j=\begin{bmatrix} \bzero & \sigma_j\\ \sigma_j & \bzero\end{bmatrix},\quad j=1,2,3.
\end{equation*} 

The free Dirac operator with mass $m\geq0$ in $\cH^3$ is given by
\begin{equation}\label{3DH0}
H_0=-\I\alpha\cdot\nabla+m\beta=\begin{bmatrix}
m\bone & -\I\sigma\cdot\nabla\\
-\I\sigma\cdot\nabla & -m\bone
\end{bmatrix},
\end{equation}
where $\bone$ in the context of \eqref{3DH0} denotes the identity operator on $L^2(\bR^3)\otimes \bC^2$. In Fourier space $\widehat{\cH}^3$ it is a multiplier with symbol
\begin{equation}\label{3DG0}
G_0(\xi)=\begin{bmatrix}
m\bone & \xi\cdot\sigma\\
\xi\cdot\sigma & -m\bone
\end{bmatrix}, \quad \xi\in\bR^3.
\end{equation}
Define 
\begin{equation}\label{eq53}
g_0(\xi)=m^2+\xi_1^2+\xi_2^2+\xi_3^2,
\end{equation}
then
\begin{equation}\label{eq54}
G_0(\xi)^2=g_0(\xi)\sfone.
\end{equation}
As in dimension two there are two natural discretizations of~\eqref{3DH0}, using either the pair of forward-backward partial difference operators or the symmetric partial difference operators.

\subsection{The 3D symmetric difference model}
\label{sect51}
The symmetric partial difference operators are defined in~\eqref{Djs}.
We use the notation 
\begin{equation*}
	\sfD_h^{\rm s}=(D^{\textrm{s}}_{h;1},D^{\textrm{s}}_{h;2},
	D^{\textrm{s}}_{h;3})
\end{equation*}
for the discrete symmetric gradient.
The symmetric discretization of the 3D Dirac operator is defined as
\begin{equation}
	\Hhs=\begin{bmatrix}\label{3DH0hs}
		m\bone_h & \sfD_h^{\rm s}\cdot\sigma\\
		\sfD_h^{\rm s}\cdot\sigma & -m\bone_h
	\end{bmatrix},
\end{equation}
where $\bone_h$ is the identity operator on $\ell^2(h\bZ^3)\otimes \bC^2$. In Fourier space this operator is a multiplier with symbol
\begin{equation}\label{3DG0hs}
	\Ghs(\xi)=\begin{bmatrix}
		m\bone & \sfS_h^{\rm s}(\xi)\cdot\sigma\\
		\sfS_h^{\rm s}(\xi)\cdot\sigma & -m\bone
	\end{bmatrix},
\end{equation}
where
\begin{equation*}
	\sfS_h^{\rm s}(\xi)=(\tfrac{1}{h}\sin(h\xi_1),\tfrac{1}{h}\sin(h\xi_2),\tfrac{1}{h}\sin(h\xi_3)).
\end{equation*}
We have
\begin{equation*}
	\Ghs(\xi)^2=\ghs(\xi)\sfone,
\end{equation*}
where
\begin{equation*}
	\ghs(\xi)=m^2+\tfrac{1}{h^2}\sin^2(h\xi_1)+
	\tfrac{1}{h^2}\sin^2(h\xi_2)+\tfrac{1}{h^2}\sin^2(h\xi_3).
\end{equation*}
As in the two-dimensional case we also define a modified discretization.
Let $-\Delta_h$ denote the 3D discrete Laplacian; see~\eqref{Lap}. Let $-\Delta_h\bone$ denote the $2\times 2$ diagonal operator matrix with the discrete Laplacian on the diagonal elements. Then define
\begin{equation*}
	\tHhs=\Hhs+\begin{bmatrix}
		-h\Delta_h\bone & \bzero\\
		\bzero & h\Delta_h\bone
	\end{bmatrix}.
\end{equation*}
Its symbol is
\begin{equation*}
	\tGhs(\xi)= \Ghs(\xi) + \sff_h(\xi)\begin{bmatrix}
		\bone & \bzero \\
		\bzero & -\bone
	\end{bmatrix},
\end{equation*}
where
\begin{equation} \label{eq:fh3d}
	\sff_h(\xi)=\tfrac{4}{h}\sin^2(\tfrac{h}{2}\xi_1)+
	\tfrac{4}{h}\sin^2(\tfrac{h}{2}\xi_2)+
	\tfrac{4}{h}\sin^2(\tfrac{h}{2}\xi_3).
\end{equation}
We have
\begin{equation*}
	\tGhs(\xi)^2=\tghs(\xi)\sfone,
\end{equation*}
where
\begin{equation*}
	\tghs(\xi)=\bigl(m+
	\sff_h(\xi)\bigr)^2+\tfrac{1}{h^2}\sin^2(h\xi_1)+
	\tfrac{1}{h^2}\sin^2(h\xi_2)+\tfrac{1}{h^2}\sin^2(h\xi_3).
\end{equation*}
Since $g_0$, $g_{0,h}^{\rm s}$, and $\tilde{g}_{0,h}^{\rm s}$ have similar expressions in dimensions $d = 2,3$, one can directly repeat the computations leading to Theorems~\ref{thm44} and~\ref{thm46}, which yield the following results.
\begin{theorem}\label{thm54}
	Let $K\subset (\bC\setminus\bR)\cup(-m,m)$ be compact. Then there exists $C>0$ such that
	\begin{equation*}
		\norm{
			J_h(\tHhs-z\identh)^{-1}K_h-(H_0-z\ident)^{-1}
		}_{\cB(\cH^3)} \leq C h
	\end{equation*}
	for all $z\in K$ and $h\in(0,1]$.
\end{theorem}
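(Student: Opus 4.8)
The plan is to mirror the two-dimensional argument for Theorem~\ref{thm44} essentially verbatim, since the $3$D symmetric difference model has exactly the same structure: the symbols square to scalar multiples of the identity (on $\bC^4$ now rather than $\bC^2$), and the modification by $\sff_h(\xi)$ plays the same role as $f_h(\xi)$ did in dimension two. First I would establish the $3$D analogues of Lemmas~\ref{lemma42} and~\ref{lemma43}. For the first, Lemma~\ref{lemmaG} gives $\nbcc{(\widetilde{G}_{0,h}^{\rm s}(\xi)-\I\sfone)^{-1}}=(1+\tghs(\xi))^{-1/2}$, so it suffices to prove a lower bound $\tghs(\xi)\geq c\abs{\xi}^2$ for $h\xi\in[-\tfrac{3\pi}{2},\tfrac{3\pi}{2}]^3$; this follows coordinate-by-coordinate exactly as in dimension two, using $\tfrac1{h^2}\sin^2(h\xi_j)\geq c_1\abs{\xi_j}^2$ when $\abs{h\xi_j}\leq\tfrac{3\pi}{4}$ and $\tfrac1h\sin^2(\tfrac h2\xi_j)\geq c_2\abs{\xi_j}$ when $\tfrac{3\pi}{4}\leq\abs{h\xi_j}\leq\tfrac{3\pi}{2}$. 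For the second, writing the difference of resolvents as $(G_0(\xi)-\I\sfone)^{-1}(\widetilde{G}_{0,h}^{\rm s}(\xi)-G_0(\xi))(\widetilde{G}_{0,h}^{\rm s}(\xi)-\I\sfone)^{-1}$, one estimates the diagonal blocks of $\widetilde{G}_{0,h}^{\rm s}(\xi)-G_0(\xi)$ by $\abs{\sff_h(\xi)}\leq Ch\abs{\xi}^2$ and the off-diagonal blocks by $\abs{\tfrac1h\sin(h\xi_j)-\xi_j}\leq Ch^2\abs{\xi_j}^3$ via~\eqref{taylor-sin}, then combines with the $1/\abs{\xi}$ bounds.

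Next I would invoke Lemma~\ref{lemmaA3}, which is already stated for $d=3$, to control $(J_hK_h-\ident)(H_0-z\ident)^{-1}$. With these three ingredients in hand, the proof of Theorem~\ref{thm54} proceeds exactly as that of Theorem~\ref{thm44}: decompose
\begin{equation*}
J_h(\tHhs-\I\identh)^{-1}K_h-(H_0-\I\ident)^{-1}
=\bigl(J_h(\tHhs-\I\identh)^{-1}K_h-J_hK_h(H_0-\I\ident)^{-1}\bigr)+(J_hK_h-\ident)(H_0-\I\ident)^{-1},
\end{equation*}
pass the first term to Fourier space, use periodicity $\widetilde{G}_{0,h}^{\rm s}(\xi)=\widetilde{G}_{0,h}^{\rm s}(\xi+\tfrac{2\pi}{h}j)$ to commute the symbol past the periodization sum, and reduce to estimating $q_{j,h}$ for $\abs{j}\leq1$ (the support condition in Assumption~\ref{assumpA2} kills the rest). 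The $j=0$ term is $O(h)$ by the $3$D Lemma~\ref{lemma43} analogue; for $\abs{j}=1$ one uses that on the relevant support $\abs{\xi+\tfrac{2\pi}{h}j}\geq c_0/h$, so both resolvent symbols are $O(h)$ by the $3$D Lemma~\ref{lemma42} analogue. Finally the general compact $K$ follows from $\nbcc{(G_0(\xi)-z\sfone)(G_0(\xi)-\I\sfone)^{-1}}\leq C$ and the same for $\widetilde{G}_{0,h}^{\rm s}$, uniformly for $z\in K$.

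Honestly, there is no real obstacle here — the whole point is that the $3$D symmetric case is structurally identical to the $2$D case, the only change being $\bC^2\rightsquigarrow\bC^4$ and an extra coordinate in all the sums, neither of which affects any estimate. The one place requiring a moment's care is the identity $G_0(\xi)^2=g_0(\xi)\sfone$ and its discrete counterparts: in dimension three these rely on~\eqref{eq:paulicross}, namely $(\xi\cdot\sigma)^2=\abs{\xi}^2\bone$ for real $\xi$, so that the off-diagonal blocks of $G_0(\xi)$ square correctly and the cross-product term vanishes. Since $\sfS_h^{\rm s}(\xi)$ and the argument of $\tGhs$ are likewise real vectors, the same computation gives $\Ghs(\xi)^2=\ghs(\xi)\sfone$ and $\tGhs(\xi)^2=\tghs(\xi)\sfone$, which are exactly the identities needed for Lemma~\ref{lemmaG} to apply. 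Given all this, I would simply state the theorem and remark that the proof is obtained by repeating the proof of Theorem~\ref{thm44} with the above $3$D estimates substituted, as the excerpt in fact does.
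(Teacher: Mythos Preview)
Your proposal is correct and follows exactly the paper's approach: the paper gives no separate proof of Theorem~\ref{thm54} but simply remarks that since $g_0$, $g_{0,h}^{\rm s}$, and $\tilde{g}_{0,h}^{\rm s}$ have the same form in dimensions two and three, one repeats the computations leading to Theorem~\ref{thm44}. Your outline fills in precisely those details (the $3$D analogues of Lemmas~\ref{lemma42} and~\ref{lemma43}, the use of Lemma~\ref{lemmaA3}, and the observation that the squared-symbol identities via~\eqref{eq:paulicross} hold because $\sfS_h^{\rm s}(\xi)$ is real), so there is nothing to correct.
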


\begin{theorem}\label{thm55}
	Let $z\in(\bC\setminus\bR)\cup(-m,m)$. Then $J_h(\Hhs-z\identh)^{-1}K_h$ \emph{does not converge} to $(H_0-z\ident)^{-1}$ in the operator norm on $\cB(\cH^3)$ as $h\to 0$.
\end{theorem}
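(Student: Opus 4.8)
The plan is to follow exactly the template established by Theorem~\ref{thm46} and its companion Lemma~\ref{lemma45} in dimension two. The statement to be proved is the non-convergence of $J_h(\Hhs-z\identh)^{-1}K_h$ to $(H_0-z\ident)^{-1}$ in $\cB(\cH^3)$; by Theorem~\ref{thm54} we already have norm resolvent convergence of the \emph{modified} operator $\tHhs$ at rate $Ch$, so it suffices to show that the difference between the modified and unmodified discrete resolvents does \emph{not} tend to zero in operator norm. Concretely, the first step is to establish the 3D analogue of Lemma~\ref{lemma45}: there exists $c>0$ such that
\begin{equation*}
\max_{\xi\in\bT_h^3}\nbcc{(\Ghs(\xi)-\I\sfone)^{-1}-(\tGhs(\xi)-\I\sfone)^{-1}}\geq c
\end{equation*}
for all $h\in(0,1]$.

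To prove this lower bound I would proceed as in Lemma~\ref{lemma45}. Write the resolvent difference via the identity
\begin{equation*}
(\Ghs(\xi)-\I\sfone)\bigl[(\Ghs(\xi)-\I\sfone)^{-1}-(\tGhs(\xi)-\I\sfone)^{-1}\bigr](\tGhs(\xi)-\I\sfone)=\tGhs(\xi)-\Ghs(\xi)=\sff_h(\xi)\begin{bmatrix}\bone&\bzero\\\bzero&-\bone\end{bmatrix}.
\end{equation*}
Using $\Ghs(\xi)^2=\ghs(\xi)\sfone$ and $\tGhs(\xi)^2=\tghs(\xi)\sfone$, the matrices $(1+\ghs(\xi))^{-1/2}(\Ghs(\xi)-\I\sfone)$ and $(1+\tghs(\xi))^{-1/2}(\tGhs(\xi)-\I\sfone)$ are unitary on $\bC^4$, and the block matrix $\begin{bmatrix}\bone&\bzero\\\bzero&-\bone\end{bmatrix}$ is unitary, so exactly as in the 2D argument one gets the norm equality
\begin{equation*}
\nbcc{(\Ghs(\xi)-\I\sfone)^{-1}-(\tGhs(\xi)-\I\sfone)^{-1}}=\frac{\sff_h(\xi)}{(1+\ghs(\xi))^{1/2}(1+\tghs(\xi))^{1/2}}.
\end{equation*}
Then I evaluate the right-hand side at the high-frequency point $h\xi_1=h\xi_2=h\xi_3=\pi$. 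There $\sin(h\xi_j)=0$ so $\ghs=m^2$, while $\sin^2(\tfrac{h}{2}\xi_j)=1$ so $\sff_h(\xi)=\tfrac{12}{h}$ and $\tghs(\xi)=(m+\tfrac{12}{h})^2$. This gives a value
\begin{equation*}
\frac{12/h}{(1+m^2)^{1/2}\bigl(1+(m+12/h)^2\bigr)^{1/2}}=\frac{12}{(1+m^2)^{1/2}\bigl(h^2+(hm+12)^2\bigr)^{1/2}}\geq\frac{12}{(1+m^2)^{1/2}\bigl(1+(m+12)^2\bigr)^{1/2}}
\end{equation*}
for $0<h\leq1$, so one may take $c=12[(1+m^2)(1+(m+12)^2)]^{-1/2}$.

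Once this lemma is in hand, the theorem follows by the same three-line argument used after Lemma~\ref{lemma45}. Suppose for contradiction that $J_h(\Hhs-z\identh)^{-1}K_h\to(H_0-z\ident)^{-1}$ in $\cB(\cH^3)$; since Theorem~\ref{thm54} gives $J_h(\tHhs-z\identh)^{-1}K_h\to(H_0-z\ident)^{-1}$ as well, subtracting yields $J_h\bigl[(\Hhs-z\identh)^{-1}-(\tHhs-z\identh)^{-1}\bigr]K_h\to0$ in $\cB(\cH^3)$. Because $\Hhs$ and $\tHhs$ are both Fourier multipliers on $\cHh^3_h$ with the same periodicity, their resolvent difference is the multiplier with symbol $(\Ghs(\xi)-z\sfone)^{-1}-(\tGhs(\xi)-z\sfone)^{-1}$; using $K_hJ_h=\identh$ and the uniform boundedness of $J_h,K_h$, together with Lemma~\ref{lemma21}, one concludes that $\max_{\xi\in\bT_h^3}\nbcc{(\Ghs(\xi)-z\sfone)^{-1}-(\tGhs(\xi)-z\sfone)^{-1}}\to0$, contradicting the lemma for $z=\I$ (and for general $z\in(\bC\setminus\bR)\cup(-m,m)$ after the elementary reduction to $z=\I$ using that $(G-z)(G-\I)^{-1}$ and its inverse are uniformly bounded). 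I do not expect any genuine obstacle here; the only point requiring mild care is making the passage from the operator-norm statement on $\cH^3$ back to the multiplier sup-norm on $\bT_h^3$ precise — i.e.\ verifying that non-convergence of the symbols really does obstruct norm convergence of the embedded resolvents, which is exactly the role of Lemma~\ref{lemma21} combined with the intertwining properties of $J_h$ and $K_h$, and which was already handled in the 1D and 2D cases.
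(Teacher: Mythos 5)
Your proposal is correct and is exactly what the paper intends: the paper dispatches Theorem~\ref{thm55} by remarking that one "can directly repeat the computations leading to Theorems~\ref{thm44} and~\ref{thm46}", i.e.\ the 3D analogue of Lemma~\ref{lemma45} followed by the same combination with Theorem~\ref{thm54} and Lemma~\ref{lemma21}, which is precisely your argument. Your evaluation at $h\xi_1=h\xi_2=h\xi_3=\pi$, giving $\sff_h=\tfrac{12}{h}$, $\ghs=m^2$, $\tghs=(m+\tfrac{12}{h})^2$ and the constant $c=12[(1+m^2)(1+(m+12)^2)]^{-1/2}$, is the correct counterpart of the 2D computation.
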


\subsection{The 3D forward-backward difference model} 
\label{sect52}
Using the definitions~\eqref{Dj+} and~\eqref{Dj-} we introduce the discrete forward and backward gradients as
\begin{equation*}
\sfD_h^{\pm}=(D^{\pm}_{h;1},D^{\pm}_{h;2},
D^{\pm}_{h;3}).
\end{equation*}
The forward-backward difference model is then given by
\begin{equation}\label{3DH0hfb}
\Hhfb=\begin{bmatrix}
m\bone_h & \sfD_h^-\cdot\sigma\\
\sfD_h^+\cdot\sigma & -m\bone_h
\end{bmatrix}.
\end{equation}
The symbols of $D^{\pm}_{h;j}$ in Fourier space are
\begin{equation*}
\pm\frac{1}{\I h}(\e^{\pm\I h\xi_j}-1),\quad j=1,2,3.
\end{equation*}
The symbols of the discrete gradients are then
\begin{equation} \label{eq:Spm}
\sfS^{\pm}_h(\xi)=(\pm\tfrac{1}{\I h}(\e^{\pm\I h\xi_1}-1),\pm\tfrac{1}{\I h}(\e^{\pm\I h\xi_2}-1),\pm\tfrac{1}{\I h}(\e^{\pm\I h\xi_3}-1)),
\end{equation}
such that the symbol of $\Hhfb$ is
\begin{equation*}
\Ghfb(\xi)=\begin{bmatrix}
m\bone & \sfS^-_h(\xi)\cdot\sigma\\
\sfS^+_h(\xi)\cdot\sigma & -m\bone
\end{bmatrix}.
\end{equation*}
We also define the modified discretization as
\begin{equation*}
	\tHfb=\Hfb+\begin{bmatrix}
		-h\Delta_h\bone & \bzero\\
		\bzero & h\Delta_h\bone
	\end{bmatrix}
\end{equation*}
which has the symbol
\begin{equation*}
	\tGfb(\xi)=\Ghfb(\xi) + \sff_h(\xi)\begin{bmatrix}
		\bone & \bzero \\
		\bzero & -\bone
	\end{bmatrix},
\end{equation*}
with $\sff_h$ given in \eqref{eq:fh3d}.

The arguments for norm resolvent convergence of the 3D modified forward-backward difference model do not follow as straightforwardly as in the symmetric difference case, since in particular $\tGfb(\xi)^2$ is not a diagonal matrix. A computation reveals that
\begin{equation*}
	\begin{bmatrix}
		\bzero & \sfS^-_h(\xi)\cdot\sigma \\
		\sfS^+_h(\xi)\cdot\sigma & \bzero
	\end{bmatrix}\begin{bmatrix}
		\bone & \bzero \\
		\bzero & -\bone
	\end{bmatrix} = \begin{bmatrix}
	-\bone & \bzero \\
	\bzero & \bone
\end{bmatrix}\begin{bmatrix}
\bzero & \sfS^-_h(\xi)\cdot\sigma \\
\sfS^+_h(\xi)\cdot\sigma & \bzero
\end{bmatrix}
\end{equation*}
which implies
\begin{equation} \label{eq:Gfb3dsquared}
	\tGfb(\xi)^2 = (m+\sff_h(\xi))^2\sfone + \begin{bmatrix}
		(\sfS^-_h(\xi)\cdot\sigma)(\sfS^+_h(\xi)\cdot\sigma) & \bzero \\ \bzero & (\sfS^+_h(\xi)\cdot\sigma)(\sfS^-_h(\xi)\cdot\sigma)
	\end{bmatrix}.
\end{equation}
We proceed to show the required estimates related to $\tGfb(\xi)$ in detail.

\begin{lemma}\label{lemma51}
Assume $\xi\neq0$. Then we have
\begin{equation}\label{eq57}
\nbcc{(G_0(\xi)-\I\sfone)^{-1}}\leq\frac{1}{\abs{\xi}}.
\end{equation}
There exists $C>0$ such that for  $h\xi\in[-\tfrac{3\pi}{2},\tfrac{3\pi}{2}]^3$ we have
\begin{equation}\label{eq58}
\nbcc{(\widetilde{G}_{0,h}^{\rm fb}(\xi)-\I\sfone)^{-1}}\leq \frac{C}{\abs{\xi}}.
\end{equation}
\end{lemma}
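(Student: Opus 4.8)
The plan is to establish both estimates via Lemma~\ref{lemmaG}, reducing the matrix norm bounds to lower bounds on the scalar ``square'' functions. Estimate~\eqref{eq57} is immediate: by~\eqref{eq54} and Lemma~\ref{lemmaG} we have $\nbcc{(G_0(\xi)-\I\sfone)^{-1}}=(1+g_0(\xi))^{-1/2}=(1+m^2+\abs{\xi}^2)^{-1/2}\leq\abs{\xi}^{-1}$. The work is all in~\eqref{eq58}, where the obstruction noted just before the lemma is that $\tGfb(\xi)^2$ is \emph{not} scalar; see~\eqref{eq:Gfb3dsquared}. So Lemma~\ref{lemmaG} does not directly apply to $\tGfb$.

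First I would deal with the non-scalar term. Using the Pauli identity~\eqref{eq:paulicross} with $U=\sfS^-_h(\xi)$ and $W=\sfS^+_h(\xi)$ (note $\sfS^-_h(\xi)=\overline{\sfS^+_h(\xi)}$ componentwise, but~\eqref{eq:paulicross} uses the bilinear ``dot'' without conjugation), the diagonal blocks in~\eqref{eq:Gfb3dsquared} become $(\sfS^-_h\cdot\sfS^+_h)\bone\pm\I(\sfS^-_h\times\sfS^+_h)\cdot\sigma$. A direct computation gives $\sfS^-_h(\xi)\cdot\sfS^+_h(\xi)=\sum_{j=1}^3\frac{1}{h^2}\abs{\e^{\I h\xi_j}-1}^2=\sum_j\frac{4}{h^2}\sin^2(\tfrac h2\xi_j)=\tfrac1h\sff_h(\xi)$, which is real and nonnegative, while the cross-product term $(\sfS^-_h\times\sfS^+_h)\cdot\sigma$ is a self-adjoint matrix whose entries I would bound by $Ch\abs{\xi}^2$ using that each component of $\sfS^\pm_h(\xi)-(\pm\text{nothing})$... more precisely, writing $\sfS^\pm_{h,j}(\xi)=\xi_j+O(h\abs{\xi_j}^2)$ by Taylor's formula~\eqref{taylor-exp}, the cross product $\sfS^-_h\times\sfS^+_h=\xi\times\xi+O(h\abs{\xi}^3)=O(h\abs{\xi}^3)$ near the origin, and is $O(h^{-2})$ globally on the box with an extra smallness factor. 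Then I would diagonalize: since $\tGfb(\xi)^2$ is block-diagonal with blocks $(m+\sff_h)^2\bone+\tfrac1h\sff_h\bone\pm\I(\sfS^-_h\times\sfS^+_h)\cdot\sigma$, and $(\sfS^-_h\times\sfS^+_h)\cdot\sigma$ has eigenvalues $\pm\abs{\sfS^-_h\times\sfS^+_h}$, the eigenvalues of $\tGfb(\xi)^2$ are $(m+\sff_h(\xi))^2+\tfrac1h\sff_h(\xi)\pm\abs{\sfS^-_h(\xi)\times\sfS^+_h(\xi)}$. By Lemma~\ref{lemmaG} applied to the self-adjoint matrix $\tGfb(\xi)$ it then remains to show that the \emph{smallest} of these eigenvalues is bounded below by $c\abs{\xi}^2$.

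The heart of the matter is therefore the lower bound
\begin{equation*}
(m+\sff_h(\xi))^2+\tfrac1h\sff_h(\xi)-\abs{\sfS^-_h(\xi)\times\sfS^+_h(\xi)}\geq c\abs{\xi}^2,\quad h\xi\in[-\tfrac{3\pi}{2},\tfrac{3\pi}{2}]^3.
\end{equation*}
I would split the box as in Lemma~\ref{lemma49}. For $h\abs{\xi_j}$ small ($\leq\delta$ for a suitable $\delta>0$), I use $\tfrac1h\sff_h(\xi)=\sum_j\tfrac{4}{h^2}\sin^2(\tfrac h2\xi_j)\geq c_0\abs{\xi}^2$ together with $\abs{\sfS^-_h\times\sfS^+_h}\leq Ch\abs{\xi}^3\leq C\delta\abs{\xi}^2$, so the positive term dominates once $\delta$ is small enough. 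In the complementary region, where $h\abs{\xi_j}\geq\delta$ for some $j$, one has $\abs{\xi}\leq C/h$, so it suffices to show the left-hand side is bounded below by a positive constant times $1/h^2\geq(\abs{\xi}h)^{-2}\abs\xi^2$... — more carefully, here $(m+\sff_h)^2\geq\sff_h^2/4\cdot(\text{const})$ and $\sff_h(\xi)\geq\tfrac{4}{h}\sin^2(\tfrac h2\xi_j)\geq c/h$ so $(m+\sff_h)^2\gtrsim 1/h^2$, which dominates $\abs{\sfS^-_h\times\sfS^+_h}\lesssim 1/h^2$ provided the implied constants work out; if they do not, one instead uses that the full sum $\tfrac1h\sff_h$ already gives $\gtrsim\abs{\xi}^2$ on the part of this region where $h\abs{\xi_j}$ stays bounded away from $\pi$, and handles the corners ($h\abs{\xi_j}$ near $\pi$) separately where $\sff_h$ is comparably large. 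I expect this case analysis near the corners of the box — ensuring the quadratic $(m+\sff_h)^2$ term genuinely beats the cross-product term with a uniform constant — to be the main obstacle; it is the three-dimensional analogue of the delicate estimate in Lemma~\ref{lemma49}, complicated by the extra $\abs{\sfS^-_h\times\sfS^+_h}$ term that was absent there. Once the scalar lower bound is in hand, \eqref{eq58} follows from Lemma~\ref{lemmaG} exactly as in the proof of Lemma~\ref{lemma42}.
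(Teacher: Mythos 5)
Your proposal follows essentially the same route as the paper: reduce \eqref{eq58} via Lemma~\ref{lemmaG} to a lower bound $c\abs{\xi}^2$ on the smallest eigenvalue of $\sfone+\tGfb(\xi)^2$, block-diagonalize using \eqref{eq:paulicross} with $\sfS^+_h(\xi)=\overline{\sfS^-_h(\xi)}$, identify the eigenvalues as $(m+\sff_h(\xi))^2+\abs{\sfS^-_h(\xi)}^2\pm\abs{\sfS^-_h(\xi)\times\overline{\sfS^-_h(\xi)}}$ (your $\tfrac1h\sff_h(\xi)$ is exactly $\abs{\sfS^-_h(\xi)}^2$), and split the box into $h\abs{\xi}\leq\delta$ and its complement. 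Your treatment of the small region coincides with the paper's.

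The one place you stop short is the large region, which you flag as ``the main obstacle'' and leave hanging on whether ``the implied constants work out.'' This obstacle dissolves with a single elementary observation that you do not state: $\abs{\sfS^-_h(\xi)\times\overline{\sfS^-_h(\xi)}}\leq\abs{\sfS^-_h(\xi)}\,\abs{\overline{\sfS^-_h(\xi)}}=\abs{\sfS^-_h(\xi)}^2$ pointwise, so the combination $\abs{\sfS^-_h(\xi)}^2-\abs{\sfS^-_h(\xi)\times\overline{\sfS^-_h(\xi)}}$ is nonnegative \emph{everywhere} and can simply be discarded. On $h\abs{\xi}\geq\delta$ one is then left with $(m+\sff_h(\xi))^2\geq\sff_h(\xi)^2\geq (c\,h\abs{\xi}^2)^2\geq c^2\delta^2\abs{\xi}^2$, with no comparison of constants between the cross-product term and $(m+\sff_h)^2$ ever needed. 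Your instinct that the corners are genuinely dangerous is correct — at $h\xi=(\tfrac{\pi}{2},-\tfrac{\pi}{2},0)$ one has $\abs{\sfS^-_h\times\overline{\sfS^-_h}}=\abs{\sfS^-_h}^2=\tfrac{4}{h^2}$ exactly, so the kinetic term contributes nothing and only $(m+\sff_h)^2$ saves the bound (this is precisely why the unmodified model fails, cf.\ Lemma~\ref{lemma56}) — but your proposed fallback of a separate corner analysis is unnecessary. With this inequality inserted, your argument closes and matches the paper's proof.
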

\begin{proof}
The estimate~\eqref{eq57} follows from~\eqref{eq53} and~\eqref{eq54}, together with Lemma~\ref{lemmaG}. 

To prove the estimate~\eqref{eq58} use Lemma~\ref{lemmaG} and note that 
\begin{equation*}
	\nbcc{(\widetilde{G}_{0,h}^{\rm fb}(\xi)-\I\sfone)^{-1}} = \frac{1}{\bigl(\widetilde{\lambda}_{\textup{min}}(\xi)\bigr)^{1/2}}
\end{equation*}
where $\widetilde{\lambda}_{\textup{min}}(\xi)$ is the smallest eigenvalue of $\sfone+\tGfb(\xi)^2$. Estimating $\widetilde{\lambda}_{\textup{min}}(\xi)\geq c\abs{\xi}^2$ for $h\xi\in[-\tfrac{3\pi}{2},\tfrac{3\pi}{2}]^3$ will conclude the proof.

The matrices $(\sfS^-_h(\xi)\cdot\sigma)(\sfS^+_h(\xi)\cdot\sigma)$ and $(\sfS^+_h(\xi)\cdot\sigma)(\sfS^-_h(\xi)\cdot\sigma)$ have the same spectrum, so by \eqref{eq:Gfb3dsquared} it is enough to focus on one of these blocks. Applying \eqref{eq:paulicross} to the top left block of \eqref{eq:Gfb3dsquared}, and noticing that $\sfS^+_h(\xi) = \overline{\sfS^-_h(\xi)}$, we thereby need to investigate the smallest eigenvalue of
\begin{equation*}
	\bigl(1+(m+\sff_h(\xi))^2+\abs{\sfS^-_h(\xi)}^2\bigr)\bone + \I(\sfS^-_h(\xi)\times \overline{\sfS^-_h(\xi)})\cdot\sigma.
\end{equation*}
The smallest eigenvalue of the last term is $-\abs{\sfS^-_h(\xi)\times \overline{\sfS^-_h(\xi)}}$, so we have
\begin{equation} \label{eq:lamtildemin}
	\widetilde{\lambda}_{\textup{min}}(\xi) = 1+(m+\sff_h(\xi))^2 + \abs{\sfS^-_h(\xi)}^2 - \abs{\sfS^-_h(\xi)\times \overline{\sfS^-_h(\xi)}}.
\end{equation}
If $0<\delta\leq h\abs{\xi}$ and $h\xi\in[-\tfrac{3\pi}{2},\tfrac{3\pi}{2}]^3$ for some $\delta>0$, and as $\abs{\sfS^-_h(\xi)\times \overline{\sfS^-_h(\xi)}}\leq \abs{\sfS^-_h(\xi)}^2$, then \eqref{eq:lamtildemin} and \eqref{eq:fh3d} imply the lower bound 
\begin{equation*}
	\widetilde{\lambda}_{\textup{min}}(\xi)\geq \sff_h(\xi)^2 \geq c\abs{\xi}^2.
\end{equation*}

What remains is an estimate when $h\abs{\xi}\leq\delta$ for some small $\delta>0$. Here we first need a bound on $\abs{\sfS^-_h(\xi)\times \overline{\sfS^-_h(\xi)}}$, where
\begin{equation} \label{eq:Scross}
	\sfS^-_h(\xi)\times \overline{\sfS^-_h(\xi)} = \frac{2\I}{h^2}\begin{bmatrix}
		\sin(h(\xi_3-\xi_2))-\sin(h\xi_3)+\sin(h\xi_2) \\
		\sin(h(\xi_1-\xi_3))-\sin(h\xi_1)+\sin(h\xi_3) \\
		\sin(h(\xi_2-\xi_1))-\sin(h\xi_2)+\sin(h\xi_1)
	\end{bmatrix}.
\end{equation}
Using the same estimate as in the proof of Lemma~\ref{lemma49}, we have
\begin{align*}
	\tfrac{2}{h^2}\abs{\sin(h(x-y))-\sin(hx)+\sin(hy)} \leq h\abs{x}\abs{y}^2+h\abs{y}\abs{x}^2 \leq 2\delta\abs{\xi}^2,
\end{align*}
for $h\abs{(x,y)}\leq h\abs{\xi}\leq \delta$. We have $\abs{\sfS^-_h(\xi)\times \overline{\sfS^-_h(\xi)}}\leq 2\sqrt{3}\delta\abs{\xi}^2$.

For $h\abs{\xi}\leq \delta_0 < 1$ there exists $c_0>0$ such that $\abs{\sfS^-_h(\xi)}^2\geq c_0\abs{\xi}^2$. A fixed $\delta$ with $0<\delta<\min\{\frac{c_0}{2\sqrt{3}},\delta_0\}$ gives the estimate
\begin{equation*}
	\widetilde{\lambda}_{\textup{min}}(\xi)\geq  \abs{\sfS^-_h(\xi)}^2 - \abs{\sfS^-_h(\xi)\times \overline{\sfS^-_h(\xi)}} \geq c\abs{\xi}^2
\end{equation*}
for $h\abs{\xi}\leq \delta$.
\end{proof}

\begin{lemma}\label{lemma52}
There exists $C>0$ such that
\begin{equation*}
\nbcc{(G_0(\xi)-\I\sfone)^{-1}
-(\widetilde{G}_{0,h}^{\rm fb}(\xi)-\I\sfone)^{-1}}\leq
Ch
\end{equation*}
for $h\xi\in[-\tfrac{3\pi}{2},\tfrac{3\pi}{2}]^3$.
\end{lemma}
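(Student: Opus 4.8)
The plan is to follow the same resolvent-difference identity used in Lemmas~\ref{lemma32}, \ref{lemma43}, and \ref{lemma410}, writing
\begin{equation*}
(G_0(\xi)-\I\sfone)^{-1}-(\tGfb(\xi)-\I\sfone)^{-1}
=(G_0(\xi)-\I\sfone)^{-1}\bigl(\tGfb(\xi)-G_0(\xi)\bigr)(\tGfb(\xi)-\I\sfone)^{-1}.
\end{equation*}
The two outer factors are controlled by Lemma~\ref{lemma51}: each is bounded by $C/\abs{\xi}$ for $h\xi\in[-\tfrac{3\pi}{2},\tfrac{3\pi}{2}]^3$. So the whole task reduces to showing that the middle factor satisfies $\nbcc{\tGfb(\xi)-G_0(\xi)}\leq Ch\abs{\xi}^2$ on the same range, after which the three estimates multiply to give $Ch\abs{\xi}^2/\abs{\xi}^2=Ch$.

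Next I would estimate $\tGfb(\xi)-G_0(\xi)$ entrywise in the $2\times2$ block structure. The diagonal blocks contribute $\pm\sff_h(\xi)\bone$, and since $\abs{\sin\theta}\leq\abs{\theta}$ gives $\abs{\sff_h(\xi)}\leq Ch\abs{\xi}^2$ exactly as in~\eqref{11-entry}, these are fine. The off-diagonal blocks are $(\sfS^-_h(\xi)-\xi)\cdot\sigma$ and $(\sfS^+_h(\xi)-\xi)\cdot\sigma$; using the Taylor expansion~\eqref{taylor-exp} for $\e^{\pm\I h\xi_j}$ one gets $\abs{\pm\tfrac{1}{\I h}(\e^{\pm\I h\xi_j}-1)-\xi_j}\leq Ch\abs{\xi_j}^2$ componentwise, hence $\abs{\sfS^\pm_h(\xi)-\xi}\leq Ch\abs{\xi}^2$, and since $\nbc{v\cdot\sigma}=\abs{v}$ for $v\in\bC^3$ (from~\eqref{eq:paulicross} applied with $W=\bar v$, or simply because $(v\cdot\sigma)^*(v\cdot\sigma)$ has norm $\abs{v}^2$ when $v$ is real, with the obvious modification for complex $v$), the off-diagonal blocks also have norm $\leq Ch\abs{\xi}^2$. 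Combining the block estimates yields $\nbcc{\tGfb(\xi)-G_0(\xi)}\leq Ch\abs{\xi}^2$, and the lemma follows.

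I do not expect any genuine obstacle here: the structure is identical to Lemma~\ref{lemma410}, with the only new wrinkle being that the off-diagonal entries are now $3$-vectors dotted into $\sigma$ rather than scalars, so one needs the elementary norm identity $\nbc{v\cdot\sigma}=\abs{v}$ to pass from the componentwise Taylor bounds to a bound on the block. The mildly delicate point is simply to make sure the constant $C$ is uniform over $h\xi\in[-\tfrac{3\pi}{2},\tfrac{3\pi}{2}]^3$ and over $\xi$, which is automatic since the Taylor remainder bounds and the $1/\abs{\xi}$ resolvent bounds from Lemma~\ref{lemma51} are all uniform on that range. As in the earlier proofs, the bound is really $C(h+h^2\abs{\xi})\leq Ch$ on the relevant $\xi$-range, so one could equally phrase it that way; I would keep the presentation short and refer back to the proof of Lemma~\ref{lemma410} for the routine parts.
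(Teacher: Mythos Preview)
Your proposal is correct and follows essentially the same approach as the paper: the paper's proof writes the same resolvent identity, refers back to the proof of Lemma~\ref{lemma410} for the entrywise bound $Ch\abs{\xi}^2$ on $\tGfb(\xi)-G_0(\xi)$, and invokes Lemma~\ref{lemma51} for the two outer factors. Your additional remark that $\nbc{v\cdot\sigma}\leq C\abs{v}$ (the precise identity $\nbc{v\cdot\sigma}=\abs{v}$ holds only for real $v$, but the inequality with $C=\sqrt{2}$ follows from~\eqref{eq:paulicross} and suffices) is a reasonable elaboration of what the paper leaves implicit.
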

\begin{proof}
We have
\begin{equation*}
(G_0(\xi)-\I\sfone)^{-1}
-(\widetilde{G}_{0,h}^{\rm fb}(\xi)-\I\sfone)^{-1}=
(G_0(\xi)-\I\sfone)^{-1}
\bigl(
\widetilde{G}_{0,h}^{\rm fb}(\xi)-G_0(\xi)
\bigr)
(\widetilde{G}_{0,h}^{\rm fb}(\xi)-\I\sfone)^{-1}.
\end{equation*}
We estimate the entries in $\widetilde{G}_{0,h}^{\rm fb}(\xi)-G_0(\xi)$ as in the proof of Lemma~\ref{lemma410}. Thus the entries are estimated by $Ch\abs{\xi}^2$. Using Lemma~\ref{lemma51} the result follows.
\end{proof}

Using Lemmas~\ref{lemma51} and~\ref{lemma52} we can adapt the arguments in~\cite{CGJ} to obtain the following result. We omit the details here, and refer the reader to the proof of Theorem~\ref{thm44} where details of the adaptation are given.
\begin{theorem}\label{thm53}
Let $K\subset (\bC\setminus\bR)\cup(-m,m)$ be compact. Then there exists $C>0$ such that
\begin{equation*}
\norm{
J_h(\widetilde{H}_{0,h}^{\rm fb}-z\identh)^{-1}K_h-(H_0-z\ident)^{-1}
}_{\cB(\cH^3)} \leq C h
\end{equation*}
for all $z\in K$ and $h\in(0,1]$.
\end{theorem}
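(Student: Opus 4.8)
The plan is to follow verbatim the structure of the proof of Theorem~\ref{thm44}, which is the two-dimensional template the authors have already set up, using the 3D estimates from Lemmas~\ref{lemma51} and~\ref{lemma52} in place of the 2D estimates from Lemmas~\ref{lemma42} and~\ref{lemma43}. First I would reduce to the case $K=\{\I\}$ and split
\begin{equation*}
J_h(\tHfb-\I\identh)^{-1}K_h-(H_0-\I\ident)^{-1}
=\bigl(J_h(\tHfb-\I\identh)^{-1}K_h-J_hK_h(H_0-\I\ident)^{-1}\bigr)+(J_hK_h-\ident)(H_0-\I\ident)^{-1},
\end{equation*}
estimating the last term by $Ch$ via Lemma~\ref{lemmaA3} (which is stated for $d=1,2,3$).

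Next I would pass to Fourier space and repeat the computation from the proof of Theorem~\ref{thm44} leading to the expression $\sum_{j\in\bZ^3}q_{j,h}$, where now
\begin{equation*}
q_{j,h}(\xi)=(2\pi)^3\widehat{\vp}_0(h\xi)\overline{\widehat{\psi}_0(h\xi+2\pi j)}
\bigl((\tGfb(\xi+\tfrac{2\pi}{h}j)-\I\sfone)^{-1}-(G_0(\xi+\tfrac{2\pi}{h}j)-\I\sfone)^{-1}\bigr)u(\xi+\tfrac{2\pi}{h}j).
\end{equation*}
This rewriting uses only that $\widehat{\psi}_0$ is scalar-valued and that $\tGfb$ is $\tfrac{2\pi}{h}$-periodic in each coordinate, both of which hold here. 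The support conditions in Assumption~\ref{assumpA2} restrict the sum to $\abs{j}\leq1$: for $j=0$ one applies Lemma~\ref{lemma52} on $h\xi\in[-\tfrac{3\pi}{2},\tfrac{3\pi}{2}]^3$ together with the essential boundedness of $\widehat{\vp}_0,\widehat{\psi}_0$; for $\abs{j}=1$ one notes that on the overlap of the two supports $\abs{\xi+\tfrac{2\pi}{h}j}\geq \tfrac{c_0}{h}$, so Lemma~\ref{lemma51} gives $\nbcc{(G_0-\I\sfone)^{-1}}\leq Ch$ and $\nbcc{(\tGfb-\I\sfone)^{-1}}\leq Ch$, whence $\norm{q_{j,h}}_{\widehat{\cH}^3}\leq Ch\norm{u}_{\widehat{\cH}^3}$. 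Summing the finitely many terms and using density of $\cS(\bR^3)\otimes\bC^4$ proves the $K=\{\I\}$ case. For general compact $K$ one uses the uniform bounds $\nbcc{(G_0(\xi)-z\sfone)(G_0(\xi)-\I\sfone)^{-1}}\leq C$ and $\nbcc{(\tGfb(\xi)-z\sfone)(\tGfb(\xi)-\I\sfone)^{-1}}\leq C$ for $z\in K$, which combine with Lemma~\ref{lemma52} to yield $\nbcc{(\tGfb(\xi)-z\sfone)^{-1}-(G_0(\xi)-z\sfone)^{-1}}\leq Ch$, the estimate that drives the whole argument.

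The only genuinely new point compared with dimension two is that $\tGfb(\xi)^2$ is no longer a scalar multiple of the identity, so the resolvent bounds in Lemmas~\ref{lemma51} and~\ref{lemma52} must be read off from the smallest eigenvalue of $\sfone+\tGfb(\xi)^2$ rather than from a single scalar function $\tgfb$; but that work has already been done in Lemma~\ref{lemma51} via the Pauli identity~\eqref{eq:paulicross} and the cross-product expression~\eqref{eq:Scross}. Consequently there is no real obstacle: the main (and only mildly technical) step is checking that the Fourier-space rewriting of Theorem~\ref{thm44} goes through unchanged in the $4\times4$, three-dimensional setting, and that the two places where one needs smallness of a resolvent — the $j=0$ term on the high-frequency part and the $\abs{j}=1$ terms — are both controlled by the bound $\widetilde{\lambda}_{\textup{min}}(\xi)\geq c\abs{\xi}^2$ established there. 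Since all of this is, as the authors say, ``essentially the same verbatim'' as in dimension two, I would state the theorem and refer to the proof of Theorem~\ref{thm44} for the details of the adaptation, exactly as is done for Theorem~\ref{thm410}.
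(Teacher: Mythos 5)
Your proposal is correct and matches the paper's approach exactly: the paper proves Theorem~\ref{thm53} by citing Lemmas~\ref{lemma51} and~\ref{lemma52} and referring to the proof of Theorem~\ref{thm44} for the details of the adaptation, which is precisely the argument you spell out. Your observation that the only new feature is that $\tGfb(\xi)^2$ is not scalar, and that this is already handled in Lemma~\ref{lemma51} via the smallest eigenvalue of $\sfone+\tGfb(\xi)^2$, is also the point the paper flags.
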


As in dimension two, the unmodified forward-backward difference model does not lead to norm resolvent convergence.
\begin{lemma}\label{lemma56}
There exists $c>0$ such that 
\begin{equation*}
\max_{\xi\in\bT_h^3}\norm{
(G_{0,h}^{\rm fb}(\xi)-\I\sfone)^{-1}
-(\widetilde{G}_{0,h}^{\rm fb}(\xi)-\I\sfone)^{-1}}_{\cB(\bC^4)}\geq c
\end{equation*}
for all $h\in(0,1]$.
\end{lemma}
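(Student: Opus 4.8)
The plan is to mirror the proofs of Lemmas~\ref{lemma45} and~\ref{lemma412}: reduce to the pointwise symbol difference and evaluate it at one well-chosen frequency. First I would record, exactly as in those proofs, the identity
\begin{equation*}
(\Ghfb(\xi)-\I\sfone)^{-1}-(\tGfb(\xi)-\I\sfone)^{-1}
=\sff_h(\xi)\,(\Ghfb(\xi)-\I\sfone)^{-1}\,\beta\,(\tGfb(\xi)-\I\sfone)^{-1},
\end{equation*}
which follows from $\tGfb(\xi)-\Ghfb(\xi)=\sff_h(\xi)\beta$, where $\beta=\begin{bmatrix}\bone&\bzero\\\bzero&-\bone\end{bmatrix}$.

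Next I would choose the frequency $\xi^\ast=(\tfrac{\pi}{2h},-\tfrac{\pi}{2h},0)$, which lies in $\bT_h^3$ for every $h\in(0,1]$, and compute the relevant quantities there. From~\eqref{eq:Spm} one finds $\sfS^-_h(\xi^\ast)=\tfrac1h(1-\I,\,-(1+\I),\,0)$, hence $S_1^--\I S_2^-=0$ and $S_1^-+\I S_2^-=\tfrac{2(1-\I)}{h}=:B$, so that $\sfS^-_h(\xi^\ast)\cdot\sigma=\begin{bmatrix}\bzero&\bzero\\B&\bzero\end{bmatrix}$ and $\sff_h(\xi^\ast)=\tfrac4h$. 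The point of this choice is that $\Ghfb(\xi^\ast)$, $\beta$, and $\tGfb(\xi^\ast)=\Ghfb(\xi^\ast)+\tfrac4h\beta$ then all leave the two-dimensional subspace $V=\myspan\{e_1,e_4\}\subset\bC^4$ invariant (here $e_1,\dots,e_4$ are the standard basis vectors), and on $V$, in the ordered basis $(e_1,e_4)$, they act as $m\sigma_3$, $\sigma_3$, and $(m+\tfrac4h)\sigma_3$, respectively.

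The key step is then to bound the $\cB(\bC^4)$-norm of $T:=(\Ghfb(\xi^\ast)-\I\sfone)^{-1}-(\tGfb(\xi^\ast)-\I\sfone)^{-1}$ from below by the norm of its restriction to the invariant subspace $V$. On $V$ the operator $T$ equals $\tfrac4h(m\sigma_3-\I\bone)^{-1}\sigma_3\bigl((m+\tfrac4h)\sigma_3-\I\bone\bigr)^{-1}$, which is diagonal in the eigenbasis of $\sigma_3$ with both entries of modulus $\tfrac{4/h}{(1+m^2)^{1/2}(1+(m+4/h)^2)^{1/2}}=\tfrac{4}{(1+m^2)^{1/2}(h^2+(4+hm)^2)^{1/2}}$. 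For $0<h\leq1$ this is at least $4[(1+m^2)(1+(4+m)^2)]^{-1/2}$, which I would take as $c$, recovering the same constant as in Lemma~\ref{lemma412}.

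I expect the only genuine obstacle, compared with the two-dimensional case, to be that $\Ghfb(\xi)^2$ and $\tGfb(\xi)^2$ are not scalar matrices, so one cannot immediately reduce the norm of $T$ to the scalar formula used for Lemma~\ref{lemma412}. The choice of $\xi^\ast$ is exactly what overcomes this: killing the off-diagonal Pauli term $S_1^--\I S_2^-$ splits off a $2\times2$ block of $\Ghfb(\xi^\ast)$ on which the situation is verbatim the two-dimensional one, while the complementary $2\times2$ block only contributes a norm of order $h$ and can be ignored. The remaining points --- that $\xi^\ast\in\bT_h^3$, the explicit value of $\sfS^-_h(\xi^\ast)$, and the invariance of $V$ under the three matrices --- are routine verifications.
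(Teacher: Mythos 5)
Your proof is correct. It uses the same test frequency $\xi^\ast=(\tfrac{\pi}{2h},-\tfrac{\pi}{2h},0)$ as the paper, but extracts the lower bound by a genuinely different mechanism. The paper computes the smallest eigenvalues of $\sfone+G_{0,h}^{\rm fb}(\xi^\ast)^2$ and $\sfone+\widetilde{G}_{0,h}^{\rm fb}(\xi^\ast)^2$ from the formula \eqref{eq:lamtildemin}, using $\abs{\sfS^-_h(\xi^\ast)}^2=\abs{\sfS^-_h(\xi^\ast)\times\overline{\sfS^-_h(\xi^\ast)}}=4/h^2$; this gives the two resolvent norms $(1+m^2)^{-1/2}$ and $h(h^2+(mh+4)^2)^{-1/2}$ separately, and the conclusion follows from the reverse triangle inequality with $c=(1+m^2)^{-1/2}-(1+(m+4)^2)^{-1/2}$. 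You instead observe that at $\xi^\ast$ the entry $S_1^--\I S_2^-$ of $\sfS^-_h(\xi^\ast)\cdot\sigma$ vanishes, so that $G_{0,h}^{\rm fb}(\xi^\ast)$, $\beta$, and $\widetilde{G}_{0,h}^{\rm fb}(\xi^\ast)$ are simultaneously block-diagonal with respect to $\myspan\{e_1,e_4\}\oplus\myspan\{e_2,e_3\}$, and on the first block the problem collapses verbatim to the two-dimensional computation of Lemma~\ref{lemma412}, yielding the exact norm of the difference on that subspace and the slightly larger constant $4[(1+m^2)(1+(4+m)^2)]^{-1/2}$. Both arguments are valid; yours bounds the norm of the difference directly rather than through a difference of norms and avoids invoking \eqref{eq:lamtildemin}, at the small cost of verifying the invariance of $\myspan\{e_1,e_4\}$, and your closing remark that the complementary block has norm $O(h)$ is true but not needed for the lower bound.
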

\begin{proof}
	Consider $\xi_h = (\frac{\pi}{2h},-\frac{\pi}{2h},0)$ and notice that $\xi_h\in\bT_h^3$ for $h>0$. From \eqref{eq:fh3d}, \eqref{eq:Spm}, and \eqref{eq:Scross} then
	\begin{equation*}
		\sff_h(\xi_h) = \tfrac{4}{h}\quad \text{and}\quad \abs{\sfS^-_h(\xi_h)}^2 = \abs{\sfS^-_h(\xi_h)\times\overline{\sfS^-_h(\xi_h)}} = \tfrac{4}{h^2}.
	\end{equation*}
	From \eqref{eq:lamtildemin}, the smallest eigenvalues $\widetilde{\lambda}_{\textup{min}}(\xi_h)$ of $\sfone+\widetilde{G}_{0,h}^{\rm fb}(\xi_h)^2$ and $\lambda_{\textup{min}}(\xi_h)$ of $\sfone+G_{0,h}^{\rm fb}(\xi_h)^2$ are
	\begin{align*}
		\widetilde{\lambda}_{\textup{min}}(\xi_h) &= 1+(m+\sff_h(\xi_h))^2 + \abs{\sfS^-_h(\xi_h)}^2 - \abs{\sfS^-_h(\xi_h)\times \overline{\sfS^-_h(\xi_h)}} = 1+\bigl(m+\tfrac{4}{h}\bigr)^2, \\
		\lambda_{\textup{min}}(\xi_h) &= 1+m^2 + \abs{\sfS^-_h(\xi_h)}^2 - \abs{\sfS^-_h(\xi_h)\times \overline{\sfS^-_h(\xi_h)}} = 1+m^2.
	\end{align*}
	Thus by Lemma~\ref{lemmaG} we have 
	\begin{equation*}
		\nbcc{(\widetilde{G}_{0,h}^{\rm fb}(\xi_h)-\I\sfone)^{-1}} = \frac{1}{\bigl(\widetilde{\lambda}_{\textup{min}}(\xi_h)\bigr)^{1/2}} = \frac{h}{(h^2+(mh+4)^2)^{1/2}}
	\end{equation*}
	and
	\begin{equation*}
		\nbcc{(G_{0,h}^{\rm fb}(\xi_h)-\I\sfone)^{-1}} = \frac{1}{\bigl(\lambda_{\textup{min}}(\xi_h)\bigr)^{1/2}} = \frac{1}{(1+m^2)^{1/2}},
	\end{equation*}
	which conclude the proof with 
	\begin{equation*}
		c = \frac{1}{(1+m^2)^{1/2}}-\frac{1}{(1+(m+4)^2)^{1/2}}>0. \qedhere
	\end{equation*}
\end{proof}

Combining Theorem~\ref{thm53} with Lemma~\ref{lemma21} and Lemma~\ref{lemma56} gives the following non-convergence result.
\begin{theorem}
	Let $z\in(\bC\setminus\bR)\cup(-m,m)$. Then $J_h(\Hfb-z\identh)^{-1}K_h$ \emph{does not converge}
	to $(H_0-z\ident)^{-1}$ in the operator norm on $\cB(\cH^3)$ as $h\to 0$.
\end{theorem}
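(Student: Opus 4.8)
The plan is to follow the pattern of Theorems~\ref{thm46} and~\ref{thm55}: argue by contradiction, use Theorem~\ref{thm53} to reduce the hypothetical norm convergence to a statement about the Fourier symbols, and then contradict that statement via a (mild strengthening of the) computation behind Lemma~\ref{lemma56}.

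Suppose that $J_h(\Hfb-z\identh)^{-1}K_h\to(H_0-z\ident)^{-1}$ in $\cB(\cH^3)$ as $h\to0$ for some fixed $z\in(\bC\setminus\bR)\cup(-m,m)$. By Theorem~\ref{thm53} also $J_h(\tHfb-z\identh)^{-1}K_h\to(H_0-z\ident)^{-1}$, so the difference of the two embedded resolvents tends to $0$ in $\cB(\cH^3)$. Since $K_hJ_h=\identh$, every $B_h\in\cB(\cH_h^3)$ satisfies $B_h=K_h(J_hB_hK_h)J_h$, which together with $\sup_h\norm{J_h}<\infty$ and $\sup_h\norm{K_h}<\infty$ yields
\begin{equation*}
\norm{(\Hfb-z\identh)^{-1}-(\tHfb-z\identh)^{-1}}_{\cB(\cH_h^3)}\longrightarrow 0.
\end{equation*}
Both resolvents are Fourier multipliers, with continuous symbols $(\Ghfb(\xi)-z\sfone)^{-1}$ and $(\tGfb(\xi)-z\sfone)^{-1}$: these are well defined and bounded because, by \eqref{eq:Gfb3dsquared} (with $\sff_h\equiv0$ for the unmodified model), $\Ghfb(\xi)$ and $\tGfb(\xi)$ are self-adjoint with $\Ghfb(\xi)^2\geq m^2\sfone$ and $\tGfb(\xi)^2\geq m^2\sfone$, so $z$ lies in the resolvent set of both $\Hfb$ and $\tHfb$ for every $h$. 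Applying Lemma~\ref{lemma21} to the multiplier of the difference then turns the display above into
\begin{equation*}
\max_{\xi\in\bT_h^3}\nbcc{(\Ghfb(\xi)-z\sfone)^{-1}-(\tGfb(\xi)-z\sfone)^{-1}}\longrightarrow 0,
\end{equation*}
which we must contradict.

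The plan is then to evaluate at the point $\xi_h=(\tfrac{\pi}{2h},-\tfrac{\pi}{2h},0)\in\bT_h^3$ already used in Lemma~\ref{lemma56}. There $\sff_h(\xi_h)=\tfrac{4}{h}$, and a short computation from \eqref{eq:Spm} gives $\sfS^-_h(\xi_h)\cdot\sigma=\bigl[\begin{smallmatrix}0&0\\ 2(1-\I)/h&0\end{smallmatrix}\bigr]$; in particular the first standard basis vector $e_1\in\bC^4$ is an eigenvector of $\Ghfb(\xi_h)$ with eigenvalue $m$, hence also an eigenvector of $\tGfb(\xi_h)=\Ghfb(\xi_h)+\tfrac{4}{h}\bigl[\begin{smallmatrix}\bone&\bzero\\ \bzero&-\bone\end{smallmatrix}\bigr]$ with eigenvalue $m+\tfrac{4}{h}$. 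Therefore the difference of the two resolvents sends $e_1$ to $\bigl((m-z)^{-1}-(m+\tfrac{4}{h}-z)^{-1}\bigr)e_1$, so
\begin{equation*}
\max_{\xi\in\bT_h^3}\nbcc{(\Ghfb(\xi)-z\sfone)^{-1}-(\tGfb(\xi)-z\sfone)^{-1}}\geq\absb{(m-z)^{-1}-(m+\tfrac{4}{h}-z)^{-1}}.
\end{equation*}
Since $z\notin\{-m,m\}$, the right-hand side converges to $\abs{m-z}^{-1}>0$ as $h\to0$, hence stays $\geq\tfrac12\abs{m-z}^{-1}$ for all small $h$, contradicting the previous limit; this proves the theorem.

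The only step that does not simply repeat the two-dimensional argument verbatim is handling a general $z$ rather than $z=\I$: Lemma~\ref{lemma56} is stated only at $z=\I$, and the point I expect to require a little thought is the observation that at $\xi_h$ the vector $e_1$ is a common eigenvector of $\Ghfb(\xi_h)$ and $\tGfb(\xi_h)$ — information already implicit in the eigenvalue computation in the proof of Lemma~\ref{lemma56} — which makes the lower bound uniform for $z$ in compact subsets of $(\bC\setminus\bR)\cup(-m,m)$. (Alternatively one can reach general $z$ from $z=\I$ by chaining first-resolvent identities along a path in the connected open set $(\bC\setminus\bR)\cup(-m,m)$, using the uniform resolvent bounds noted above, but the direct computation is shorter.) All the remaining steps are the mechanical reductions used in Theorems~\ref{thm46} and~\ref{thm55}.
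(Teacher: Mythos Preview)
Your proposal is correct and follows the same route as the paper: combine Theorem~\ref{thm53}, Lemma~\ref{lemma21}, and the computation at $\xi_h=(\tfrac{\pi}{2h},-\tfrac{\pi}{2h},0)$ from Lemma~\ref{lemma56}, together with $K_hJ_h=\identh$ and the uniform bounds on $J_h,K_h$, to contradict the hypothetical norm convergence. The one place you go slightly beyond the paper is in handling general $z$: the paper's Lemma~\ref{lemma56} is stated only at $z=\I$ and the extension is left implicit, whereas your observation that $e_1$ is a common eigenvector of $\Ghfb(\xi_h)$ and $\tGfb(\xi_h)$ (with eigenvalues $m$ and $m+\tfrac{4}{h}$) gives the lower bound $\abs{(m-z)^{-1}-(m+\tfrac{4}{h}-z)^{-1}}$ directly for every admissible $z$ --- a clean shortcut.
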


\section{Sobolev space estimates and strong convergence} \label{sectSobolev}
In sections~\ref{sect1D}--\ref{sect3D} we have shown that
$J_h(H_{0,h}-zI_h)^{-1}K_h$ converges in the $\cB(\cH^d)$-operator norm to $(H_0-zI)^{-1}$ for several choices of discrete model $H_{0,h}$, and we have also shown that in other cases this norm convergence does not hold. This section is dedicated to the cases where $J_h(H_{0,h}-zI_h)^{-1}K_h$ does not converge to $(H_0-zI)^{-1}$ in the $\cB(\cH^d)$-operator norm, and instead we will prove that convergence holds in the $\cB(H^1(\bR^d)\otimes\bC^{\nu(d)},\cH^d)$-operator norm. These latter results obviously imply strong convergence. In particular, we recover the result in~\cite{SU} for $d=2$ with the discretization $H_{0,h}^{\rm fb}$.


\subsection{The 1D model}
The 1D symmetric model 
$H_{0,h}^{\rm s}$ is defined in~\eqref{1DH0hs} and its symbol $G_{0,h}^{\rm s}(\xi)$ in~\eqref{1DG0hs}. The symbol for the continuous Dirac operator $G_0(\xi)$ is defined in~\eqref{1DG0}.

\begin{lemma}\label{lemma61}
There exists $C>0$ such that
\begin{equation*}
\bigl\lVert
\bigl((G_{0,h}^{\rm s}(\xi)-\I\bone)^{-1}-(G_0(\xi)-\I\bone)^{-1}\bigr)
(G_0(\xi)-\I\bone)^{-1}\bigr\rVert_{\cB(\bC^2)}
\leq Ch
\end{equation*}
for $h\xi\in[-\tfrac{3\pi}{2},\tfrac{3\pi}{2}]$. 
\end{lemma}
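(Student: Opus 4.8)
The plan is to exploit the fact that all three matrices involved are rational combinations of $\xi$, $m$, and $\tfrac1h\sin(h\xi)$, and to leverage the scalar identities from~\eqref{square} and~\eqref{1Dg0hs}. First I would write the telescoped product
\begin{equation*}
\bigl((G_{0,h}^{\rm s}-\I\bone)^{-1}-(G_0-\I\bone)^{-1}\bigr)(G_0-\I\bone)^{-1}
=(G_{0,h}^{\rm s}-\I\bone)^{-1}\bigl(G_0-G_{0,h}^{\rm s}\bigr)(G_0-\I\bone)^{-2},
\end{equation*}
suppressing the $\xi$-dependence for brevity. This replaces the single resolvent factor $(G_0-\I\bone)^{-1}$ by a \emph{squared} factor $(G_0-\I\bone)^{-2}$, which is the point: by Lemma~\ref{lemmaG} and~\eqref{square}, $(G_0(\xi)-\I\bone)^{-1}$ has norm $(1+m^2+\xi^2)^{-1/2}$, so $(G_0(\xi)-\I\bone)^{-2}$ has norm $(1+m^2+\xi^2)^{-1}\leq \abs{\xi}^{-2}$.

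Next I would bound the middle factor. The difference $G_0(\xi)-G_{0,h}^{\rm s}(\xi)$ is purely off-diagonal, with $12$ and $21$ entries equal to $\xi-\tfrac1h\sin(h\xi)$; Taylor's formula~\eqref{taylor-sin} gives $\abs{\xi-\tfrac1h\sin(h\xi)}\leq Ch^2\abs{\xi}^3$, hence $\nbc{G_0(\xi)-G_{0,h}^{\rm s}(\xi)}\leq Ch^2\abs{\xi}^3$ on the relevant range. For the remaining factor $(G_{0,h}^{\rm s}(\xi)-\I\bone)^{-1}$ I need a bound of the form $C/\abs{\xi}$ valid for $h\xi\in[-\tfrac{3\pi}{2},\tfrac{3\pi}{2}]$; this follows exactly as in Lemma~\ref{lemma31}, using Lemma~\ref{lemmaG}, the scalar identity $G_{0,h}^{\rm s}(\xi)^2=g_{0,h}^{\rm s}(\xi)\bone$ from~\eqref{1Dg0hs}, and the elementary lower bound $\abs{\sin\theta}\geq c\abs{\theta}$ for $\abs{\theta}\leq\tfrac{3\pi}{2}$, which yields $g_{0,h}^{\rm s}(\xi)=m^2+\tfrac1{h^2}\sin^2(h\xi)\geq c_1\abs{\xi}^2$ on this range (note $\abs{h\xi}\leq\tfrac{3\pi}{2}$ rather than $\tfrac{3\pi}{4}$, so $\abs{\sin\theta}\geq c\abs{\theta}$ still holds, just with a smaller constant; alternatively split at $\abs{h\xi}=\tfrac{3\pi}{4}$ as done elsewhere and use $\tfrac1h\sin^2(\tfrac h2\xi)\geq c_2\abs{\xi}$ on the outer piece, as in the proof of Lemma~\ref{lemma42}).

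Multiplying the three bounds gives
\begin{equation*}
\frac{C}{\abs{\xi}}\cdot Ch^2\abs{\xi}^3\cdot\frac{1}{\abs{\xi}^2}=Ch^2,
\end{equation*}
which is even stronger than the claimed $Ch$; one then absorbs the $h^2\leq h$ for $h\in(0,1]$. I expect the only mildly delicate point to be confirming the lower bound $g_{0,h}^{\rm s}(\xi)\geq c\abs{\xi}^2$ all the way out to $\abs{h\xi}=\tfrac{3\pi}{2}$, where $\sin^2(h\xi)$ can be small; this is handled by the now-standard two-regime split already used in Lemma~\ref{lemma42}, replacing $\tfrac1{h^2}\sin^2(h\xi)$ by $\tfrac4{h^2}\sin^2(\tfrac h2\xi)\geq c\abs{\xi}^2$ near the endpoints (or simply noting $m^2+\tfrac1{h^2}\sin^2(h\xi)$ and keeping track of constants). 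Everything else is a routine Taylor estimate and an application of Lemma~\ref{lemmaG}.
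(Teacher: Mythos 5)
Your decomposition and the Taylor estimate for the middle factor coincide with the paper's proof, but the bound you impose on the remaining resolvent factor is false, and its failure is the whole point of this section of the paper. You claim $\nbc{(G_{0,h}^{\rm s}(\xi)-\I\bone)^{-1}}\leq C/\abs{\xi}$ for $h\xi\in[-\tfrac{3\pi}{2},\tfrac{3\pi}{2}]$, derived from a lower bound $g_{0,h}^{\rm s}(\xi)=m^2+\tfrac{1}{h^2}\sin^2(h\xi)\geq c\abs{\xi}^2$. This fails at $h\xi=\pm\pi$: there $\sin(h\xi)=0$, so $g_{0,h}^{\rm s}(\xi)=m^2$ while $\abs{\xi}=\pi/h\to\infty$ as $h\to0$. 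There is no inequality $\abs{\sin\theta}\geq c\abs{\theta}$ on $\abs{\theta}\leq\tfrac{3\pi}{2}$, and the two-regime split you invoke from Lemma~\ref{lemma42} cannot rescue the argument: the replacement of $\tfrac{1}{h^2}\sin^2(h\xi)$ by $\tfrac{4}{h}\sin^2(\tfrac{h}{2}\xi)$ near the endpoints is available only for the \emph{modified} symbols $\tilde{g}$, where the added mass term $f_h$ supplies exactly those $\sin^2(\tfrac{h}{2}\xi)$ contributions; the unmodified $g_{0,h}^{\rm s}$ has no such term. This degeneracy at $h\xi=\pm\pi$ (fermion doubling) is precisely why the unmodified symmetric model fails to converge in norm and why the Sobolev-space estimates of this section are needed; the paper even opens its proof of this lemma by recording that $\nbc{(G_{0,h}^{\rm s}(\xi)-\I\bone)^{-1}}$ admits only a constant bound on this range and that this cannot be improved.

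The repair is immediate and is what the paper does: use only the trivial estimate $\nbc{(G_{0,h}^{\rm s}(\xi)-\I\bone)^{-1}}\leq 1$, which follows from Lemma~\ref{lemmaG} and $1+g_{0,h}^{\rm s}(\xi)\geq 1$. The product of your three factors then becomes $1\cdot Ch^{2}\abs{\xi}^{3}\cdot\abs{\xi}^{-2}=Ch^{2}\abs{\xi}$, and since $h\abs{\xi}\leq\tfrac{3\pi}{2}$ this is bounded by $C'h$, which is the claimed rate. In particular your conclusion that the argument yields the stronger rate $Ch^{2}$ is incorrect; $O(h)$ is all one gets, and the loss from $h^{2}$ to $h$ is exactly the price paid for the degeneracy of $g_{0,h}^{\rm s}$ at $h\xi=\pm\pi$.
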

\begin{proof}
Note that Lemma~\ref{lemmaG} and~\eqref{1Dg0hs} imply the estimate $\nbc{(G_{0,h}^{\rm s}(\xi)-\I\bone)^{-1}}\leq (1+m^2)^{-1}$ and that this estimate cannot be improved for $h\xi\in[-\tfrac{3\pi}{2},\tfrac{3\pi}{2}]$.
We have
\begin{align*}
\bigl((G_0(\xi)-\I\bone)^{-1}-(&G_{0,h}^{\rm s}(\xi)-\I\bone)^{-1}\bigr)(G_0(\xi)-\I\bone)^{-1}
\\
&=
(G_{0,h}^{\rm s}(\xi)-\I\bone)^{-1}\bigl(
G_{0,h}^{\rm s}(\xi)-G_0(\xi)
\bigr)(G_0(\xi)-\I\bone)^{-2}.
\end{align*}
Now
\begin{equation*}
G_{0,h}^{\rm s}(\xi)-G_0(\xi)=\begin{bmatrix}
0 & \tfrac{1}{h}\sin(h\xi)-\xi\\
\tfrac{1}{h}\sin(h\xi)-\xi & 0
\end{bmatrix}.
\end{equation*}
Using Taylor's formula~\eqref{taylor-sin} together with the estimates
\begin{equation*}
\nbc{(G_0(\xi)-\I\bone)^{-1}}\leq\frac{1}{\abs{\xi}}\quad\text{and}\quad
\nbc{(G_{0,h}^{\rm s}(\xi)-\I\bone)^{-1}}\leq 1
\end{equation*}
we get
\begin{align*}
\bigl\lVert
\bigl((G_{0,h}^{\rm s}(\xi)-\I\bone)^{-1}-(G_0(\xi)-\I\bone)^{-1}\bigr)&
(G_0(\xi)-\I\bone)^{-1}\bigr\rVert_{\cB(\bC^2)}
\leq Ch^2\abs{\xi}\leq Ch
\end{align*}
for $h\xi\in[-\tfrac{3\pi}{2},\tfrac{3\pi}{2}]$.
\end{proof}

\begin{proposition}\label{prop62}
Let $K\subset(\bC\setminus\bR)\cup(-m,m)$ be compact. Then there exists $C>0$ such that
\begin{equation}\label{H0hs-est}
\norm{
J_h(H_{0,h}^{\rm s}-z\identh)^{-1}K_h-(H_0-z\ident)^{-1}
}_{\cB(\Sobone,\cH^1)}\leq C h
\end{equation}
for all $z\in K$ and $h\in(0,1]$.
\end{proposition}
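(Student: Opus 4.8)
The plan is to follow the same structure as the proof of Theorem~\ref{thm44}, but with the Sobolev norm on the domain side absorbing one extra power of $\langle\xi\rangle$. First I would reduce to the case $K=\{\I\}$ by the usual argument: $(H_0-\I\ident)(H_0-z\ident)^{-1}$ is uniformly bounded in $\cB(\cH^1)$ for $z\in K$, and on the Sobolev side one checks that $(H_0-z\ident)^{-1}(H_0-\I\ident)$ and its inverse relatives are uniformly bounded in $\cB(\Sobone)$, so it suffices to bound the $z=\I$ difference in $\cB(\Sobone,\cH^1)$. Writing $\langle D\rangle$ for the operator with symbol $(1+\xi^2)^{1/2}$, the claim \eqref{H0hs-est} is equivalent to the $\cB(\cH^1)$-bound
\begin{equation*}
\norm{\bigl(J_h(H_{0,h}^{\rm s}-\I\identh)^{-1}K_h-(H_0-\I\ident)^{-1}\bigr)\langle D\rangle^{-1}}_{\cB(\cH^1)}\leq Ch,
\end{equation*}
since $\norm{f}_{\Sobone}$ is comparable to $\norm{\langle D\rangle f}_{\cH^1}$.

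Next I would split the difference exactly as in Theorem~\ref{thm44}:
\begin{equation*}
J_h(H_{0,h}^{\rm s}-\I\identh)^{-1}K_h-(H_0-\I\ident)^{-1}=\bigl(J_h(H_{0,h}^{\rm s}-\I\identh)^{-1}K_h-J_hK_h(H_0-\I\ident)^{-1}\bigr)+(J_hK_h-\ident)(H_0-\I\ident)^{-1}.
\end{equation*}
For the second term, Lemma~\ref{lemmaA3} already gives a bound $Ch$ in $\cB(\cH^1)$, which is more than enough after composing with the bounded operator $\langle D\rangle^{-1}$. For the first term I would pass to Fourier space as in the proof of Theorem~\ref{thm44}, obtaining a sum $\sum_{j\in\bZ}q_{j,h}$ where now each $q_{j,h}$ carries an extra factor $\langle\xi+\tfrac{2\pi}{h}j\rangle^{-1}$ coming from $\langle D\rangle^{-1}$, and the bracketed factor is $(G_{0,h}^{\rm s}(\xi+\tfrac{2\pi}{h}j)-\I\bone)^{-1}-(G_0(\xi+\tfrac{2\pi}{h}j)-\I\bone)^{-1}$. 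Only $\abs{j}\leq1$ contribute by the support conditions in Assumption~\ref{assumpA2}. For $j=0$ the relevant estimate is precisely Lemma~\ref{lemma61}: writing $\zeta=\xi$ with $h\zeta\in[-\tfrac{3\pi}{2},\tfrac{3\pi}{2}]$,
\begin{equation*}
\nbc{\bigl((G_{0,h}^{\rm s}(\xi)-\I\bone)^{-1}-(G_0(\xi)-\I\bone)^{-1}\bigr)\langle\xi\rangle^{-1}}\leq\nbc{\bigl((G_{0,h}^{\rm s}(\xi)-\I\bone)^{-1}-(G_0(\xi)-\I\bone)^{-1}\bigr)(G_0(\xi)-\I\bone)^{-1}}\cdot\nbc{(G_0(\xi)-\I\bone)\langle\xi\rangle^{-1}}\leq Ch,
\end{equation*}
since $\nbc{(G_0(\xi)-\I\bone)}=(1+m^2+\xi^2)^{1/2}\leq C\langle\xi\rangle$ by Lemma~\ref{lemmaG} and \eqref{square}. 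For $\abs{j}=1$ one argues as in Theorem~\ref{thm44}: on $M=\supp(\widehat{\vp}_0)\cap\supp(\widehat{\psi}_0(\,\cdot+2\pi j))$ one has $\abs{\xi+\tfrac{2\pi}{h}j}\geq c_0/h$, so by Lemma~\ref{lemma31} (the $G_0$ bound \eqref{eq38} and the $G_{0,h}^{\rm s}$ analogue, which follows from \eqref{1Dg0hs} together with $\abs{\sin\theta}\geq c\abs{\theta}$ on $[-\tfrac34\pi,\tfrac34\pi]$ extended periodically to give $g_{0,h}^{\rm s}(\xi)\geq c/h^2$ on the relevant range) both resolvents are $O(h)$ in norm, and the extra $\langle\xi+\tfrac{2\pi}{h}j\rangle^{-1}$ factor only helps. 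Squaring, using essential boundedness of $\widehat{\vp}_0,\widehat{\psi}_0$, and integrating gives $\norm{q_{j,h}}_{\widehat{\cH}^1}\leq Ch\norm{\langle D\rangle^{-1}u}_{\widehat{\cH}^1}$ on a dense set, hence the claim.

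The main obstacle, and the reason the Sobolev space is needed here while it was not in Theorem~\ref{thm44}, is the $j=0$ term: for the symmetric model $G_{0,h}^{\rm s}(\xi)$ does \emph{not} satisfy a lower bound $g_{0,h}^{\rm s}(\xi)\geq c\abs{\xi}^2$ uniformly on $h\xi\in[-\tfrac{3\pi}{2},\tfrac{3\pi}{2}]$ — indeed $\tfrac1{h^2}\sin^2(h\xi)$ vanishes at $h\xi=\pi$ — so $(G_{0,h}^{\rm s}(\xi)-\I\bone)^{-1}$ is only $O(1)$ there, not $O(h)$, and the bare difference of resolvents fails to be $O(h)$. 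Lemma~\ref{lemma61} circumvents this by extracting one power of $(G_0(\xi)-\I\bone)^{-1}$, which decays like $1/\abs{\xi}$ and compensates the $h^2\abs{\xi}^3$ Taylor remainder in $G_{0,h}^{\rm s}(\xi)-G_0(\xi)$; absorbing that resolvent into $\langle D\rangle^{-1}$ at the operator level is exactly what moves the estimate from $\cB(\cH^1)$ to $\cB(\Sobone,\cH^1)$. Once Lemma~\ref{lemma61} is in hand, the rest is a routine transcription of the Fourier-space bookkeeping from Theorem~\ref{thm44}.
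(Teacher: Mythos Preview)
Your approach is essentially the paper's: reduce to $z=\I$, compose on the right with a Sobolev-smoothing operator (the paper uses $(H_0-\I\ident)^{-1}$, you use the equivalent $\langle D\rangle^{-1}$), and rerun the Fourier-space argument of Theorem~\ref{thm44} with Lemma~\ref{lemma61} handling the $j=0$ term. That is correct in outline and in the main estimate.

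There is, however, a slip in your treatment of the $\abs{j}=1$ terms. You claim $g_{0,h}^{\rm s}(\xi)\geq c/h^2$ on the relevant range, making $(G_{0,h}^{\rm s}(\xi+\tfrac{2\pi}{h}j)-\I\bone)^{-1}=O(h)$. This is false: by $\tfrac{2\pi}{h}$-periodicity $G_{0,h}^{\rm s}(\xi+\tfrac{2\pi}{h}j)=G_{0,h}^{\rm s}(\xi)$, and for $j=1$ the set $M$ forces $h\xi\in[-\tfrac{3\pi}{2},-\tfrac{\pi}{2}]$, which contains $h\xi=-\pi$ where $\sin(h\xi)=0$ and hence $g_{0,h}^{\rm s}(\xi)=m^2$. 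The symmetric-model resolvent is therefore only $O(1)$ there, not $O(h)$; this is precisely the defect you correctly diagnose for $j=0$ in your final paragraph, and it does not go away for $\abs{j}=1$. The fix is the very factor you dismiss as merely ``helpful'': since $\abs{\xi+\tfrac{2\pi}{h}j}\geq c_0/h$ on $M$, the Sobolev weight $\langle\xi+\tfrac{2\pi}{h}j\rangle^{-1}$ is itself $O(h)$, and together with the trivial $O(1)$ bound on the resolvent difference this yields $\norm{q_{j,h}}\leq Ch\norm{u}$. In the paper's formulation the same role is played by the extra factor $(G_0(\xi+\tfrac{2\pi}{h}j)-\I\bone)^{-1}$, which is $O(h)$ by~\eqref{eq38}. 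So the Sobolev weight is \emph{essential} for the $\abs{j}=1$ terms as well, not only for $j=0$.
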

\begin{remark}
The estimate~\eqref{H0hs-est} implies
\begin{equation*}
\slim_{h\to0}J_h(H_{0,h}^{\rm s}-z\identh)^{-1}K_h=(H_0-z\ident)^{-1}
\end{equation*}
uniformly in $z\in K$.
\end{remark}
\begin{proof}
The result follows if we prove the estimate
\begin{equation*}
\norm{\bigl(
J_h(H_{0,h}^{\rm s}-z\identh)^{-1}K_h-(H_0-z\ident)^{-1}
\bigr)(H_0-\I I)^{-1}}_{\cB(\cH^1)}\leq C h.
\end{equation*}
The proof is very similar to the proof of Theorem~\ref{thm44}. In the arguments one replaces $\cF^{\ast}u$ by $\cF^{\ast}(H_0-\I I)^{-1}u$ and uses Lemma~\ref{lemma61}. Further details are omitted.
\end{proof}

\subsection{The 2D model}
Consider first the symmetric difference model. The continuous 2D free Dirac operator is denoted by $H_0$ and its symbol by $G_0(\xi)$; see~\eqref{2DG0}. The symmetric difference model is denoted by $H_{0,h}^{\rm s}$; see~\eqref{2DH0h}. Its symbol is denoted by $G_{0,h}^{\rm s}(\xi)$; see~\eqref{2DG0h}.
\begin{lemma}\label{lemma64}
There exists $C>0$ such that
\begin{equation*}
\bigl\lVert\bigl((G_{0,h}^{\rm s}(\xi)-\I\bone)^{-1}
-(G_0(\xi)-\I\bone)^{-1}\bigr)
(G_0(\xi)-\I\bone)^{-1}\bigr\rVert_{\cB(\bC^2)}
\leq Ch
\end{equation*}
for $h\xi\in[-\tfrac{3\pi}{2},\tfrac{3\pi}{2}]^2$. 
\end{lemma}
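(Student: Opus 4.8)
The plan is to mimic the structure of the proof of Lemma~\ref{lemma61} verbatim, since the 2D symmetric symbol has exactly the same block structure as the 1D one. First I would factor the operator product as
\begin{equation*}
\bigl((G_{0,h}^{\rm s}(\xi)-\I\bone)^{-1}-(G_0(\xi)-\I\bone)^{-1}\bigr)(G_0(\xi)-\I\bone)^{-1}
=(G_{0,h}^{\rm s}(\xi)-\I\bone)^{-1}\bigl(G_{0,h}^{\rm s}(\xi)-G_0(\xi)\bigr)(G_0(\xi)-\I\bone)^{-2},
\end{equation*}
so that one of the two ``resolvent-like'' factors applied to the difference gains a full power of $\abs{\xi}^{-2}$.

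Next I would record the two pointwise norm bounds. From Lemma~\ref{lemmaG} and \eqref{g} one has $\nbc{(G_{0,h}^{\rm s}(\xi)-\I\bone)^{-1}}=(1+g_{0,h}^{\rm s}(\xi))^{-1/2}\leq(1+m^2)^{-1/2}\leq 1$ uniformly, and from Lemma~\ref{lemma42} (the estimate \eqref{Gest}) one has $\nbc{(G_0(\xi)-\I\bone)^{-1}}\leq 1/\abs{\xi}$, hence $\nbc{(G_0(\xi)-\I\bone)^{-2}}\leq 1/\abs{\xi}^2$. Then I would compute the difference of symbols: since the $11$ and $22$ entries of $G_{0,h}^{\rm s}$ and $G_0$ both equal $\pm m$, only the off-diagonal entries survive, namely $\frac1h\sin(h\xi_1)-\xi_1$ and $\pm\I(\frac1h\sin(h\xi_2)-\xi_2)$. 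Applying Taylor's formula \eqref{taylor-sin} as in Lemma~\ref{lemma43} gives $\abs{\frac1h\sin(h\xi_j)-\xi_j}\leq Ch^2\abs{\xi_j}^3\leq Ch^2\abs{\xi}^3$ for $j=1,2$, so $\nbc{G_{0,h}^{\rm s}(\xi)-G_0(\xi)}\leq Ch^2\abs{\xi}^3$.

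Combining the three estimates yields, for $h\xi\in[-\tfrac{3\pi}{2},\tfrac{3\pi}{2}]^2$,
\begin{equation*}
\bigl\lVert\bigl((G_{0,h}^{\rm s}(\xi)-\I\bone)^{-1}-(G_0(\xi)-\I\bone)^{-1}\bigr)(G_0(\xi)-\I\bone)^{-1}\bigr\rVert_{\cB(\bC^2)}
\leq 1\cdot Ch^2\abs{\xi}^3\cdot\frac{1}{\abs{\xi}^2}=Ch^2\abs{\xi}\leq Ch,
\end{equation*}
where the last inequality uses $h\abs{\xi}\leq \tfrac{3\pi}{2}\sqrt{2}$ on the given region (so $h^2\abs{\xi}\leq (\tfrac{3\pi}{2}\sqrt2)h$), absorbing the constant. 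I do not anticipate a genuine obstacle here: the only mild point is making sure the cancellation of the diagonal entries is invoked (so that the crude bound $\abs{f_h(\xi)}\leq Ch\abs{\xi}^2$ of \eqref{11-entry}, which would only give $Ch$ without the extra factor of $h$, is not needed), and that the factor $(G_0-\I\bone)^{-2}$ — rather than a single power — is what converts $h^2\abs{\xi}^3$ into $h^2\abs{\xi}$. This is exactly parallel to Lemma~\ref{lemma61}, so I would simply note the proof is identical to that one with the two-dimensional symbols and \eqref{taylor-sin} in place of the 1D computation, and omit further details.
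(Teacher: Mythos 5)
Your proposal is correct and follows essentially the same route as the paper: the paper's proof of Lemma~\ref{lemma64} likewise reduces to the argument of Lemma~\ref{lemma61}, using the uniform bound $\nbc{(G_{0,h}^{\rm s}(\xi)-\I\bone)^{-1}}\leq 1$, the vanishing of the diagonal entries of $G_{0,h}^{\rm s}(\xi)-G_0(\xi)$, and the Taylor estimate $\nbc{G_{0,h}^{\rm s}(\xi)-G_0(\xi)}\leq Ch^2\abs{\xi}^3$, combined with the factor $(G_0(\xi)-\I\bone)^{-2}$ to obtain $Ch^2\abs{\xi}\leq Ch$. The only cosmetic discrepancy is a harmless overall sign in your resolvent-difference factorization, which is irrelevant once norms are taken.
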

\begin{proof}
The proof is almost the same as the proof of Lemma~\ref{lemma61}.
It follows from Lemma~\ref{lemmaG} and~\eqref{g} that we have
$\nbc{(G_{0,h}^{\rm s}(\xi)-\I\bone)^{-1}}\leq 1$ for $h\xi\in[-\tfrac{3\pi}{2},\tfrac{3\pi}{2}]^2$. We have
\begin{multline*}
G_{0,h}^{\rm s}(\xi)-G_0(\xi)=\\
\begin{bmatrix}
0 & (\tfrac{1}{h}\sin(h\xi_1)-\xi_1)-{\I} (\tfrac{1}{h}\sin(h\xi_2)-\xi_2)\\
(\tfrac{1}{h}\sin(h\xi_1)-\xi_1)+{\I} (\tfrac{1}{h}\sin(h\xi_2)-\xi_2) &0
\end{bmatrix}.
\end{multline*}
Then~\eqref{taylor-sin} implies
$\nbc{G_{0,h}^{\rm s}(\xi)-G_0(\xi)}\leq C h^2\abs{\xi}^3$. The remaining parts of the argument in the proof of Lemma~\ref{lemma61} can then be repeated.
\end{proof}

The proof of the next result is almost identical to the proof of Proposition~\ref{prop62} and is omitted.
\begin{proposition}\label{prop65}
Let $K\subset(\bC\setminus\bR)\cup(-m,m)$ be compact. Then there exists $C>0$ such that
\begin{equation*}\label{H0hs-est-2D}
\norm{
J_h(H_{0,h}^{\rm s}-z\identh)^{-1}K_h-(H_0-z\ident)^{-1}
}_{\cB(\Sobtwo,\cH^2)}\leq C h
\end{equation*}
for all $z\in K$ and $h\in(0,1]$.
\end{proposition}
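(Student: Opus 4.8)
The plan is to follow the proof of Proposition~\ref{prop62}, working in dimension two and replacing the one-dimensional symbols and Lemma~\ref{lemma61} by their two-dimensional counterparts, in particular Lemma~\ref{lemma64}. First I would reduce to a $\cB(\cH^2)$-estimate: it suffices to prove
\begin{equation*}
\norm{\bigl(J_h(H_{0,h}^{\rm s}-z\identh)^{-1}K_h-(H_0-z\ident)^{-1}\bigr)(H_0-\I\ident)^{-1}}_{\cB(\cH^2)}\leq Ch
\end{equation*}
for all $z\in K$ and $h\in(0,1]$, because $H_0-\I\ident$ is an isomorphism of $\Sobtwo$ onto $\cH^2$. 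Indeed, by Lemma~\ref{lemmaG} and \eqref{eq:gsquared2D} one has $\nbc{G_0(\xi)-\I\bone}=(1+g_0(\xi))^{1/2}$ and $\nbc{(G_0(\xi)-\I\bone)^{-1}}=(1+g_0(\xi))^{-1/2}$, so on the Fourier side $\nbc{G_0(\xi)-\I\bone}$ is comparable to $\langle\xi\rangle$ and $\nbc{(G_0(\xi)-\I\bone)^{-1}}$ to $\langle\xi\rangle^{-1}$, which gives the norm equivalence.

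Next, exactly as in the proof of Theorem~\ref{thm44}, I would first take $z=\I$ and split
\begin{align*}
\bigl(J_h(H_{0,h}^{\rm s}-\I\identh)^{-1}K_h-(H_0-\I\ident)^{-1}\bigr)(H_0-\I\ident)^{-1}
&=\bigl(J_h(H_{0,h}^{\rm s}-\I\identh)^{-1}K_h-J_hK_h(H_0-\I\ident)^{-1}\bigr)(H_0-\I\ident)^{-1}\\
&\phantom{{}={}}+(J_hK_h-\ident)(H_0-\I\ident)^{-2}.
\end{align*}
The last term is $O(h)$ by Lemma~\ref{lemmaA3} together with boundedness of $(H_0-\I\ident)^{-1}$. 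For the first term I would conjugate by $\cF$ and $\sfF_h$ exactly as in the proof of Theorem~\ref{thm44}, but with $\cF^{\ast}u$ replaced by $\cF^{\ast}(H_0-\I\ident)^{-1}u$ for $u\in\cS(\bR^2)\otimes\bC^2$; using $\tfrac{2\pi}{h}$-periodicity of $G_{0,h}^{\rm s}$ in each variable and the scalar nature of $\widehat{\vp}_0,\widehat{\psi}_0$, this yields a finite sum $\sum_{\abs{j}\leq1}q_{j,h}$ in which the matrix factor is now
\begin{equation*}
\bigl((G_{0,h}^{\rm s}(\xi+\tfrac{2\pi}{h}j)-\I\bone)^{-1}-(G_0(\xi+\tfrac{2\pi}{h}j)-\I\bone)^{-1}\bigr)(G_0(\xi+\tfrac{2\pi}{h}j)-\I\bone)^{-1}.
\end{equation*}

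For $j=0$ this matrix factor is $O(h)$ directly by Lemma~\ref{lemma64}, since $h\xi\in[-\tfrac{3\pi}{2},\tfrac{3\pi}{2}]^2$ on $\supp(\widehat{\vp}_0)$. For $\abs{j}=1$ one has $\abs{\xi+\tfrac{2\pi}{h}j}\geq c_0/h$ on the relevant support, so $\nbc{(G_0(\xi+\tfrac{2\pi}{h}j)-\I\bone)^{-1}}\leq Ch$ by Lemma~\ref{lemma42}, while $\nbc{(G_{0,h}^{\rm s}(\xi+\tfrac{2\pi}{h}j)-\I\bone)^{-1}}\leq1$ always, so the matrix factor is again $O(h)$. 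Squaring, integrating, summing over the finitely many $j$ with $\abs{j}\leq1$, and using density of $\cS(\bR^2)\otimes\bC^2$ in $\cH^2$ gives the estimate for $z=\I$. For general $z\in K$ one proceeds as at the end of the proof of Theorem~\ref{thm44}: combining Lemma~\ref{lemma64} with the uniform bounds $\nbc{(G_0(\xi)-\I\bone)(G_0(\xi)-z\bone)^{-1}}\leq C$ and $\nbc{(G_{0,h}^{\rm s}(\xi)-\I\bone)(G_{0,h}^{\rm s}(\xi)-z\bone)^{-1}}\leq C$, valid for $z\in K$ and $h\xi\in[-\tfrac{3\pi}{2},\tfrac{3\pi}{2}]^2$, upgrades the matrix-factor estimate to the form with $z\bone$ in place of $\I\bone$ in the two resolvents, still with rate $O(h)$.

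I do not expect a genuine obstacle here: the only dimension-two input beyond what is used for the one-dimensional Proposition~\ref{prop62} is Lemma~\ref{lemma64}, which is already established, and the handling of the $\abs{j}=1$ terms is verbatim that in the proof of Theorem~\ref{thm44}. The one point worth spelling out rather than leaving implicit is the norm equivalence between $\Sobtwo$ and the domain of $H_0$ equipped with its graph norm, which justifies the reduction in the first paragraph; this is immediate from the Fourier representation of $G_0$.
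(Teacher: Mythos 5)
Your proposal is correct and follows essentially the same route as the paper: reduce to a $\cB(\cH^2)$ estimate by composing with $(H_0-\I\ident)^{-1}$, rerun the Fourier-space decomposition from the proof of Theorem~\ref{thm44} with $\cF^{\ast}u$ replaced by $\cF^{\ast}(H_0-\I\ident)^{-1}u$, and invoke Lemma~\ref{lemma64} for the $j=0$ term (the paper states exactly this, via the proof of Proposition~\ref{prop62}, and omits the details). Your explicit treatment of the $\abs{j}=1$ terms — using only $\nbc{(G_{0,h}^{\rm s}-\I\bone)^{-1}}\leq 1$ together with the $O(h)$ decay of the extra factor $(G_0(\xi+\tfrac{2\pi}{h}j)-\I\bone)^{-1}$ — correctly identifies why the Sobolev weight rescues precisely the terms that obstruct convergence in $\cB(\cH^2)$.
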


Next we consider the forward-backward difference model. The arguments are almost identical to those for the symmetric model. We state the result without proof. The discretization $H_{0,h}^{\rm fb}$ is defined in~\eqref{2DH0fb}.
\begin{proposition}\label{prop66}
Let $K\subset(\bC\setminus\bR)\cup(-m,m)$ be compact. Then there exists $C>0$ such that
\begin{equation*}\label{H0hfb-est-2D}
\norm{
J_h(H_{0,h}^{\rm fb}-z\identh)^{-1}K_h-(H_0-z\ident)^{-1}
}_{\cB(\Sobtwo,\cH^2)}\leq C h
\end{equation*}
for all $z\in K$ and $h\in(0,1]$.
\end{proposition}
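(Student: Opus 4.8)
The plan is to reproduce the proof of Proposition~\ref{prop65} — which is the argument given for Proposition~\ref{prop62} — with the symmetric symbol replaced throughout by the forward-backward symbol $G_{0,h}^{\rm fb}(\xi)$. The only new ingredient required is the forward-backward counterpart of Lemma~\ref{lemma64}: there exists $C>0$ such that
\begin{equation}\label{fb-sob-key}
\bigl\lVert\bigl((G_{0,h}^{\rm fb}(\xi)-\I\bone)^{-1}-(G_0(\xi)-\I\bone)^{-1}\bigr)(G_0(\xi)-\I\bone)^{-1}\bigr\rVert_{\cB(\bC^2)}\leq Ch,\qquad h\xi\in[-\tfrac{3\pi}{2},\tfrac{3\pi}{2}]^2.
\end{equation}
To prove~\eqref{fb-sob-key} I would write the matrix on the left as
\begin{equation*}
(G_{0,h}^{\rm fb}(\xi)-\I\bone)^{-1}\bigl(G_0(\xi)-G_{0,h}^{\rm fb}(\xi)\bigr)(G_0(\xi)-\I\bone)^{-2}
\end{equation*}
and bound the three factors in turn. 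Since $(D^+_{h;j})^{\ast}=D^-_{h;j}$, the matrix $G_{0,h}^{\rm fb}(\xi)$ is self-adjoint, and together with $G_{0,h}^{\rm fb}(\xi)^2=\gfb(\xi)\bone$ this forces $\gfb(\xi)\geq0$; hence Lemma~\ref{lemmaG} gives the uniform bound $\nbc{(G_{0,h}^{\rm fb}(\xi)-\I\bone)^{-1}}=(1+\gfb(\xi))^{-1/2}\leq1$. The $11$ and $22$ entries of $G_0(\xi)-G_{0,h}^{\rm fb}(\xi)$ vanish, and Taylor's formula~\eqref{taylor-exp} applied to each $\e^{\pm\I h\xi_j}$ bounds its $12$ and $21$ entries by $Ch\abs{\xi}^2$, exactly as in the proof of Lemma~\ref{lemma410}. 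Finally, Lemma~\ref{lemmaG} and~\eqref{g0} give $\nbc{(G_0(\xi)-\I\bone)^{-2}}\leq\nbc{(G_0(\xi)-\I\bone)^{-1}}^2=(1+m^2+\abs{\xi}^2)^{-1}$. Multiplying, the product is at most $Ch\abs{\xi}^2(1+\abs{\xi}^2)^{-1}\leq Ch$, which proves~\eqref{fb-sob-key} (in fact for every $\xi\in\bR^2$).

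With~\eqref{fb-sob-key} available, the second step is to note that the asserted bound in $\cB(\Sobtwo,\cH^2)$ is equivalent to
\begin{equation*}
\bigl\lVert\bigl(J_h(H_{0,h}^{\rm fb}-z\identh)^{-1}K_h-(H_0-z\ident)^{-1}\bigr)(H_0-\I\ident)^{-1}\bigr\rVert_{\cB(\cH^2)}\leq Ch,
\end{equation*}
because $(H_0-\I\ident)^{-1}$ is an isomorphism of $\cH^2$ onto $\Sobtwo$ (its symbol $(G_0(\xi)-\I\bone)^{-1}$ has norm $(1+m^2+\abs{\xi}^2)^{-1/2}$, comparable to $\langle\xi\rangle^{-1}$). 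One then reruns the argument in the proof of Theorem~\ref{thm44}: split the difference as $[J_h(H_{0,h}^{\rm fb}-z\identh)^{-1}K_h-J_hK_h(H_0-z\ident)^{-1}]+(J_hK_h-\ident)(H_0-z\ident)^{-1}$; the last term composed with $(H_0-\I\ident)^{-1}$ is $O(h)$ by Lemma~\ref{lemmaA3}, and for the first term, passing to Fourier space and inserting $(H_0-\I\ident)^{-1}$ replaces $u(\xi+\tfrac{2\pi}{h}j)$ in the expression for $q_{j,h}$ by $(G_0(\xi+\tfrac{2\pi}{h}j)-\I\bone)^{-1}u(\xi+\tfrac{2\pi}{h}j)$, so that the matrix acting on $u$ becomes
\begin{equation*}
\bigl((G_{0,h}^{\rm fb}(\xi+\tfrac{2\pi}{h}j)-\I\bone)^{-1}-(G_0(\xi+\tfrac{2\pi}{h}j)-\I\bone)^{-1}\bigr)(G_0(\xi+\tfrac{2\pi}{h}j)-\I\bone)^{-1}.
\end{equation*}
For $j=0$ this is $O(h)$ by~\eqref{fb-sob-key}; for $\abs{j}=1$ one has $\abs{\xi+\tfrac{2\pi}{h}j}\geq\tfrac{\pi}{2h}$ on the relevant support, so Lemma~\ref{lemma42} gives $\nbc{(G_0(\xi+\tfrac{2\pi}{h}j)-\I\bone)^{-1}}\leq Ch$ while the other factor stays bounded by the uniform estimate $\nbc{(G_{0,h}^{\rm fb}(\xi)-\I\bone)^{-1}}\leq1$. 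Summing over the finitely many $j$ with $\abs{j}\leq1$, using the essential boundedness of $\widehat{\vp}_0,\widehat{\psi}_0$ and a density argument, settles $z=\I$; general $z\in K$ follows from the uniform boundedness of $(G_0(\xi)-z\bone)(G_0(\xi)-\I\bone)^{-1}$ and $(G_{0,h}^{\rm fb}(\xi)-z\bone)(G_{0,h}^{\rm fb}(\xi)-\I\bone)^{-1}$, as at the end of the proof of Theorem~\ref{thm44}.

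There is no essential obstacle; the point worth stressing is structural. For the modified model one has the stronger bound $\nbc{(\tGfb(\xi)-\I\bone)^{-1}}\leq C/\abs{\xi}$ from Lemma~\ref{lemma49}, which supplies decay on \emph{both} sides of the difference, whereas for the unmodified model only the crude bound $\nbc{(G_{0,h}^{\rm fb}(\xi)-\I\bone)^{-1}}\leq1$ is available. Consequently all the decay needed to control the $j=0$ consistency error and the $\abs{j}=1$ aliasing terms must be supplied by the continuous resolvent $(G_0(\xi)-\I\bone)^{-1}$, which is precisely why the estimate closes only after composing with $(H_0-\I\ident)^{-1}$, i.e.\ in $\cB(\Sobtwo,\cH^2)$ and not in $\cB(\cH^2)$; this is consistent with the non-convergence of $J_h(H_{0,h}^{\rm fb}-z\identh)^{-1}K_h$ in the $\cB(\cH^2)$-norm and with the necessity of the $-h\Delta_h\sigma_3$ modification.
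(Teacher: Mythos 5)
Your proposal is correct and follows exactly the route the paper intends: the paper states Proposition~\ref{prop66} without proof, deferring to the arguments for Propositions~\ref{prop62} and~\ref{prop65}, and your forward-backward analogue of Lemma~\ref{lemma64} (using the uniform bound $\nbc{(G_{0,h}^{\rm fb}(\xi)-\I\bone)^{-1}}=(1+\gfb(\xi))^{-1/2}\leq 1$ together with the Taylor estimate $Ch\abs{\xi}^2$ on the off-diagonal entries and the decay of $(G_0(\xi)-\I\bone)^{-2}$) is precisely the missing ingredient, after which the rerun of the Theorem~\ref{thm44} machinery composed with $(H_0-\I\ident)^{-1}$ goes through as you describe. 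Your closing structural remark about why the extra factor $(H_0-\I\ident)^{-1}$ must supply all the decay is also consistent with the paper's non-convergence result for the unmodified model.
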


\subsection{The 3D model} 
In this section $H_0$ denotes the free 3D Dirac operator. The symmetric difference model $H_{0,h}^{\rm s}$ is given by~\eqref{3DH0hs} and the forward-backward difference model $H_{0,h}^{\rm fb}$ is given by~\eqref{3DH0hfb}.

We state the following results without proofs, since they are very similar to the proofs of Propositions~\ref{prop62}, \ref{prop65}, and \ref{prop66}.
\begin{proposition}
Let $K\subset(\bC\setminus\bR)\cup(-m,m)$ be compact. Then there exists $C>0$ such that
\begin{equation*}
\norm{
J_h(H_{0,h}^{\rm s}-z\identh)^{-1}K_h-(H_0-z\ident)^{-1}
}_{\cB(\Sobthree,\cH^3)}\leq C h
\end{equation*}
and
\begin{equation*}
	\norm{
		J_h(H_{0,h}^{\rm fb}-z\identh)^{-1}K_h-(H_0-z\ident)^{-1}
	}_{\cB(\Sobthree,\cH^3)}\leq C h
\end{equation*}
for all $z\in K$ and $h\in(0,1]$.
\end{proposition}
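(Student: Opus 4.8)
The plan is to follow the same scheme used to prove Propositions~\ref{prop62}, \ref{prop65}, and \ref{prop66} in the lower-dimensional cases, replacing scalar estimates on $2\times2$ symbols by the corresponding estimates on $4\times4$ symbols. The key reduction is that an estimate of the form~\eqref{H0hs-est} in the operator norm $\cB(\Sobthree,\cH^3)$ is equivalent to the estimate
\begin{equation*}
\norm{\bigl(J_h(H_{0,h}-z\identh)^{-1}K_h-(H_0-z\ident)^{-1}\bigr)(H_0-\I\ident)^{-1}}_{\cB(\cH^3)}\leq Ch,
\end{equation*}
since $(H_0-\I\ident)(H_0-z\ident)^{-1}$ with $\langle\xi\rangle^{-1}$-type behaviour is the Fourier multiplier implementing the embedding $\Sobthree\hookrightarrow\cH^3$ up to bounded factors; more precisely one uses that multiplication by $(1+\abs{\xi})(g_0(\xi)+1)^{-1/2}$ is bounded. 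So the first step is to record this reduction, exactly as in the proof of Proposition~\ref{prop62}.

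Next I would establish the pointwise symbol estimates that are the $d=3$ analogues of Lemmas~\ref{lemma61} and~\ref{lemma64}: there exists $C>0$ such that for $h\xi\in[-\tfrac{3\pi}{2},\tfrac{3\pi}{2}]^3$,
\begin{equation*}
\nbcc{\bigl((G_{0,h}^{\rm s}(\xi)-\I\sfone)^{-1}-(G_0(\xi)-\I\sfone)^{-1}\bigr)(G_0(\xi)-\I\sfone)^{-1}}\leq Ch,
\end{equation*}
and the same with $G_{0,h}^{\rm s}$ replaced by $G_{0,h}^{\rm fb}$. For the symmetric case this is immediate: from Lemma~\ref{lemmaG} and the formula for $\ghs(\xi)$ one has $\nbcc{(G_{0,h}^{\rm s}(\xi)-\I\sfone)^{-1}}\leq1$ on the relevant box, while the difference $G_{0,h}^{\rm s}(\xi)-G_0(\xi)$ has entries $\tfrac1h\sin(h\xi_j)-\xi_j$ which are $O(h^2\abs{\xi_j}^3)$ by the Taylor expansion~\eqref{taylor-sin}; combining with $\nbcc{(G_0(\xi)-\I\sfone)^{-1}}\leq\abs{\xi}^{-1}$ from Lemma~\ref{lemma51} (or rather its $d=3$ version) and one more factor of $\abs{\xi}^{-1}$ gives $Ch^2\abs{\xi}\leq Ch$. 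For the forward-backward case the entries of $G_{0,h}^{\rm fb}(\xi)-G_0(\xi)$ are only $O(h\abs{\xi_j}^2)$ (by~\eqref{taylor-exp}), but then the key point is that $\nbcc{(G_{0,h}^{\rm fb}(\xi)-\I\sfone)^{-1}}\leq C\abs{\xi}^{-1}$ — which does \emph{not} follow from a simple $\ghfb\geq m^2$ bound since $\Ghfb(\xi)^2$ is not scalar, but does follow from the same eigenvalue analysis as in Lemma~\ref{lemma51}, namely $\lambda_{\min}(\xi)=1+m^2+\abs{\sfS^-_h(\xi)}^2-\abs{\sfS^-_h(\xi)\times\overline{\sfS^-_h(\xi)}}\geq c\abs{\xi}^2$ on a small box $h\abs{\xi}\leq\delta$, together with the crude $\ghfb$-type lower bound elsewhere. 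So $O(h\abs{\xi}^2)\cdot\abs{\xi}^{-1}\cdot\abs{\xi}^{-1}=O(h)$, and the desired estimate holds in both cases.

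The final step is to feed these symbol estimates into the Fourier-space computation. One writes the difference as a telescoping sum over $j\in\bZ^3$ with $\abs{j}\leq1$, exactly as in the proof of Theorem~\ref{thm44}, but now applied to $(H_0-z\ident)^{-1}$ followed by the extra resolvent factor $(H_0-\I\ident)^{-1}$: for the $j=0$ term one uses the new symbol estimate above on the box $[-\tfrac{3\pi}{2},\tfrac{3\pi}{2}]^3$, and for the $\abs{j}=1$ terms one uses that on the overlap of the supports of $\widehat\vp_0$ and $\widehat\psi_0(\cdot+2\pi j)$ one has $\abs{\xi+\tfrac{2\pi}{h}j}\geq c_0/h$, so Lemma~\ref{lemma51} gives an $O(h)$ bound on each relevant resolvent symbol. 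I expect the main (though still modest) obstacle to be precisely the forward-backward $4\times4$ resolvent bound $\nbcc{(G_{0,h}^{\rm fb}(\xi)-\I\sfone)^{-1}}\leq C\abs{\xi}^{-1}$: unlike in dimensions $1,2$ the square of the symbol is a genuine block matrix with an off-diagonal $\I(\sfS^-_h\times\overline{\sfS^-_h})\cdot\sigma$ contribution, so one must argue via the smallest eigenvalue formula~\eqref{eq:lamtildemin} (without the $\sff_h$ shift) rather than via a simple scalar expression. Everything else is a verbatim repetition of the arguments already given for $d=2$, which is why the statement is given without proof in the paper.
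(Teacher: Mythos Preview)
Your overall strategy is correct and matches the paper's implied approach. However, your treatment of the forward-backward case contains an incorrect claim. The bound $\nbcc{(\Ghfb(\xi)-\I\sfone)^{-1}}\leq C\abs{\xi}^{-1}$ is \emph{false} on the full box $h\xi\in[-\tfrac{3\pi}{2},\tfrac{3\pi}{2}]^3$: at $\xi_h=(\tfrac{\pi}{2h},-\tfrac{\pi}{2h},0)$ the eigenvalue formula you quote gives $\lambda_{\min}(\xi_h)=1+m^2$ (this is precisely the computation in Lemma~\ref{lemma56}), so the resolvent norm equals $(1+m^2)^{-1/2}$ while $\abs{\xi_h}\sim h^{-1}$. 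Your ``crude lower bound elsewhere'' can only yield $\lambda_{\min}\geq 1+m^2$, not $c\abs{\xi}^2$; indeed, the failure of this decay is exactly what drives the non-convergence in Theorem~\ref{thm12}(i).

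Fortunately this decay is not needed. In the identity
\[
\bigl((\Ghfb-\I\sfone)^{-1}-(G_0-\I\sfone)^{-1}\bigr)(G_0-\I\sfone)^{-1}
=(\Ghfb-\I\sfone)^{-1}\bigl(G_0-\Ghfb\bigr)(G_0-\I\sfone)^{-2}
\]
you already have \emph{two} factors of $\abs{\xi}^{-1}$ from $(G_0-\I\sfone)^{-2}$, and $\nbcc{G_0(\xi)-\Ghfb(\xi)}\leq Ch\abs{\xi}^2$. The uniform bound $\nbcc{(\Ghfb(\xi)-\I\sfone)^{-1}}\leq 1$, which follows at once from $\lambda_{\min}(\xi)\geq 1+m^2$ via $\abs{\sfS^-_h\times\overline{\sfS^-_h}}\leq\abs{\sfS^-_h}^2$, then gives $1\cdot Ch\abs{\xi}^2\cdot\abs{\xi}^{-2}=Ch$. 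This is exactly how the $d=2$ forward-backward case (Proposition~\ref{prop66}) works as well, where $\gfb(\tfrac{\pi}{2h},-\tfrac{\pi}{2h})=m^2$ already shows the analogous decay fails. The ``main obstacle'' you anticipated therefore does not arise; the $4\times4$ block structure enters only through the trivial observation $\lambda_{\min}\geq 1$.
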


\section{Perturbed Dirac operators}\label{sectV}
In this section we state results on perturbed Dirac operators and their discretizations, with respect to norm resolvent convergence.
We use the following condition on the perturbation. 

\begin{assumption}\label{assump71}
Assume that $V\colon\bR^d\to\cB(\bC^{\nu(d)})$ is bounded and Hölder continuous with exponent $\theta\in(0,1]$. Assume $(V(x))^{\ast}=V(x)$, $x\in\bR^d$.
\end{assumption}
We require another assumption on $\psi_0$ in addition to Assumption~\ref{assumpA2}. We emphasize that concrete examples of $\psi_0$ satisfying these assumptions are given in~\cite[subsection~2.1]{CGJ}.
\begin{assumption}\label{assump72}
Assume there exists $\tau>d$ such that
\begin{equation*}
\abs{\psi_0(x)}\leq (1+\abs{x})^{-\tau},\quad x \in\bR^d.
\end{equation*}
\end{assumption}
Define a discretization of $V$ by 
\begin{equation}\label{Vh}
V_h(k)=V(hk),\quad k\in\bZ^d.
\end{equation}
Let $H_{0,h}$ be one of the discretizations of $H_0$ from sections~\ref{sect1D}--\ref{sect3D}. Then we define self-adjoint operators $H=H_0+V$ and $H_h=H_{0,h}+V_h$.

The following result is an adaptation of~\cite[Proposition~4.3]{CGJ}
to the present framework.
\begin{lemma} \label{lemma73}
Let $V$ satisfy Assumption~\ref{assump71} and let $\psi_0$ satisfy Assumption~\ref{assump72}. Define
\begin{equation}\label{theta'}
\frac{1}{\theta'}=\frac{1}{\theta}+\frac{1}{\tau-d}.
\end{equation}
Then
\begin{equation*}
\norm{V_hK_h-K_hV}_{\cB(\cH^d,\cH^d_h)}\leq C h^{\theta'}.
\end{equation*}
\end{lemma}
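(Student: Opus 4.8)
The plan is to estimate the operator $V_hK_h - K_hV \colon \cH^d \to \cH^d_h$ by writing out both terms explicitly using the definition of $K_h$ and exploiting the localization of $\psi_0$ given by Assumption~\ref{assump72}. First I would fix $f\in\cH^d$ and compute, for $k\in\bZ^d$,
\begin{equation*}
\bigl((V_hK_h - K_hV)f\bigr)(k) = \frac{1}{h^d}\int_{\bR^d}\overline{\psi_{h,k}(x)}\,\bigl(V(hk) - V(x)\bigr)f(x)\di x,
\end{equation*}
where the matrix $V(hk)-V(x)$ acts on the vector $f(x)\in\bC^{\nu(d)}$. The Hölder continuity of $V$ gives the pointwise bound $\nbc{V(hk)-V(x)}\leq C\abs{hk-x}^{\theta}$, while the decay estimate on $\psi_0$ gives $\abs{\psi_{h,k}(x)} = \abs{\psi_0((x-hk)/h)}\leq (1 + \abs{x-hk}/h)^{-\tau}$.

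Next I would split the integral into the region $\abs{x-hk}\leq R$ and $\abs{x-hk}>R$ for a parameter $R$ to be optimized (this is the standard trick balancing the polynomial growth $\abs{hk-x}^\theta$ from the Hölder bound against the polynomial decay $\abs{x-hk}^{-\tau}$ from $\psi_0$, as in~\cite[Proposition~4.3]{CGJ}). On the near region the Hölder factor is at most $C R^{\theta}$; on the far region one uses that $\tau > d$ so the tail of the weight is integrable and contributes a factor decaying in $R/h$. After a Schur-test / Cauchy--Schwarz argument in $k$ and $x$ — using that $\sum_k$ of a shifted integrable weight is uniformly bounded and likewise the $x$-integral — one obtains a bound of the form $C(R^{\theta} + (\text{something}) (h/R)^{\tau-d})$ times $\norm{f}_{\cH^d}$ for the $\cH^d_h$-norm of the output. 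Choosing $R = h^{\tau-d/(\theta + \tau - d)}\cdot(\text{const})$, equivalently optimizing so that the two terms balance, produces the exponent $\theta'$ defined by $\frac{1}{\theta'} = \frac{1}{\theta} + \frac{1}{\tau - d}$.

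The main obstacle I anticipate is carrying out the two-variable estimate cleanly: one must control $\norm{(V_hK_h-K_hV)f}_{\cH^d_h}^2 = h^d\sum_k \abs{\cdots}^2$ by an $L^2$-bound in $f$, which requires a Schur-type argument with the kernel $\frac{1}{h^d}\overline{\psi_{h,k}(x)}(V(hk)-V(x))$. The key points are that $\int_{\bR^d}\abs{\psi_{h,k}(x)}\,\abs{hk-x}^{\theta}\min\{1,\dots\}\di x$ and the corresponding sum over $k$ are both finite with the right $h$-scaling, and that the truncation parameter $R$ must be chosen uniformly in $h\in(0,1]$. Since this is essentially a verbatim adaptation of the scalar argument in~\cite[Proposition~4.3]{CGJ} — the only change being that $V(x)$ is now matrix-valued, which affects nothing since $\abs{\,\cdot\,}$ and $\nbc{\,\cdot\,}$ behave the same way — I would state that the proof follows~\cite[Proposition~4.3]{CGJ} mutatis mutandis and give only the truncation/optimization step in detail.
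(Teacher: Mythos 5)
Your proposal is correct and follows essentially the same route as the paper, which likewise reduces the lemma to a direct adaptation of \cite[Proposition~4.3]{CGJ}: the kernel identity for $V_hK_h-K_hV$, the Hölder/boundedness bound on $V(hk)-V(x)$ combined with the decay of $\psi_{h,k}$, and a Schur-test with the near/far splitting whose balance point $R=h^{(\tau-d)/(\theta+\tau-d)}$ yields exactly the exponent $\theta'$. The only point the paper makes explicitly that you use tacitly is that $\psi_0$ is scalar-valued, which is what lets $\overline{\psi_{h,k}}$ commute with the matrix $V(x)$ so that the kernel of $K_hV$ has the stated form.
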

\begin{proof}
The proof in~\cite{CGJ} can be directly adapted to the current framework. We omit the details. Note that $\psi_0(x)$ is a scalar, such that we have $\psi_0(x)V(x)f(x)=V(x)\psi_0(x)f(x)$, $f\in\cH^d$.
\end{proof}

We can then state our main result on the perturbed Dirac operators, which follows from Lemma~\ref{lemma73} and a direct adaptation of the proof of \cite[Theorem~4.4]{CGJ}.

\begin{theorem}
Let $J_h$ and $K_h$ be the operators defined in section~\ref{sectPrelim}, and let $\psi_0$ satisfy Assumption~\ref{assump72}.
Let $V$ satisfy Assumption~\ref{assump71} and define $V_h$ by~\eqref{Vh}. Let $H_{0,h}$ equal either of
\begin{equation*}
\begin{cases}
H^{\rm fb}_{0,h}, & d=1,\\
\widetilde{H}^{\rm fb}_{0,h},& d=2,3,\\
\widetilde{H}^{\rm s}_{0,h},& d=1,2,3.
\end{cases}
\end{equation*}
Let $H_h=H_{0,h}+V_h$. Let $H=H_0+V$, where $H_0$ is the free Dirac operator in the relevant dimension.
Assume $V\not\equiv0$ and let $\theta'$ be given by~\eqref{theta'}. Then the following result holds.

Let $K\subset \bC\setminus\bR$ be compact. Then there exist $C>0$ and $h_0>0$ such that
\begin{equation*}
\norm{
J_h(H_{h}-z\identh)^{-1}K_h-(H-z\ident)^{-1}
}_{\cB(\cH^d)} \leq C h^{\theta'}
\end{equation*}
for all $z\in K$ and $h\in(0,h_0]$.
\end{theorem}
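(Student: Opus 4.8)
The plan is to deduce the perturbed result from the free-operator estimates in Theorem~\ref{thm11} together with the commutator bound in Lemma~\ref{lemma73}, exactly as in the proof of \cite[Theorem~4.4]{CGJ}. Write $R_0(z)=(H_0-z\ident)^{-1}$, $R_{0,h}(z)=(H_{0,h}-z\identh)^{-1}$, $R(z)=(H-z\ident)^{-1}$, and $R_h(z)=(H_h-z\identh)^{-1}$. The starting point is the second resolvent identity in each space, $R(z)=R_0(z)-R_0(z)VR(z)$ and $R_h(z)=R_{0,h}(z)-R_{0,h}(z)V_hR_h(z)$, which let one express the difference $J_hR_h(z)K_h-R(z)$ in terms of $J_hR_{0,h}(z)K_h-R_0(z)$ (controlled by Theorem~\ref{thm11}), the commutator $V_hK_h-K_hV$ (controlled by Lemma~\ref{lemma73}), and the uniformly bounded operators $J_h$, $K_h$, $V$, $V_h$, $R(z)$, $R_h(z)$.

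The key steps, in order, are as follows. First I would establish that $H_h$ is self-adjoint and that for $z$ in a fixed compact $K\subset\bC\setminus\bR$ the resolvents $R_h(z)$ are bounded uniformly in $h$ and $z$ by $1/\dist(K,\bR)$; this is immediate since $V_h$ is bounded self-adjoint with $\norm{V_h}\le\norm{V}_\infty$. Second, I would insert the factorization: write
\begin{align*}
J_hR_h(z)K_h-R(z)
&=\bigl(J_hR_{0,h}(z)K_h-R_0(z)\bigr)\\
&\phantom{{}={}}-J_hR_{0,h}(z)V_hR_h(z)K_h+R_0(z)VR(z),
\end{align*}
and then repeatedly commute $K_h$ past $V_h$ and insert $J_hK_h$ (a projection, with $K_hJ_h=\identh$) to turn the middle term into $J_hR_{0,h}(z)K_h\,V\,J_hR_h(z)K_h$ plus an error involving $V_hK_h-K_hV$ and the factor $R_{0,h}(z)$, whose norm is bounded uniformly on $K$. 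Third, after these manipulations one obtains a relation of the schematic form $X_h = E_h - \bigl(J_hR_{0,h}(z)K_h\bigr)V X_h + (\text{terms of size }Ch^{\theta'})$, where $X_h=J_hR_h(z)K_h-R(z)$ and $E_h=J_hR_{0,h}(z)K_h-R_0(z)$ satisfies $\norm{E_h}\le Ch$ by Theorem~\ref{thm11}. Fourth, for $h$ small enough the operator $\ident+\bigl(J_hR_{0,h}(z)K_h\bigr)V$ — or more precisely $\ident + R_0(z)V$ shifted by an $O(h)$ correction — is boundedly invertible uniformly on $K$ (since $\ident+R_0(z)V=R_0(z)(H-z)$ has bounded inverse $(H-z)^{-1}(H_0-z)=R(z)(H_0-z)$, which is bounded on $K$ because $K\subset\bC\setminus\bR\subseteq\rho(H)$), so one can solve for $X_h$ and conclude $\norm{X_h}\le Ch^{\theta'}$.

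I expect the main obstacle to be purely bookkeeping rather than conceptual: carefully tracking the insertions of $J_hK_h$ and the commutations of $V_h$ with $K_h$ so that every error term is either $O(h)$ (from Theorem~\ref{thm11}) or $O(h^{\theta'})$ (from Lemma~\ref{lemma73}), while making sure the "resolvent-like" prefactors that multiply these error terms are bounded uniformly in both $h$ and $z\in K$. One subtlety worth isolating is that the effective Neumann-type inversion must be performed with the $h$-dependent operator $J_hR_{0,h}(z)K_h$ in place of $R_0(z)$; since these differ by $O(h)$ in norm uniformly on $K$, a perturbation argument shows invertibility with uniformly bounded inverse for $h\le h_0$, which also explains why the conclusion holds only for $h\in(0,h_0]$. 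Since all of this is a verbatim adaptation of the argument in \cite{CGJ}, with the only new inputs being the Dirac-specific Theorem~\ref{thm11} and the scalar nature of $\psi_0$ used in Lemma~\ref{lemma73}, the details can be omitted.
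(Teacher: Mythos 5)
Your proposal is correct and follows essentially the same route as the paper, which itself gives no details beyond invoking Lemma~\ref{lemma73} and a direct adaptation of the proof of \cite[Theorem~4.4]{CGJ}: second resolvent identities in both spaces, insertion of $K_hJ_h=\identh$ to reduce to the free estimate of Theorem~\ref{thm11} and the commutator bound $\norm{V_hK_h-K_hV}\leq Ch^{\theta'}$, and a uniform inversion of $\ident+R_0(z)V$ perturbed by an $O(h)$ term, which is exactly where the restriction to $h\in(0,h_0]$ enters.
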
 

\paragraph*{Acknowledgments.} This research is partially supported by grant 8021--00084B from Independent Research Fund Denmark \textbar\ Natural Sciences.

\end{document}